\def\version{February 23, 2017}
\title{
Four-dimensional weakly self-avoiding walk
\\
with
contact self-attraction}
\author{
  Roland Bauerschmidt\thanks{
    Statistical Laboratory,
    DPMMS,
    University of Cambridge,
    Wilberforce Road, Cambridge CB3 0WB, UK.
    {\tt rb812@cam.ac.uk}},
  Gordon Slade\thanks{
    Department of Mathematics,
    University of British Columbia,
    Vancouver, BC, Canada V6T 1Z2.
    {\tt slade@math.ubc.ca},
    {\tt bwallace@math.ubc.ca}},
  and
  Benjamin C. Wallace$^\dagger$}
\date{\version}
\def\macrosPb{}
\def\macrosHarxiv{}
  \DeclareMathAlphabet{\mathcal}{OMS}{cmsy}{m}{n}
\def\UseSection{
        \numberwithin{equation}{section}
	\theoremstyle{plain}
        \newtheorem{theorem}    {Theorem}[section]
        \DefineTheorems 
}
\def\DefineTheorems{
	
	\newtheorem{lemma}      [theorem] {Lemma}
	
	\newtheorem{prop}       [theorem] {Proposition}
	
	\newtheorem{cor}        [theorem] {Corollary}

	\theoremstyle{definition}
	\newtheorem{defn}       [theorem] {Definition}

	\newtheorem{rk} 	[theorem] {Remark}
	\theoremstyle{definition}

}
\newcommand{\bt}   {\begin{theorem}}
\newcommand{\et}   {\end  {theorem}}
\newcommand{\bl}   {\begin{lemma}}
\newcommand{\el}   {\end  {lemma}}
\newcommand{\bp}   {\begin{prop}}
\newcommand{\ep}   {\end  {prop}}
\newcommand{\bc}   {\begin{cor}}
\newcommand{\ec}   {\end  {cor}}
\newcommand{\bd}   {\begin{defn}}
\newcommand{\ed}   {\end  {defn}}
\newcommand{\ba}   {\begin{array}}
\newcommand{\ea}   {\end  {array}}
\newcommand{\be}   {\begin{enumerate}}
\newcommand{\ee}   {\end  {enumerate}}
\newcommand{\bi}   {\begin{itemize}}
\newcommand{\ei}   {\end  {itemize}}
\def\eq#1\en{\begin{equation}#1\end{equation}}  
\def\eqsplit#1\ensplit{
	\begin{equation}\begin{split}#1\end{split}\end{equation}
	}
\def\eqalign#1\enalign{
	\begin{align}#1\end{align}
	}
\def\eqmul#1\enmul{
	\begin{multline}#1\end{multline}
	}
\newcommand{\eqarrstar} {\begin{eqnarray*}} 
\newcommand{\enarrstar} {\end{eqnarray*}} 
\newcommand{\eqarray}   {\begin{eqnarray}} 
\newcommand{\enarray}   {\end{eqnarray}}
\newcommand{\lbeq}[1]  {\label{e:#1}}
\newcommand{\refeq}[1] {\eqref{e:#1}}    
\newcommand{\labelcounter}[2]{{%
	\stepcounter{#1}
	\protected@write\@auxout{}%
	{\string\newlabel{#2}{{\csname the#1\endcsname}{\thepage}}}%
	{\ref{#2}}
	}}
\newcommand{\Nbold} {{\mathbb N}}
\newcommand{\Rbold} {{\mathbb R}}
\newcommand{\Zbold} {{\mathbb Z}}
\newcommand{\Bcal}   {\mathcal{B}} 
\newcommand{\Ccal}   {\mathcal{C}} 
\newcommand{\Dcal}   {\mathcal{D}} 
\newcommand{\Fcal}   {\mathcal{F}} 
\newcommand{\Gcal}   {\mathcal{G}}
\newcommand{\Kcal}   {\mathcal{K}}
\newcommand{\Ncal}   {\mathcal{N}} 
\newcommand{\Pcal}   {\mathcal{P}}
\newcommand{\Ucal}   {\mathcal{U}} 
\newcommand{\Wcal}   {\mathcal{W}}
\newcommand{\ghat}  {{ \hat{g}  }}
\newcommand{\Rd}    {{ {\Rbold}^d}}
\newcommand{\Zd}    {{ {\Zbold}^d }}
\newcommand{\spose}[1] {{\hbox to 0pt{#1\hss}} }
\newcommand{\ltapprox} {\mathrel{\spose{\lower 3pt\hbox{$\mathchar"218$}}
 \raise 2.0pt\hbox{$\mathchar"13C$}}}
\newcommand{\gtapprox} {\mathrel{\spose{\lower 3pt\hbox{$\mathchar"218$}}
 \raise 2.0pt\hbox{$\mathchar"13E$}}}
\definecolor{bw}{RGB}{240, 120, 0}
\definecolor{at}{rgb}{0.0, 0.5, 0.0} 
\newcommand{\shift}{\!\!\!\!}
\newcommand{\DV}{\Dcal}
\renewcommand{\to} {\rightarrow}
\newcommand{\R}{\Rbold}
\newcommand{\Z}{\Zbold}
\newcommand{\N}{\Nbold}
\newcommand{\C}{\mathbb{C}}
\newcommand{\1}{\mathbbm{1}}
\newcommand{\psib}{\bar\psi}
\newcommand{\Ex}{\mathbb{E}}
\newcommand{\chicCov}{{\chi}}
\newcommand{\lt}{\ell}
\newcommand{\bubble}{{\sf B}}
\newcommand{\diam}[1]{\textrm{diam}(#1)}
\newcommand{\pt}{{\rm pt}}
\newcommand{\Vpt}{V_{\rm pt}}
\newcommand{\xch}{\check{x}}
\newcommand{\Vch}{\check{V}}
\newcommand{\h}{\mathfrak{h}}
\renewcommand{\ghat}{\hat{g}}
\newcommand{\ggen}{\tilde{g}}
\newcommand{\mgen}{\tilde{m}}
\newcommand{\Iint}{\mathbb{I}}
\newcommand{\Igen}{\tilde{\mathbb{I}}}
\newcommand{\domRG}{\mathbb{D}}
\newcommand{\ddp}[2]{\frac{\partial #1}{\partial #2}}
\newcommand{\phib}{\bar\phi}
\newcommand{\amain} {a}
  \newcommand{\texorpdfstring}[2]{#1}
\renewcommand{\chicCov}{{\vartheta}}
\newcommand{\chicCovgen}{{\tilde\vartheta}}
\renewcommand{\lt}{L}
\newcommand{\nubar}{\bar{\nu}}
\begin{document}

\maketitle

\begin{abstract}
  We consider the critical behaviour of the
  continuous-time weakly self-avoiding
  walk with contact self-attr\-ac\-tion   on $\Z^4$,
  for sufficiently small attraction.
  We prove that the susceptibility and correlation length of
  order $p$ (for any $p>0$) have logarithmic corrections to mean field scaling,
  and that the critical two-point function
  is asymptotic to a multiple of $|x|^{-2}$.
  This shows that small contact self-attraction results in the same
  critical behaviour as no contact self-attraction; a collapse transition is
  predicted for larger self-attraction.
  The proof
  uses a supersymmetric representation of the two-point function,
  and is based on a rigorous renormalisation group method that
  has been used to prove the same
  results for
  the weakly self-avoiding walk, without self-attraction.
\end{abstract}


\section{The model and main result}

The self-avoiding walk is a basic model for a linear polymer chain in a
good solution.
The repulsive
self-avoidance constraint models the excluded volume effect of the polymer.
In a \emph{poor} solution,
the polymer tends to avoid contact with the solution by instead making contact
with itself.  This is modelled by a self-attraction
favouring nearest-neighbour
contacts.
The self-avoiding walk is already a notoriously difficult problem,
and the combination of these two competing tendencies creates additional
difficulties and an interesting
phase diagram.

In this paper, we consider a continuous-time version of the
weakly self-avoiding walk with nearest-neighbour contact self-attraction
on $\Z^4$.
When both the self-avoid\-ance and self-attraction are sufficiently weak,
we prove that the susceptibility and finite-order correlation length
  have logarithmic corrections to mean field scaling
  with exponents $\frac14$ and $\frac18$ for the logarithm, respectively,
  and that the critical two-point function
  is asymptotic to a multiple of $|x|^{-2}$ as $|x| \to \infty$.


\subsection{Definition of the model}

For $d>0$, let $X$ denote the continuous-time simple random walk on $\Zd$.
That is, $X$ is the stochastic process
with right-continuous sample paths that takes its steps at the times
of the events of a rate-$2d$ Poisson process.  A step is independent both
of the Poisson process and of all other steps, and is taken uniformly
at random to one of the $2d$ nearest neighbours of the current
position.
The field of \emph{local times} $\lt_T = (\lt_T^x)_{x\in \Z^d}$
of  $X$, up to time $T \ge 0$,
is defined by
\begin{equation}
\label{e:LTx-def}
  \lt_T^x = \int_0^T \1_{X_t = x} \; dt
  .
\end{equation}
The \emph{self-intersection local time} and \emph{self-contact local time}
of $X$ up to time $T$ are the random variables defined, respectively, by
\begin{align}
\label{e:ITdef}
  I_T &=
  \sum_{x \in \Z^d} (\lt_T^x)^2
  = \int_0^T ds \int_0^T dt \; \1_{X_{s}=X_{t}}
  ,\\
\lbeq{CTdef}
  C_T
  &=
  \sum_{x \in \Z^d}\sum_{e\in\Ucal} \lt_T^x\lt_T^{x+e}
  = \int_0^T ds \int_0^T dt \; \1_{X_{s} \sim X_{t}}
  ,
\end{align}
where $\Ucal$ is the set of unit vectors in $\Zd$
and $y\sim x$ indicates that $x$ and $y$ are nearest neighbours.

Given $\beta > 0$ and $\gamma \in \R$,
we define
\begin{equation}
\label{e:Udef-neg}
U_{\beta,\gamma}(f)
=
\beta \sum_{x\in\Zd} f_x^2
- \frac{\gamma}{2d}
\sum_{x\in\Zd} \sum_{e\in\Ucal} f_x f_{x+e}
\end{equation}
for $f:\Zd\to \R$ with $f_x = 0$
for all but finitely many $x$.
The potential that associates an energy to $X$ in terms of its
field of local times is given by
\begin{equation}
  \label{e:V}
  U_{\beta,\gamma,T}
  =
  U_{\beta,\gamma}(L_T)
  =
  \beta I_T
  - \frac{\gamma}{2d}
  C_T
  .
\end{equation}
The energy $U_{\beta,\gamma,T}$ increases with the self-intersection local time,
corresponding to weak self-avoidance.  For $\gamma >0$, the energy decreases
when the self-contact local time increases, corresponding to a contact self-attraction.
For $\gamma<0$, the contact term is repulsive.  We are primarily interested in
the case of positive $\gamma$, but our results hold also for small negative $\gamma$.

Figure~\ref{fig:polymer-contact} shows a sample path $X$
and indicates one self-intersection and four self-contacts.
Although $I_T$ also receives contributions from the
time the walk spends at each vertex, and $C_T$ receives a contribution from each step,
these contributions have the same distribution for all walks taking the same number
of steps.  The depicted intersections and contacts are the meaningful ones.

\begin{figure}[ht]
 \centering\input{polymer-contact.pspdftex}
 \caption{Polymer with one self-intersection and four self-contacts shown.}
 \label{fig:polymer-contact}
\end{figure}

Let $a,b \in \Zd$, and
let $E_a$ denote the expectation for the
process $X$ started at $X(0)=a$.
We define
\begin{equation}
\label{e:c}
    c_T = E_a\left(e^{-U_{\beta,\gamma,T}}\right),
    \quad
    c_T(a,b) = E_a\left(e^{-U_{\beta,\gamma,T}}\1_{X_T = b}\right).
\end{equation}
By translation-invariance, $c_T$ does not depend on $a$.
For $\nu \in \R$, the \emph{two-point function} is defined by
\begin{align}
\lbeq{Gsa}
    G_{\beta,\gamma,\nu}(a,b) &=
    \int_0^\infty c_T(a,b) e^{-\nu T} \; dT,
\end{align}
and the \emph{susceptibility} is defined by
\begin{equation}
\label{e:suscept-def}
    \chi(\beta, \gamma, \nu)
    = \int_0^\infty c_T e^{-\nu T} \; dT
    = \sum_{x\in\Zd} G_{\beta,\gamma,\nu}(0, x)
    .
\end{equation}
For $p>0$, we define the \emph{correlation length of order $p$} by
\begin{equation}
\lbeq{xip-def}
    \xi_p(\beta,\gamma,\nu) = \left(\frac{1}{\chi(\beta, \gamma, \nu)}
    \sum_{x\in\Zd} |x|^p G_{\beta,\gamma,\nu}(0, x)
    \right)^{1/p}.
\end{equation}
In \eqref{e:Gsa}--\eqref{e:xip-def},
self-intersections are suppressed by the factor
$\exp[-\beta I_T]$, whereas nearest-neighbour
contacts are encouraged by the factor
$\exp[\frac{\gamma}{2d}C_T]$ when $\gamma > 0$.


\subsection{The critical point}

The right-hand sides of \eqref{e:Gsa}--\eqref{e:suscept-def}
are positive or $+\infty$,
and
monotone decreasing in $\nu$ by definition.
We define the \emph{critical point}
\begin{equation}
\label{e:nuc-def}
\nu_c(\beta, \gamma) = \inf \{ \nu \in \R : \chi(\beta, \gamma, \nu) < \infty \} .
\end{equation}
For $\gamma=0$, an elementary argument
shows that $\nu_c(\beta,0) > -\infty$ for all dimensions, and furthermore
that $\nu_c(\beta, 0) \in [ -2  \beta(-\Delta_{\Zd}^{-1})_{0,0}, 0]$ for dimensions $d>2$;
see \cite[Lemma~\ref{log-lem:csub}]{BBS-saw4-log}.
Here, $\Delta_{\Zd}$ is the Laplacian on $\Zd$, i.e., the $\Zd \times \Zd$
matrix with entries
\begin{equation}
\label{e:Deltaxy}
(\Delta_{\Zd})_{x, y} = \1_{x\sim y} - 2 d \1_{x=y}.
\end{equation}
An equivalent definition is as follows:
given a unit vector $e \in \Zd$, the discrete gradient is
defined by $\nabla^e f_x = f_{x+e}-f_x$, and the Laplacian is $\Delta_{\Zd}
f_{x} = \sum_{e \in \Ucal} \nabla^e f_x =
-\frac{1}{2}\sum_{e \in \Ucal}\nabla^{-e} \nabla^{e} f_x$.

To estimate the critical point when $\gamma \neq 0$,
we also define
\begin{align} \label{e:nabladef}
    |\nabla f_x|^2 &= \sum_{e\in\Ucal}
    |\nabla^e f_x|^2,
    \quad
    |\nabla f|^2 = \sum_{x\in\Zd} |\nabla f_x|^2.
\end{align}
From the definition, we see that
\begin{equation}
\label{e:sbp}
\sum_{x\in\Zd}   f_x \Delta_{\Zd} f_x
=
-\frac{1}{2} |\nabla f|^2.
\end{equation}
It follows that
\begin{equation}
\sum_{x\in\Zd} \sum_{e\in\Ucal} f_x f_{x+e}
=
2 d \sum_{x\in\Zd} f_x^2
+ \sum_{x\in\Zd} f_x \Delta_{\Zd} f_x
=
2 d \sum_{x\in\Zd} f_x^2
- \frac{1}{2} \sum_{x\in\Zd} |\nabla f_x|^2
\end{equation}
and so we get the useful representation:
\begin{equation}
\label{e:Udef-pos}
U_{\beta,\gamma}(f)
= (\beta - \gamma) \sum_{x\in\Zd} f_x^2
+ \frac{\gamma}{4d} \sum_{x\in\Zd} \sum_{e\in\Ucal} |\nabla^e f_x|^2.
\end{equation}
In particular,
\begin{equation}
  \label{e:V2}
  U_{\beta,\gamma,T} =
  (\beta - \gamma) I_T
  + \frac{\gamma}{4d}
  |\nabla \lt_T|^2
  .
\end{equation}
A version of \refeq{V2} can be found in \cite{HK01a}.

\begin{lemma}
\label{lem:nuc}
Let $d >0$.
Let $\beta>0$ and $|\gamma| < \beta$.
If $\gamma \ge 0$ then $\nu_c(\beta, \gamma) \in [\nu_c(\beta, 0),\nu_c(\beta-\gamma, 0)]$.
If $\gamma < 0$ then $\nu_c(\beta,\gamma) \in [\nu_c(\beta-\gamma,0),\nu_c(\beta,0)]$.
\end{lemma}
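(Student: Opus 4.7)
The plan is to compare the susceptibility $\chi(\beta,\gamma,\nu)$ with $\chi(\beta',0,\nu)$ for two different choices of $\beta'$, by exploiting the two equivalent representations \refeq{Udef-neg} and \refeq{V2} of the energy $U_{\beta,\gamma,T}$. Since $I_T\ge 0$, $C_T\ge 0$, and $|\nabla \lt_T|^2\ge 0$ pointwise along each trajectory of $X$, inequalities between these functionals translate, by monotonicity of $x\mapsto e^{-x}$ and of the expectation $E_a$, into inequalities between $c_T=E_a(e^{-U_{\beta,\gamma,T}})$. Via the definition \refeq{suscept-def} these give inequalities between susceptibilities valid for every $\nu$, and hence between the critical points, because if $\chi_1(\nu)\le \chi_2(\nu)$ for all $\nu$ then $\{\nu:\chi_2(\nu)<\infty\}\subseteq\{\nu:\chi_1(\nu)<\infty\}$ and so $\nu_c^1\le \nu_c^2$ by \refeq{nuc-def}.

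For $\gamma\ge 0$, the representation \refeq{V2} gives $U_{\beta,\gamma,T}\ge (\beta-\gamma)I_T$ (the gradient term is non-negative), so $c_T\le E_a(e^{-(\beta-\gamma)I_T})$ and hence $\chi(\beta,\gamma,\nu)\le \chi(\beta-\gamma,0,\nu)$, yielding $\nu_c(\beta,\gamma)\le \nu_c(\beta-\gamma,0)$. In the opposite direction, \refeq{Udef-neg} gives $U_{\beta,\gamma,T}\le \beta I_T$ (the contact term is non-positive), so $\chi(\beta,\gamma,\nu)\ge \chi(\beta,0,\nu)$, which yields $\nu_c(\beta,\gamma)\ge \nu_c(\beta,0)$. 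For $\gamma<0$, both pairs of signs flip: \refeq{V2} gives $U_{\beta,\gamma,T}\le (\beta-\gamma)I_T$, hence $\chi(\beta,\gamma,\nu)\ge \chi(\beta-\gamma,0,\nu)$ and $\nu_c(\beta,\gamma)\ge \nu_c(\beta-\gamma,0)$, while \refeq{Udef-neg} gives $U_{\beta,\gamma,T}\ge \beta I_T$, hence $\chi(\beta,\gamma,\nu)\le \chi(\beta,0,\nu)$ and $\nu_c(\beta,\gamma)\le \nu_c(\beta,0)$.

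The hypothesis $|\gamma|<\beta$ enters only to ensure that $\beta-\gamma>0$, so that $\chi(\beta-\gamma,0,\nu)$ is the susceptibility of a genuine weakly self-avoiding walk to which the a priori bounds on $\nu_c(\beta',0)$ recalled above apply. There is no real obstacle here: once the two representations of $U_{\beta,\gamma,T}$ are in hand, the lemma reduces to the four pointwise energy comparisons just listed, paired correctly with the two cases of the sign of $\gamma$, together with the trivial monotonicity of $\chi$ in $\nu$ and the definition \refeq{nuc-def} of the critical point.
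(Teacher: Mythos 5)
Your proof is correct and matches the paper's own argument: both derive the pointwise sandwich $U_{\beta-\gamma,0,T}\le U_{\beta,\gamma,T}\le U_{\beta,0,T}$ (reversed for $\gamma<0$) from the two representations \eqref{e:V} and \eqref{e:V2}, then push it through $c_T$, $\chi$, and \eqref{e:nuc-def} to bound $\nu_c$. One small inaccuracy in your closing remark: for $\gamma<0$ one already has $\beta-\gamma>\beta>0$ from $\beta>0$ alone, so the restriction $|\gamma|<\beta$ is only needed on the $\gamma\ge0$ side of the lemma (and elsewhere in the paper, to exclude the collapsed phase), not to make $\beta-\gamma$ positive in the $\gamma<0$ case.
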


\begin{proof}
Suppose first that $\gamma \in [0,\beta)$.
It follows from \refeq{V} and \refeq{V2} that
\begin{equation}
    U_{\beta-\gamma,0,T} \le U_{\beta,\gamma,T} \le  U_{\beta,0,T},
\end{equation}
which implies the desired estimates for $\nu_c(\beta,\gamma)$.

On the other hand,
if $\gamma \in (-\beta, 0)$ then the inequalities are reversed and now
\begin{equation}
    U_{\beta,0,T} \le U_{\beta,\gamma,T} \le  U_{\beta-\gamma,0,T},
\end{equation}
which again implies the desired result.
\end{proof}


\subsection{The main result}

Our main result is the following theorem.
It shows that in dimension $d = 4$,
for sufficiently small $\beta$ and $\gamma$, the two-point function \refeq{Gsa} has the
same asymptotic decay, to leading order, as the simple random walk two-point function.
It also shows that the susceptibility and correlation length of order $p$
exhibit logarithmic corrections to mean-field behaviour.
These results were all proved for $\gamma=0$ in \cite{BBS-saw4,BBS-saw4-log,BSTW-clp},
and we extend them here to small nonzero $\gamma$.

We denote the Laplacian on $\Rd$ by $\Delta_{\Rd}$
and define a constant ${\sf c}_p$ by
\begin{equation}
{\sf c}_p^p = \int_{\R^4} |x|^p (-\Delta_{\R^4} + 1)^{-1}_{0x} \; dx.
\end{equation}

\begin{theorem} \label{thm:suscept}
  Let $d = 4$.
  There exist $\beta_* > 0$
  and a positive function $\gamma_* : (0, \beta_*) \to \R$
  such that whenever $0 < \beta < \beta_*$ and $|\gamma| < \gamma_*(\beta)$,
  there are constants $A_{\beta,\gamma}$ and $B_{\beta,\gamma}$ such that the following hold:

  \smallskip\noindent
  (i)
  The critical two-point function decays as
  \begin{equation}
    G_{\beta,\gamma,\nu_c}(0, x)
        =
    A_{\beta,\gamma} |x|^{-2} \left(1 + O\left(\frac{1}{\log |x|}\right)\right)
        \quad
    \text{as $|x|\to\infty$},
  \end{equation}
  with $A_{\beta,\gamma} = \frac{1}{4 \pi^2} (1 + O(\beta))$ as $\beta \downarrow 0$.

  \smallskip\noindent
  (ii)
  The susceptibility diverges as
  \begin{equation} \label{e:chieps-asympt}
    \chi(\beta, \gamma, \nu_c + \varepsilon)
    \sim B_{\beta,\gamma} \varepsilon^{-1} (\log \varepsilon^{-1})^{1/4},
    \quad \varepsilon\downarrow 0,
  \end{equation}
  with $B_{\beta,\gamma} = (\frac{\beta}{2\pi^2})^{1/4} (1 + O(\beta))$ as $\beta \downarrow 0$.

  \smallskip\noindent
  (iii) For any $p >0$,
  if $\beta_*$ is chosen small depending on $p$, then
  the correlation length of order $p$ diverges as
  \begin{equation} \label{e:xieps-asympt}
    \xi_p(\beta, \gamma, \nu_c + \varepsilon)
     \sim B_{\beta,\gamma}^{1/2} {\sf c}_p \varepsilon^{-1/2} (\log \varepsilon^{-1})^{1/8},
     \quad \varepsilon\downarrow 0.
  \end{equation}
\end{theorem}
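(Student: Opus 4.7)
The plan is to use the supersymmetric integral representation of \refeq{Gsa} together with the alternative form \refeq{V2} of the potential. Under the BFS--Dynkin isomorphism, the continuous-time walk expectation of $e^{-U_{\beta,\gamma,T}}$ becomes a Gaussian--Grassmann integral over boson and fermion fields $(\phi,\bar\phi,\psi,\bar\psi)$ on $\Zd$, with the local time $L_T^x$ replaced by $\tau_x = \phi_x\bar\phi_x + \psi_x\bar\psi_x$. By \refeq{V2}, the interaction takes the form
\[
 (\beta-\gamma)\sum_{x} \tau_x^2 \;+\; \frac{\gamma}{4d}\sum_{x}\sum_{e \in \Ucal}(\nabla^e\tau)_x^2,
\]
i.e.\ the standard weakly-self-avoiding-walk interaction with coupling $g=\beta-\gamma$, together with one extra gradient-density monomial proportional to $\gamma$. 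This is the natural starting point because the case $\gamma=0$ has already been analysed by the rigorous renormalisation group of \cite{BBS-saw4,BBS-saw4-log,BSTW-clp}, and Lemma~\ref{lem:nuc} confines $\nu_c(\beta,\gamma)$ to the range of $\nu$-parameters for which that machinery operates, so the present problem looks like a small perturbation of it.

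\textbf{Renormalisation group.} In $d=4$, scaling assigns $\phi$ dimension $1$, hence $\tau$ has dimension $2$ and $|\nabla\tau|^2$ has dimension $6>4$. The coupling $\gamma$ is therefore \emph{irrelevant} in the RG sense, and one expects it to perturb non-universal amplitudes without altering critical exponents. I would adapt the progressive-integration scheme of \cite{BBS-saw4,BBS-saw4-log}: decompose $(-\Delta_{\Zd}+m^2)^{-1}$ into finite-range pieces $C_j$ and propagate scale-$j$ effective interactions $(V_j,K_j)$ in the $T_\phi$-seminorm framework. The space of local polynomials carrying $V$ must be enlarged to include $\sum_{x,e}(\nabla^e\tau)_x^2$, and both the perturbative map $V\mapsto V_\pt$ and the localisation operator $\Loc$ must be recomputed so that a new coupling $\gamma_j$ flows alongside $(g_j,\nu_j,\mu_j)$. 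Irrelevance then translates into a linearised recursion $\gamma_{j+1}\approx L^{-2}\gamma_j$, so $\gamma_j\to 0$ geometrically and its feedback into the marginal and relevant couplings enters only through higher-order terms that can be folded into the existing error estimates.

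\textbf{Main obstacle and extraction of asymptotics.} The hard part is to verify that the full machinery of \cite{BBS-saw4,BBS-saw4-log,BSTW-clp}---stability of $V$, $T_\phi$ estimates on the remainder coordinate $K$, the integration-by-parts and $\Loc$ identities, bounds on $\Phi_\pt$, and construction of a critical initial condition via an implicit-function-theorem argument---continues to hold in the enlarged coupling space. In particular, although $|\nabla\tau|^2$ is formally irrelevant, one must check that its convolution with itself and with the marginal $\tau^2$ under $\Ex_{C_{j+1}}$ does not regenerate a \emph{relevant} counterterm at marginal scaling dimension, which would spoil the flow; this is where most of the bookkeeping lies. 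Once this is established, the flow reduces at leading order to the pure wsaw flow with effective coupling $\bar g = \beta-\gamma$: the exponents $\tfrac14$ and $\tfrac18$ in \eqref{e:chieps-asympt}--\eqref{e:xieps-asympt} arise from integrating the marginal ODE $\dot g_j = -\bar\beta\, g_j^2$, and the conclusions (i), (ii), (iii) follow from the arguments of \cite{BBS-saw4}, \cite{BBS-saw4-log}, and \cite{BSTW-clp} respectively, with amplitudes $A_{\beta,\gamma},B_{\beta,\gamma}$ that are $O(1)$-close to their $\gamma=0$ counterparts because $\gamma_*(\beta)$ is chosen small.
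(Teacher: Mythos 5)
Your proposal correctly identifies the heuristic (the monomial $|\nabla\tau|^2$ is irrelevant in $d=4$, so $\gamma$ should not change the exponents) and sets up the same supersymmetric representation and form \refeq{V2} of the potential, but then takes a genuinely different route from the paper. You propose to enlarge the space of local polynomials carrying $V$ to include $\sum_{x,e}(\nabla^e\tau_x)^2$, re-derive $\Loc$ and the perturbative map $V\mapsto V_\pt$, and track a flowing coupling $\gamma_j$ alongside $(g_j,\nu_j,z_j)$. The paper does something much lighter: it keeps the local-polynomial space \emph{unchanged} and instead absorbs the entire $\gamma$-dependent factor $e^{-\gamma_0|\nabla\tau_x|^2}$ into the initial \emph{error coordinate} $K_0$, via $K_{0,x}^+=I_{0,x}^+(e^{-\gamma_0 U_x^+}-1)$ with $I_0^+=\prod_x e^{-V_{0,x}^+}$ the standard $\gamma=0$ interaction (see \eqref{e:IK0def}). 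This exploits the generality of the flow theorem of \cite{BBS-rg-flow}, which already accommodates any initial $K_0$ with $\|K_0\|_{\Wcal_0}=O(g_0^3)$. Consequently the paper never modifies $\Loc$, $V_\pt$, or the single-scale step of \cite{BS-rg-step}: the second-order flow of $(g_j,\nu_j,z_j)$ is literally that of the $\gamma=0$ model (Corollary~\ref{cor:rhatflow}), and the only new RG work is to prove the norm estimates $\|K_0^\pm\|_{\Wcal_0}=O(|\gamma_0|)$ (Propositions~\ref{prop:K0bd}--\ref{prop:KWcal}) and the smoothness of $K_0$ in the coupling constants (Proposition~\ref{prop:Ksmooth}). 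This is why your ``main obstacle'' (does $|\nabla\tau|^2 \ast |\nabla\tau|^2$ or $|\nabla\tau|^2 \ast \tau^2$ regenerate dangerous relevant counterterms?) simply never arises in the paper---it is handled once and for all, at the level of $K$, by the black-box stable-manifold theorem.

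Two further issues that your route would need to handle, and which the paper addresses explicitly: (a) For $\gamma<0$, the coefficient of $|\nabla\tau|^2$ in \eqref{e:Udef-pos} is negative, so $e^{-\gamma_0 U_x^+}$ is unbounded above in the field; the paper switches to an alternative decomposition $V^-,K^-$ with $U_x^-=2\sum_e\tau_x\tau_{x+e}$ precisely to keep stability, whereas your single reduction via \refeq{V2} would not deliver the needed $T_\phi$-norm bounds when $\gamma<0$. (b) Because $K_0$ depends on $(g_0,\nu_0,z_0,\gamma_0)$ and has only one-sided $\gamma_0$-derivatives at $\gamma_0=0$, the critical initial conditions solve a fixed-point system \eqref{e:mu0c}--\eqref{e:z0c} whose solution requires a non-smooth implicit function theorem (Propositions~\ref{prop:nuzhat}--\ref{prop:changevariables2}, \ref{prop:IFT}); your proposal leaves the construction of critical initial conditions in the enlarged space to an unspecified ``implicit-function-theorem argument.'' Your approach is plausibly implementable but would require re-opening the entire five-paper RG package; the paper's route is far more economical and is the one that works off the shelf.
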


Our method of proof extends the renormalisation group argument, used for $\gamma=0$  in
\cite{BBS-saw4,BBS-saw4-log,BSTW-clp,ST-phi4}, to small nonzero $\gamma$.
In Section~\ref{sec:finvol}, as a first step, we show that
the two-point function can be approximated by a
finite-volume one.  The finite-volume two-point function has a supersymmetric
integral representation
\cite{BM91,BEI92,BIS09}, which we state in Section~\ref{sec:intrep}.  These
two sections
do not involve the renormalisation group.
The application of the renormalisation group method requires the following new ingredients:
(i)
In Section~\ref{sec:K0bd}, we
provide estimates on the contact attraction which show that it
is compatible with the renormalisation group method developed in \cite{BS-rg-IE,BS-rg-step},
and also with the dynamical systems theorem proved in \cite{BBS-rg-flow}.
(ii)
In Section~\ref{sec:nucident},
we use the implicit function theorem to
extend the identification of the critical point from $\gamma=0$ to $\gamma\neq 0$, and
complete the proof of Theorem~\ref{thm:suscept}.

In fact, we demonstrate that after the introduction of $\gamma$,
chosen sufficiently small depending on $g$,
we may use the the same renormalisation group flow of the remaining coupling constants
as in the case $\gamma=0$,
to second order in these coupling constants.
Thus, since the critical exponents are determined by this second-order flow,
they are independent of small $\gamma$, and take the same values as for $\gamma=0$.
The critical value $\nu_c(\beta,\gamma)$ does, however, depend on $\gamma$.


\subsection{Critical exponents and polymer collapse}

\begin{figure}[ht]
  \scalebox{0.9}
  {\input{polymer-phasediagram.pspdftex}}
  \centering
  \caption{The predicted phase diagram for $d \ge 2$.
  \label{fig:phasediagram}}
\end{figure}

It has been known for decades that
self-avoiding walk obeys mean-field behaviour in dimensions $d \ge 5$.
In particular, a version of
Theorem~\ref{thm:suscept}
for the strictly self-avoiding walk (in discrete time with
$\beta=\infty$ and $\gamma=0$)
in dimensions $d \ge 5$ was proved in \cite{HS92a,Hara08} using the lace expansion \cite{BS85}.
In its original applications, the lace expansion relied on the
purely repulsive nature of the self-avoidance
interaction.  Models incorporating attraction require new ideas.
For a particular model with self-attraction and
specially chosen exponentially decaying step weights, the lace expansion was
used in \cite{Uelt02} to prove
that, for $d \ge 5$, the mean-square displacement grows diffusively for small attraction.
More recently \cite{Helm16}, the lace expansion has been applied in situations where repulsion occurs
only in an average sense.  In a further development \cite{HH17},
the lace expansion has been applied to a model of
strictly self-avoiding walk with a self-attraction that rewards visits to adjacent parallel
edges, to prove that sufficiently weak self-attraction does not affect the critical behaviour
in dimensions $d \ge 5$.  The results of \cite{HH17,Uelt02} for $d \ge 5$ complement our results
for $d=4$, via entirely different methods.

Assuming it exists, the critical exponent $\nubar$
for the mean-square displacement is defined by
\begin{equation}
\lbeq{msd}
    \langle |X(T)|^2 \rangle  =
    \frac{1}{c_T} E_0(|X(T)|^2 e^{-U_{\beta,\gamma,T}})
    \approx T^{2\nubar}
    ,
\end{equation}
possibly with logarithmic corrections.
A general tenet of the theory of critical phenomena asserts that other
natural length scales, such as the correlation length of order $p$,
are also governed by the exponent $\nubar$.
A typical argument for this, found in physics textbooks, goes as follows.
It is predicted that $c_T \approx e^{\nu_c T}T^{\bar\gamma -1}$, where $\bar\gamma$
is the critical exponent for the susceptibility (for $d=4$, $\bar\gamma=1$ with a logarithmic
correction, by \refeq{chieps-asympt}).
By definition,
\begin{equation}
\label{e:xi2}
\xi_2(\beta, \gamma, \nu)^2
  =
\frac
  {\int_0^\infty \langle |X(T)|^2 \rangle c_T e^{-\nu T} \; dT}
  {\int_0^\infty c_T e^{-\nu T} \; dT}
.
\end{equation}
In \refeq{xi2}, we substitute the asymptotic formula for $c_T$,
as well as \refeq{msd}, to obtain
\begin{equation}
\label{e:xi-nubar}
    \xi_2(\beta, \gamma, \nu)
    \approx
    (\nu - \nu_c)^{-\nubar} \quad \text{as}\; \nu \downarrow \nu_c,
\end{equation}
with the same exponent $\nubar$ as in \refeq{msd}.

The weakly self-avoiding walk with contact self-attraction is a model for polymer collapse.
Polymer collapse corresponds to a
discontinuous reduction in the exponent $\nubar$ as $\gamma$ increases.
A summary of results, predictions, and references can be found in
\cite[Chapter~6]{Holl09}.
See also \cite{Jans15,Vand98}.
The predicted phase diagram for dimensions $d \ge 2$ is shown
in Figure~\ref{fig:phasediagram}.
The predicted values of the exponent at the $\theta$-transition are
$\nubar_\theta = \frac 47$ for $d=2$ and $\nubar_\theta = \frac 12$ for $d \ge 3$ \cite{Holl09}.
The phase labelled $\nubar_{\rm SAW}$ takes its name from the fact that in this
phase the model with attraction is predicted to be in the same universality class
as the self-avoiding walk.
The predicted values of the exponent $\nubar_{\rm SAW}$ for the
self-avoiding walk are respectively
$\frac 34$, $0.587597(7)$, $\frac 12$ for $d=2,3,4$ (with a logarithmic correction
for $d=4$; see \cite{Clis10} for $d=3$), and it has been proved that $\nubar_{\rm SAW}=\frac 12$
for $d \ge 5$ \cite{BS85,HS92a}.
It remains a major challenge in the mathematical theory of polymers
to prove the full validity of the phase diagram in all dimensions $d\geq 2$.
Very recently, the existence of a collapse transition
(a singularity of the free energy) has been proven for a
2-dimensional \emph{prudent} self-avoiding walk with contact self-attraction \cite{PT16}.

For $\gamma \geq 0$,
the significance of the restriction $\gamma <\beta$
has been noted for a closely related discrete-time model,
for which it is proved that for $\gamma > \beta$
the walk is in a compact phase in the sense that $\nubar = 0$, whereas for $\gamma < \beta$
it is the case that $\nubar  \ge 1/d$ \cite{HK01a}.
In the compact
phase, the discrete-time model obeys the analogue of
$c_T \approx e^{kT^2}$ with $k>0$, so $\chi(\beta,\gamma,\nu)=\infty$ for all $\nu \in \R$
and $\nu_c = +\infty$.
For the 1-dimensional case, the behaviour for the transition line $\gamma=\beta$ has been
studied in \cite{HKK02}.

The axis $\gamma=0$ corresponds to the weakly self-avoiding walk which is
well understood in dimensions $d \ge 5$ \cite{BS85,HS92a}, and
in dimension $4$ \cite{BBS-saw4,BBS-saw4-log,BSTW-clp}.
Theorem~\ref{thm:suscept} extends the results of \cite{BBS-saw4,BBS-saw4-log,BSTW-clp}
for dimension $d = 4$ to the region bounded by the dashed line.
Our results show that for $d=4$ there is no polymer collapse for small contact self-attraction,
in the sense that the critical behaviour remains the same with small contact attraction
as with no contact attraction.
In particular, Theorem~\ref{thm:suscept}(iii) shows that,
in the sense of \eqref{e:xi-nubar}, when $\gamma$ is small,
$\nubar = \frac{1}{2}$ holds with a logarithmic correction.


\section{Finite-volume approximation}
\label{sec:finvol}

The first step in the proof of Theorem~\ref{thm:suscept}
is an approximation of $G_{\beta,\gamma,\nu}(a,b)$ and $\chi(\beta, \gamma, \nu)$
by finite-volume analogues of these quantities.
This is the content of Proposition~\ref{prop:finvol}.

Before proving the proposition, we require some preliminaries.
Let $P^n$ be the projection
of $\Zd$ onto the discrete torus of side $n$,
which we denote $\Z_n^d$.
Then $P^n$ has a natural action
on the path space $(\Zd)^{[0,\infty)}$. We let
$X^n = P^n(X)$ be the projection of $X$
and note that $X^n$ is a simple random walk on $\Z^d_n$.

We call $h = (h_x)_{x\in\Zd}$ a \emph{field of path functionals} if
$h_x : (\Zd)^{[0,\infty)} \to \R$ is a function on continuous-time paths
for each $x \in \Zd$;
a simple example is given by the local time functional.
We assume that the \emph{random} field $h(X) = (h_x(X))_{x\in\Zd}$
has finite support almost surely, i.e.,
with probability $1$, $h_x(X) = 0$ for all but finitely many $x$.
Denote by $h(X^n)$ the corresponding random field for $X^n$, i.e., for $x \in \Z_n^d$,
\begin{equation}
h_x(X^n) = \sum_{y\in\Zd} h_{x+ny}(X).
\end{equation}

Given a positive integer $k$, we define
$Q_k \subset \Z^d$ by $Q_k = \{y \in \Z^d : 0 \le y_i < k, \;   i=1,\ldots,d\}$.
Then, for integers $n,k \ge 1$,
\begin{equation}
\label{e:ffold1}
    \sum_{y \in Q_k} h_{x+ny}(X^{kn})
  = \sum_{y \in Q_k} \sum_{z\in\Zd} h_{x+ny+knz}(X)
  = \sum_{y\in\Zd} h_{x+ny}(X)
  = h_x(X^n),
\end{equation}
and it follows by summation over $x \in \Z^d_n$ that
\begin{equation}
\label{e:ffold2}
\sum_{x\in\Z^d_{kn}} h_x(X^{kn})
  =
\sum_{x\in\Z^d_n} h_x(X^n).
\end{equation}

\begin{lemma}
\label{lem:mono}
Let $n,k \ge 1$ and let $f$ and $g$ be nonnegative fields of path functionals
with finite support almost surely.
Then
\begin{equation}
\sum_{x\in\Z^d_{kn}} f_x(X^{kn}) g_x(X^{kn})
  \leq
\sum_{x\in\Z^d_n} f_x(X^n) g_x(X^n).
\end{equation}
\end{lemma}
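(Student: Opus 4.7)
The plan is to use the folding identity \eqref{e:ffold1} to expand each factor on the coarser torus $\Z^d_n$ in terms of the finer torus $\Z^d_{kn}$, and then to exploit nonnegativity to drop the cross terms.

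First I would rewrite \eqref{e:ffold1} with the roles of $n$ and $kn$ played as in the lemma: for any $x \in \Z^d_n$ and any nonnegative field $h$ of path functionals with finite support,
\begin{equation*}
h_x(X^n) = \sum_{y \in Q_k} h_{x+ny}(X^{kn}).
\end{equation*}
Applying this to both $f$ and $g$, for $x \in \Z^d_n$ I obtain
\begin{equation*}
f_x(X^n)\, g_x(X^n) = \sum_{y \in Q_k} \sum_{z \in Q_k} f_{x+ny}(X^{kn})\, g_{x+nz}(X^{kn}).
\end{equation*}

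Next, since $f$ and $g$ are nonnegative, I may discard the off-diagonal terms $y \ne z$ in the double sum, which yields the pointwise bound
\begin{equation*}
f_x(X^n)\, g_x(X^n) \;\ge\; \sum_{y \in Q_k} f_{x+ny}(X^{kn})\, g_{x+ny}(X^{kn}).
\end{equation*}
Finally, I sum this inequality over $x \in \Z^d_n$. The map $(x,y) \mapsto x+ny$ from $\Z^d_n \times Q_k$ to $\Z^d_{kn}$ is a bijection, so the double sum on the right collapses to a single sum over $w \in \Z^d_{kn}$, giving precisely
\begin{equation*}
\sum_{x \in \Z^d_n} f_x(X^n)\, g_x(X^n) \;\ge\; \sum_{w \in \Z^d_{kn}} f_w(X^{kn})\, g_w(X^{kn}).
\end{equation*}

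There is no real obstacle here: the argument is purely combinatorial, and the only subtle point is verifying that the indexing $(x,y)\mapsto x+ny$ really enumerates $\Z^d_{kn}$ without repetition, which is immediate from the definition of $Q_k$ and the fact that $\Z^d_{kn}$ may be identified with $\Z^d_n \oplus n\Z^d_k$. Almost-sure finiteness of the support of $f(X)$ and $g(X)$ ensures that every sum that appears is a finite sum on the event of full probability, so no convergence issue arises.
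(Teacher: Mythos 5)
Your proof is correct and uses essentially the same approach as the paper's: both apply the folding identity \eqref{e:ffold1} to write each of $f_x(X^n)$ and $g_x(X^n)$ as a sum over $Q_k$ of $X^{kn}$-functionals, then expand the product and drop the nonnegative off-diagonal ($y\ne z$) terms, and finally reindex $\Z_n^d\times Q_k\cong\Z_{kn}^d$. The only cosmetic difference is the direction in which the chain of (in)equalities is written.
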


\begin{proof}
By \eqref{e:ffold2} and \eqref{e:ffold1},
\begin{equation}
\sum_{x\in\Z_{kn}^d} f_x(X^{kn}) g_x(X^{kn})
  =
\sum_{x\in\Z_n^d}
\sum_{y \in Q_k}
  f_{x+ny}(X^{kn}) g_{x+ny}(X^{kn}).
\lbeq{mono}
\end{equation}
By nonnegativity and two more applications of \eqref{e:ffold1},
\begin{align}
\sum_{x\in\Z_n^d}
\sum_{y \in Q_k}
f_{x+ny}(X^{kn}) g_{x+ny}(X^{kn})
  &\le \sum_{x\in\Z_n^d}
      \left(\sum_{y \in Q_k} f_{x+ny}(X^{kn})\right)
      \left(\sum_{y \in Q_k} g_{x+ny}(X^{kn})\right) \nonumber \\
  &= \sum_{x\in\Z_n^d} f_x(X^n) g_x(X^n).
\end{align}
This completes the proof.
\end{proof}

Fix $L \geq 2$ and $N \geq 1$.
We write $\Lambda_N$ for the torus $\Z^d_n$ with $n=L^N$.
Thus, $X^{L^N}$ is the simple random walk on $\Lambda_N$.
For $F_T = F_T(X)$ any one of the functions $L_T^x,I_T,C_T$
of $X$ defined in \eqref{e:LTx-def}--\eqref{e:CTdef},
we write $F_{N,T} = F_T(X^{L^N})$. For instance, with $n=L^N$,
\begin{equation}
    L^x_{N,T} = \int_0^T \1_{X^{n}_t=\;x} \; dt,
    \quad I_{N,T} = \sum_{x \in \Lambda_N}(L_{N,T}^x)^2 .
\end{equation}
We apply Lemma~\ref{lem:mono} with $k = L$ and $n = L^N$
for three choices of $f,g$:
\begin{alignat}{2}
\label{e:ILT-mon}
I_{N+1,T} &\leq I_{N,T}\quad &&(f_x=g_x=L_T^x),
\\
\label{e:CSA-mon}
C_{N+1,T} &\leq C_{N,T} \quad &&(f_x=\textstyle{\sum_{e\in \Ucal}L_T^{x+e}},\; g_x=L_T^x) ,
\\
\sum_{x\in\Lambda_{N+1}} |\nabla^e L^x_{N+1,T}|^2
  &\leq
\sum_{x\in\Lambda_N} |\nabla^e L^x_{N,T}|^2
\quad &&
(f_x = g_x = \left|\nabla^e L_T^x\right|).
\lbeq{nabL}
\end{alignat}
Summation of \refeq{nabL} over $e\in\Ucal$ also gives
\begin{align}
\label{e:gradLT-mon}
\sum_{x\in\Lambda_{N+1}} |\nabla L^x_{N+1,T}|^2
  \leq
\sum_{x\in\Lambda_N} |\nabla L^x_{N,T}|^2.
\end{align}

We identify the vertices of $\Lambda_N$ with nested subsets of $\Zd$,
centred at the origin (approximately if $L$ is even),
with $\Lambda_{N+1}$ paved by $L^d$ translates of $\Lambda_N$.
We can thus define $\partial \Lambda_N$ to be the inner vertex boundary of $\Lambda_N$.
We denote the expectation of $X^{L^N}$ started from $a \in \Lambda_N$ by $E^{\Lambda_N}_a$
and define
\begin{align}
\label{e:cN}
c_{N,T}(a, b)
    &= E^{\Lambda_N}_a \left( e^{-U_{\beta,\gamma,T}} \1_{X(T)=b} \right)
    \quad (a, b \in \Lambda_N), \\
c_{N,T}
    &= E^{\Lambda_N}_0 \left( e^{-U_{\beta,\gamma,T}} \right).
\end{align}
The finite-volume two-point function and susceptibility
are defined by
\begin{align}
G_{N,\beta,\gamma,\nu}(a,b)
    &= \int_0^\infty c_{N,T}(a, b) e^{-\nu T} \; dT, \\
\chi_N(\beta, \gamma, \nu)
    &= \int_0^\infty c_{N,T} e^{-\nu T} \; dT
    .
    \label{e:chiNdef}
\end{align}

\begin{prop}
\label{prop:finvol}
Let $d >0$, $\beta >0$ and $\gamma < \beta$. For all $\nu \in \R$,
\begin{equation}
\label{e:Givlc}
\lim_{N \to \infty}
G_{N,\beta,\gamma,\nu}(a,b)
=
G_{\beta,\gamma,\nu}(a,b)
\end{equation}
and
\begin{equation}
\label{e:chilim}
\lim_{N\to\infty}\chi_N(\beta,\gamma,\nu)=   \chi(\beta,\gamma,\nu).
\end{equation}
\end{prop}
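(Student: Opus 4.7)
The plan is to couple the walks $X$ and $X^{L^N}$ on a single probability space by setting $X^{L^N}=P^{L^N}(X)$, so that $E_a^{\Lambda_N}(F(X^{L^N}))=E_a(F(P^{L^N}(X)))$ for any functional $F$. Under this coupling I would combine Lemma~\ref{lem:mono} (applied with $k=L$ and $n=L^N$) with the representation \refeq{V2} of $U_{\beta,\gamma,T}$ when $\gamma\ge 0$, and with the original representation \refeq{V} when $\gamma<0$. In either case the hypothesis $\gamma<\beta$ exhibits $U_{\beta,\gamma,T}=U_{\beta,\gamma}(L_T)$ as a sum of two nonnegative quadratic functionals of the local time field, to which Lemma~\ref{lem:mono} applies directly to yield the pointwise monotonicity
\[
U_{\beta,\gamma}(L_{N+1,T})(\omega)\;\le\;U_{\beta,\gamma}(L_{N,T})(\omega).
\]
Moreover $U_{\beta,\gamma}(L_{N,T})(\omega)=U_{\beta,\gamma}(L_{T})(\omega)$ once $N$ is so large that the path $X_{[0,T]}(\omega)$ fits inside a fundamental domain of $L^N\Z^d$. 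Hence $\exp[-U_{\beta,\gamma}(L_{N,T})]\uparrow \exp[-U_{\beta,\gamma}(L_T)]$ pointwise on the coupled probability space.

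The convergence \refeq{chilim} now follows by two successive applications of the monotone convergence theorem: first in the probability measure to obtain $c_{N,T}\uparrow c_T$ for each $T\ge 0$, and then against the positive measure $e^{-\nu T}dT$ on $[0,\infty)$ to conclude that $\chi_N(\nu)\uparrow\chi(\nu)$ for every $\nu\in\R$ (with the limit possibly equal to $+\infty$).

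For \refeq{Givlc}, I would first establish the pointwise convergence $c_{N,T}(a,b)\to c_T(a,b)$ for each fixed $T\ge 0$. To do this, split the expectation in \refeq{cN} according to the event $A_{N,T}=\{X_{[0,T]}\subset Q_N\}$, where $Q_N\subset\Z^d$ is a centred cube of side $L^N/2$ on which $P^{L^N}$ is injective and which contains $b$ for $N$ large. On $A_{N,T}$ the local time fields satisfy $L_{N,T}=L_T$ and $\{X^{L^N}(T)=b\}=\{X(T)=b\}$, so this contribution equals $E_a(e^{-U_{\beta,\gamma,T}}\1_{X(T)=b}\1_{A_{N,T}})$ and increases to $c_T(a,b)$ by monotone convergence as $\1_{A_{N,T}}\uparrow 1$; the complementary contribution is bounded by $P_a(A_{N,T}^c)\to 0$ because $X_{[0,T]}$ is almost surely bounded. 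To pass to the $T$-integral, the dominating inequality $c_{N,T}(a,b)\le c_{N,T}\le c_T$ (the second from the monotone limit of the previous step) makes dominated convergence applicable whenever $\chi(\nu)<\infty$, while Fatou's lemma handles the remaining case where $G(a,b)=+\infty$.

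The main technical obstacle is securing the pointwise monotonicity of $U_{\beta,\gamma}(L_{N,T})$ uniformly over the full admissible range $\gamma<\beta$: this forces the use of both representations \refeq{V} and \refeq{V2}, because Lemma~\ref{lem:mono} requires nonnegativity of the factors and no single representation of $U_{\beta,\gamma,T}$ exhibits this nonnegativity for both signs of $\gamma$ simultaneously. Once this monotonicity is in hand, the remainder of the argument is standard measure-theoretic machinery combined with the almost-sure confinement of a finite-time path to a sufficiently large box.
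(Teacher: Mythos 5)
Your argument for \refeq{chilim}, via the coupling $X^{L^N}=P^{L^N}(X)$ and two applications of monotone convergence, is correct and agrees in substance with the paper's proof. Your argument that $c_{N,T}(a,b)\to c_T(a,b)$, using the confinement event $A_{N,T}$, is also correct and is essentially the same idea as the paper's (the paper splits on the event that the walk touches $\partial\Lambda_N$ and controls the error by a Poisson tail bound).

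The genuine gap is in passing from $c_{N,T}(a,b)\to c_T(a,b)$ to \refeq{Givlc}. You propose dominated convergence with dominant $c_T$ when $\chi(\nu)<\infty$, and Fatou when $G(a,b)=+\infty$; these do not exhaust the cases, since at $\nu=\nu_c$ one has $\chi=\infty$ while $G(a,b)<\infty$ (Theorem~\ref{thm:suscept}(i) gives $G(0,x)\sim A|x|^{-2}$). In that regime $c_T$ is not an integrable dominant, and Fatou gives only a one-sided inequality. The paper instead asserts the monotonicity $c_{N,T}(a,b)\le c_{N+1,T}(a,b)$ in $N$ and applies monotone convergence, which handles every $\nu$ uniformly. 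Be aware, however, that this monotonicity is itself delicate: under the coupling, $e^{-U_{\beta,\gamma}(L_{N+1,T})}\ge e^{-U_{\beta,\gamma}(L_{N,T})}$ pointwise, but the indicator moves the other way, $\1_{X^{L^{N+1}}(T)=b}\le\1_{X^{L^N}(T)=b}$, so there is no pointwise inequality on the integrand; indeed for $\beta=\gamma=0$ the claimed monotonicity reverses, since then $c_{N,T}(a,b)=P_a(X(T)\equiv b\bmod L^N)$. Your reluctance to rely on monotonicity of $c_{N,T}(a,b)$ is therefore well founded; to close the gap one would either need to justify the in-expectation monotonicity for $\beta>0$, or exhibit a dominating function that is integrable whenever $G(a,b)<\infty$. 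In the paper's actual applications of the proposition one always has $m^2>0$, hence $\chi<\infty$, so your dominated-convergence argument does suffice for those purposes.
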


\begin{proof}
Fix $a, b \in \Zd$, and consider $N$ sufficiently large that $a,b$ can be identified
with points in $\Lambda_N$.
By \eqref{e:V2}, \eqref{e:ILT-mon} and \eqref{e:gradLT-mon} (if $0 \le \gamma <\beta$),
or by \eqref{e:V}, \eqref{e:ILT-mon} and \eqref{e:CSA-mon} (if $\gamma < 0$),
\begin{equation}
\label{e:ctmon}
c_{N,T}(a, b) \leq c_{N+1,T}(a, b).
\end{equation}
Thus, \eqref{e:Givlc} follows by monotone convergence, once we show that
\begin{equation}
\lim_{N\to\infty} c_{N,T}(a, b) = c_T(a, b).
\end{equation}

This follows as in \cite[(2.8)]{BBS-saw4}.
That is, first we define
\begin{align}
c_{N,T}^*(a, b)
  &=
E^{\Lambda_N}_a
\left(
  e^{-U_{\beta,\gamma,T}} \1_{X(T)=b} \1_{\{X([0, T]) \cap \partial \Lambda_N \neq \varnothing\}}
\right) \\
c_T^*(a, b)
  &=
E_a
\left(
  e^{-U_{\beta,\gamma,T}} \1_{X(T)=b} \1_{\{X([0, T]) \cap \partial \Lambda_N \neq \varnothing\}}
\right).
\end{align}
Since walks which do not reach $\partial \Lambda_N$ make equal contributions to both
$c_T(a,b)$ and $c_{N,T}(a,b)$,
we have
\begin{equation}
c_T(a, b) - c_T^*(a, b) = c_{N,T}(a, b) - c_{N,T}^*(a, b).
\end{equation}
Thus,
\begin{align}
|c_T(a, b) - c_{N,T}(a, b)|
= |c_T^*(a, b) - c_{N,T}^*(a, b)|
\leq c_T^*(a, b) + c_{N,T}^*(a, b).
\end{align}
Let $P^{\Lambda_N}_a$ and $P_a$ be the measures
associated with $E^{\Lambda_N}_a$ and $E_a$, respectively.
With $Y_t$ a rate-$2d$ Poisson process with measure ${\sf P}$,
\begin{align}
 c_T^*(a, b) + c_{N,T}^*(a, b)
  &\leq P_a (X([0, T]) \cap \partial\Lambda_N \neq \varnothing)
    + P^{\Lambda_N}_a (X([0, T]) \cap \partial\Lambda_N \neq \varnothing) \nonumber \\
  &\leq 2 {\sf P} (Y_T \geq \diam{\Lambda_N}) \to 0
\end{align}
as $N\to\infty$.
This completes the proof of \eqref{e:Givlc}.

Finally, by monotone convergence of $G_N$ to $G$,
for $\nu \in \R$,
\begin{equation}
\lim_{N\to\infty} \chi_N(g, \gamma, \nu)
    = \sum_{b\in\Zd} \lim_{N\to\infty} G_{N,g,\gamma,\nu}(a, b) \1_{b\in\Lambda_N}
    = \chi(g, \gamma, \nu),
\end{equation}
which proves \eqref{e:chilim}.
\end{proof}


\section{Integral representation and progressive integration}
\label{sec:intrep}

In this section, we reformulate the model in terms of a perturbation of a supersymmetric
Gaussian integral, in order to prepare for the application of the renormalisation group.
The integral representation, which is a special case of a result from \cite{BIS09},
makes use of the Grassmann integral. We begin by recalling the definition of the Grassmann
integral in Section~\ref{sec:forms} and state the integral representation in Section~\ref{sec:Gintrep}.
In Section~\ref{sec:Gauss-approx}, we split the integral into a Gaussian part and a perturbation.
The basic idea underlying the renormalisation group is the progressive evaluation of this
Gaussian integral via a multi-scale decomposition of its covariance, which we introduce in Section~\ref{sec:prog}.


\subsection{Boson and fermion fields}
\label{sec:forms}

We fix $N$ and write $\Lambda = \Lambda_N$.
Given complex variables $\phi_x, \bar\phi_x$
(the boson field) for $x \in \Lambda$,
we define the differentials (the fermion field)
\begin{equation}
\psi_x = \frac{1}{\sqrt{2\pi i}} d\phi_x,
\quad
\bar\psi_x = \frac{1}{\sqrt{2\pi i}} d\bar\phi_x,
\end{equation}
where we fix a choice of complex square root.
The fermion fields are multiplied with each other
via the anti-commutative wedge product,
though we suppress this in our notation.

A differential form that is the
product of a function of $(\phi, \bar\phi)$
with $p$ differentials is said to have degree $p$.
A sum of forms of even degree is said to be \emph{even}.
We introduce a copy $\bar\Lambda$ of $\Lambda$
and we denote the copy of $X \subset \Lambda$ by $\bar X \subset \bar\Lambda$.
We also denote the copy of $x \in \Lambda$
by $\bar x \in \bar\Lambda$ and define $\phi_{\bar x} = \bar\phi_x$ and $\psi_{\bar x} = \bar\psi_x$.
Then any differential form $F$ can be written
\begin{equation}
\lbeq{FinNcal}
F
=
\sum_{\vec y}
F_{\vec y} (\phi, \bar\phi)
\psi^{\vec y}
\end{equation}
where the sum is over finite sequences $\vec y$ over $\Lambda\sqcup\bar\Lambda$,
and $\psi^{\vec y} = \psi_{y_1} \ldots \psi_{y_p}$ when $\vec y = (y_1, \ldots, y_p)$.
When $\vec y = \varnothing$ is the empty sequence,
$F_\varnothing$ denotes the $0$-degree (bosonic) part of $F$.

In order to apply the results of \cite{BBS-saw4-log,BBS-saw4,BSTW-clp}, we require
smoothness of the coefficients $F_{\vec y}$ of $F$.  For Theorem~\ref{thm:suscept}(i,ii),
we need these coefficients to be $C^{10}$, and for Theorem~\ref{thm:suscept}(iii) we require
a $p$-dependent number of derivatives for the analysis of $\xi_p$, as discussed in \cite{BSTW-clp}.
We let $\Ncal$ be the algebra of even forms with sufficiently smooth coefficients
and we let $\Ncal(X) \subset \Ncal$ be the sub-algebra of even forms only depending on fields
in $X$. Thus, for $F \in \Ncal(X)$, the sum in \eqref{e:FinNcal} runs over sequences $\vec y$
over $X \sqcup \bar X$.
Note that $\Ncal = \Ncal(\Lambda)$.

Now let $F = (F_j)_{j \in J}$ be a finite collection of even forms
indexed by a set $J$
and write $F_\varnothing = (F_{\varnothing,j})_{j \in J}$.
Given a $C^\infty$ function $f : \R^J \to \C$, we define
$f(F)$ by its Taylor expansion about $F_\varnothing$:
\begin{equation}
f(F) = \sum_\alpha \frac{1}{\alpha!} f^{(\alpha)}(F_\varnothing) (F - F_\varnothing)^\alpha.
\end{equation}
The summation terminates as a finite sum,
since $\psi_x^2 = \bar\psi_x^2 = 0$ due to the anti-commut\-ative product.

We define the integral
$\int F$
of a differential form $F$ in the usual way
as the Riemann integral of its top-degree part
(which may be regarded as a function
of the boson field).
In particular, given a positive-definite
$\Lambda \times \Lambda$ symmetric matrix $C$
with inverse $A = C^{-1}$,
we define the \emph{Gaussian expectation}
(or \emph{super-expectation}) of $F$ by
\begin{equation}
\lbeq{ExCF}
\Ex_C F = \int e^{-S_A} F,
\end{equation}
where
\begin{equation}
\label{e:action}
S_A = \sum_{x\in\Lambda} \Big(\phi_x (A\bar\phi)_x + \psi_x (A \bar\psi)_x\Big).
\end{equation}

Finally, for $F = f(\phi, \bar\phi) \psi^{\vec y}$,
we let
\begin{equation}
\theta F = f(\phi + \xi, \bar\phi + \bar\xi) (\psi + \eta)^{\vec y},
\end{equation}
where $\xi$ is a new boson field, $\eta = (2\pi i)^{-1/2} d\xi$ a new fermion field,
and $\bar\xi, \bar\eta$ are the corresponding conjugate fields.
We extend $\theta$ to all $F \in \Ncal$ by linearity
and define the convolution operator $\Ex_C\theta$ by letting
$\Ex_C\theta F \in \Ncal$ denote the Gaussian expectation of $\theta F$ with respect
to $(\xi, \bar\xi, \eta, \bar\eta)$, with $\phi,\phib,\psi,\psib$ held fixed.


\subsection{Integral representation of the two-point function}
\label{sec:Gintrep}

An integral representation formula applying to general local time functionals
is given in \cite{BEI92,BIS09}; see also \cite[Appendix~A]{ST-phi4}.
We state the result we need in the proposition below.

Let $\Delta$ denote the Laplacian on $\Lambda$,
i.e.\ $\Delta_{xy}$ is given by the right-hand side of
\eqref{e:Deltaxy} for $x, y \in \Lambda$.
We define the differential forms:
\begin{align}
\tau_x
&= \phi_x \bar\phi_x
+ \psi_x   \bar\psi_x
\\
\label{e:addDelta}
\tau_{\Delta,x}
&=
\frac 12 \Big(
\phi_{x} (- \Delta \bar{\phi})_{x} + (- \Delta \phi)_{x} \bar{\phi}_{x} +
\psi_{x}  (- \Delta \bar{\psi})_{x} + (- \Delta \psi)_{x}  \bar{\psi}_{x}
\Big) \\
|\nabla \tau_x|^2
&= \sum_{e\in\Ucal} (\nabla^e \tau)_x^2.
\end{align}

\begin{prop}
Let $d > 0$ and $\beta > 0$. For $\gamma < \beta$ and $\nu \in \R$,
\begin{align}
G_{N,\beta,\gamma,\nu}(a, b)
&=  \int
    e^{-\sum_{x\in\Lambda}
    \left(
    U_{\beta,\gamma}(\tau)
    + \nu \tau_x + \tau_{\Delta,x}\right)} \bar\phi_a \phi_b
    .
    \label{e:Grep-pos-bis}
\end{align}
\end{prop}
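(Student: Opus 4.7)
The plan is to derive the formula as a direct application of the Brydges--Imbrie--Slade supersymmetric integral representation of local-time functionals \cite{BIS09} (with precursors in \cite{BEI92}). In its general form, that representation asserts that for any sufficiently smooth $F:\R^\Lambda\to\R$ for which both sides are absolutely convergent,
\begin{equation*}
\int_0^\infty E_a^{\Lambda_N}\bigl[F(L_T)\,\mathbf{1}_{X_T=b}\bigr] e^{-\nu T}\, dT
= \int F(\tau)\, e^{-\sum_{x\in\Lambda}(\tau_{\Delta,x} + \nu \tau_x)}\, \bar\phi_a \phi_b,
\end{equation*}
where the differential form $F(\tau)$ is interpreted via the Taylor-expansion convention of Section~\ref{sec:forms}, which terminates because $\psi_x^2=\bar\psi_x^2=0$. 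Applying this with $F(L)=\exp(-U_{\beta,\gamma}(L))$---which is smooth since $U_{\beta,\gamma}$ is a quadratic polynomial in the local-time field, with both a single-site piece and a nearest-neighbour piece---reproduces the right-hand side of the claimed identity after rearrangement. The left-hand side is $\int_0^\infty E_a^{\Lambda_N}[e^{-U_{\beta,\gamma,T}}\mathbf{1}_{X_T=b}] e^{-\nu T}\, dT$, which by \eqref{e:cN} and the definition of the finite-volume two-point function is exactly $G_{N,\beta,\gamma,\nu}(a,b)$.

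The only genuine work is to verify the convergence hypotheses needed to apply the representation, and this is where the restriction $\gamma<\beta$ enters. On the probabilistic side, the reformulation \refeq{V2} (when $0\leq\gamma<\beta$) together with $|\nabla L_T|^2\geq 0$, or the original form \refeq{V} together with $C_T\geq 0$ (when $\gamma<0$), shows that $U_{\beta,\gamma,T}\geq 0$ whenever $\gamma\in[0,\beta)$ and $U_{\beta,\gamma,T}\geq U_{\beta,0,T}$ whenever $\gamma<0$. In either case $e^{-U_{\beta,\gamma,T}}\leq 1$, and the $T$-integral converges for $\nu$ sufficiently large (in fact for $\nu>\nu_c(\beta,\gamma)$, by Lemma~\ref{lem:nuc}). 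On the differential-form side, the representation \refeq{Udef-pos} expresses the bosonic (zero-form) part of $U_{\beta,\gamma}(\tau)$ as $(\beta-\gamma)\sum_x|\phi_x|^4+\frac{\gamma}{4d}\sum_x|\nabla^e|\phi|^2_x|^2+\text{fermionic}$, which is nonnegative when $\gamma\in[0,\beta)$ and bounded below by $\beta\sum_x|\phi_x|^4$ up to fermionic terms when $\gamma<0$; combined with the Gaussian factor coming from $\tau_{\Delta,x}$, this provides the quartic decay in $\phi$ needed for absolute convergence of the super-integral (the $\nu\tau$ term contributes a further quadratic factor, harmless for the bounds).

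The main---and essentially only---obstacle is thus matching up the hypotheses of the BIS formula with the potential $U_{\beta,\gamma}$, since BIS is most cleanly stated for single-site functionals whereas here the self-contact term is a nearest-neighbour coupling. However the representation in \cite{BIS09} applies to arbitrary smooth functions of the local-time field, so the nearest-neighbour structure is no obstruction. Once the identity is established for $\nu$ large, extension to all $\nu\in\R$ follows by monotone convergence: both sides are monotone decreasing in $\nu$ and may equal $+\infty$ simultaneously, consistent with the convention that \refeq{Grep-pos-bis} asserts an equality in $[0,+\infty]$.
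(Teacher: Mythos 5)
Your proposal is correct and takes essentially the same route as the paper: the paper's proof is the one-line observation that the result is a special case of the supersymmetric integral representation from \cite{BIS09} (as organised in \cite[Appendix~A]{ST-phi4}), applied with $F(S)=e^{-U_{\beta,\gamma}(S)-(\nu-1)\sum_x S_x}$. Your elaboration of the convergence hypotheses (via \refeq{V2} when $0\le\gamma<\beta$ and via nonnegativity of $C_T$ when $\gamma<0$) is a reasonable unpacking of what the citation implicitly covers, but does not constitute a different method.
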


\begin{proof}
The proof is identical to the proof of the $p = 1$ case of
\cite[Proposition~\ref{phi4-prop:Integral-Representation}]{ST-phi4}
when, in the notation used in \cite{ST-phi4}, we set
\begin{equation}
F(S) = e^{-U_{\beta,\gamma}(S) - (\nu - 1) \sum_{x\in\Lambda} S_x}
\end{equation}
in \cite[\eqref{phi4-e:intrep1}]{ST-phi4}.
\end{proof}

\subsection{Gaussian approximation}
\label{sec:Gauss-approx}

We divide
the integral in \eqref{e:Grep-pos-bis} into
a Gaussian part and a perturbation.  Although the division is arbitrary here,
a careful choice of the division must be made, and it is made in Theorem~\ref{thm:flow-flow}.
We require several definitions.
Let $z_0>-1$ and $m^2 >0$. We set
\begin{equation}
\label{e:gg0}
g_0 = (\beta - \gamma) (1 + z_0)^2,
\quad
\nu_0 = \nu (1 + z_0) - m^2,
\quad
\gamma_0 = \frac{1}{4d} \gamma (1 + z_0)^2,
\end{equation}
and define
\begin{equation}
\label{e:V0def}
  V^+_{0,x}
  = g_0\tau_x^2 + \nu_0 \tau_x + z_0 \tau_{\Delta,x},
  \quad
  U^+_x = |\nabla \tau_x|^2.
\end{equation}
The monomial $U^+_x$ should not be confused with
the potential $U_{\beta,\gamma}$.
We define
\begin{equation}
\label{e:Z0def}
  Z_0
  =
  \prod_{x\in \Lambda} e^{-(V^+_{0,x} + \gamma_0 U^+_x)},
\end{equation}
and, with $C = (-\Delta + m^2)^{-1}$ and with the expectation given by \refeq{ExCF},
\begin{equation}
\label{e:ZNdef}
Z_N = \Ex_C \theta Z_0.
\end{equation}

Recall that $Z_{N,\varnothing}$ denotes the $0$-degree part of $Z_N$.
We define a test function $\1: \Lambda_N \to \R$ by $\1_x=1$ for all $x$,
and write $D^2 Z_{N,\varnothing}(0, 0; \1, \1)$ for the directional derivative of
$Z_{N,\varnothing}$
at $(\phi, \bar\phi) = (0, 0)$, with both directions equal to $\1$.
That is,
\begin{equation}
D^2 Z_{N,\varnothing}(0, 0; \1, \1)
  =
\ddp{^2}{s\partial t} Z_{N,\varnothing}(s \1, t\1)\big|_{s=t=0}.
\end{equation}

\begin{prop}
Let $d > 0$, $\gamma, \nu \in \R$, $\beta >0$ and $\gamma <\beta$.
If the relations \eqref{e:gg0} hold, then
\label{prop:intrep}
\begin{equation}
\label{e:GG2}
G_{N,\beta,\gamma,\nu}(a,b)
    =
(1+z_0)
\Ex_C (Z_0 \bar\phi_a \phi_b),
\end{equation}
and
\begin{equation}
\label{e:chichibar}
  \chi_N\left(\beta,\gamma,\nu\right)
  = (1+z_0)\hat\chi_N(m^2, g_0, \gamma_0, \nu_0, z_0)
  ,
\end{equation}
with
\begin{equation}
  \label{e:chibarm}
  \hat\chi_N(m^2, g_0, \gamma_0, \nu_0, z_0)
  = \frac{1}{m^2}  + \frac{1}{m^4} \frac{1}{|\Lambda|} D^2 Z_{N,\varnothing}(0, 0; \1, \1).
\end{equation}
\end{prop}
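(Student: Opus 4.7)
The plan is to derive both \eqref{e:GG2} and \eqref{e:chichibar} from the integral representation \eqref{e:Grep-pos-bis} by combining a supersymmetric rescaling of the fields with Gaussian integration by parts.

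For \eqref{e:GG2}, I would apply the change of variables $\phi\mapsto(1+z_0)^{1/2}\phi$ (and analogously for $\bar\phi,\psi,\bar\psi$) in \eqref{e:Grep-pos-bis}. Supersymmetry of the integral ensures that the bosonic Jacobian is cancelled by the corresponding scaling of the fermion differentials, so no overall factor appears. Under the rescaling, $\tau_x$ and $\tau_{\Delta,x}$ each pick up a factor of $1+z_0$, $|\nabla\tau_x|^2$ picks up $(1+z_0)^2$, and the observable $\bar\phi_a\phi_b$ picks up $1+z_0$. Substituting \eqref{e:Udef-pos} for $U_{\beta,\gamma}(\tau)$ together with the definitions \eqref{e:gg0}, the transformed exponent becomes $\sum_x(V^+_{0,x}+\gamma_0 U^+_x) + S_A$ with $A = -\Delta+m^2 = C^{-1}$; this identification is forced by the splittings $(1+z_0)\tau_{\Delta,x} = z_0\tau_{\Delta,x}+\tau_{\Delta,x}$ and $\nu(1+z_0)\tau_x = \nu_0\tau_x+m^2\tau_x$ (which fix $\nu_0$) and by comparing \eqref{e:addDelta} with \eqref{e:action}. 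Recognizing $Z_0$ and invoking \eqref{e:ExCF} yields \eqref{e:GG2}.

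For \eqref{e:chichibar}, I would sum \eqref{e:GG2} over $b\in\Lambda$ and use translation invariance on the torus to write $\chi_N = \frac{1+z_0}{|\Lambda|}\sum_{a,b}\Ex_C(Z_0\bar\phi_a\phi_b)$. To connect the double sum with $D^2 Z_{N,\varnothing}(0,0;\1,\1)$, I note that $Z_{N,\varnothing}(\phi,\bar\phi) = Z_N|_{\psi=\bar\psi=0} = \Ex_C[Z_0(\phi+\xi,\bar\phi+\bar\xi,\eta,\bar\eta)]$, so differentiating twice in the direction $\1$ yields $D^2 Z_{N,\varnothing}(0,0;\1,\1) = \Ex_C[\sum_{a,b}\partial_{\phi_a}\partial_{\bar\phi_b} Z_0]$. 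Two applications of Gaussian integration by parts ($\Ex_C(\partial_{\phi_a}F) = \sum_c A_{ac}\Ex_C(\bar\phi_c F)$ and its conjugate), combined with the identity $A\1 = m^2\1$, then produce $D^2 Z_{N,\varnothing}(0,0;\1,\1) = m^4\sum_{a,b}\Ex_C(Z_0\bar\phi_a\phi_b) - m^2|\Lambda|\Ex_C(Z_0)$.

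The final ingredient is the supersymmetric identity $\Ex_C(Z_0) = Z_{N,\varnothing}(0,0) = 1$, which holds because $Z_0$ is built from the SUSY-invariant quantities $\tau_x$, $\tau_{\Delta,x}$, $|\nabla\tau_x|^2$ (all vanishing at the zero configuration) and $Z_0|_{\text{fields}=0}=1$; SUSY localization then forces the Gaussian expectation to equal the value at zero fields. Combining this with the previous display and substituting into the susceptibility formula produces \eqref{e:chichibar} with $\hat\chi_N$ as in \eqref{e:chibarm}. The main potential obstacle is the bookkeeping in the double integration by parts, which requires careful tracking of signs and the identification $A\1 = m^2\1$; the two supersymmetric cancellations (trivial Jacobian under rescaling and $\Ex_C(Z_0)=1$) are standard in this framework and can be imported essentially as black boxes from the $p=1$ case of the analogous integral representation used in \cite{ST-phi4}.
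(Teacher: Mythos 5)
Your proposal is correct and follows essentially the same approach as the paper: the rescaling $\varphi \mapsto (1+z_0)^{1/2}\varphi$ in the integral representation yields \eqref{e:GG2}, and summing over $b$ gives the susceptibility. For \eqref{e:chichibar} the paper simply defers to the analogous computation in \cite{BBS-saw4-log}, whereas you spell out the double Gaussian integration by parts (using $C\1 = m^{-2}\1$) together with the supersymmetric identity $\Ex_C(Z_0)=1$; this is precisely the ``elementary computation'' the paper invokes, and your bookkeeping is correct.
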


\begin{proof}
We make the change of variables
$\varphi_x \mapsto (1 + z_0)^{1/2} \varphi_x$ (with $\varphi = \phi, \bar\phi, \psi, \bar\psi$)
in \eqref{e:Grep-pos-bis}, and obtain
\begin{align}
    G_{N,\beta,\gamma,\nu}(a, b)
    &=  (1+z_0)
    \int
    e^{-\sum_{x\in\Lambda}
    \left(
    g_0 \tau_x^2 + \gamma_0 |\nabla \tau_x|^2
    + \nu (1+z_0) \tau_x + (1+z_0)\tau_{\Delta,x}\right)} \bar\phi_a \phi_b
    .
    \label{e:Grep-pos}
\end{align}
Then, for any $m^2 \in\R$, we have
\begin{equation}
\lbeq{GNint}
G_{N,\beta,\gamma,\nu}(a, b)
    =
(1 + z_0) \int
e^{-\sum_{x\in\Lambda} (\tau_{\Delta,x} + m^2 \tau_x)}
Z_0 \bar\phi_a \phi_b
\end{equation}
($m^2$ simply cancels with $\nu_0$ on the right-hand side).
We use this with $m^2>0$, so that the inverse matrix $C=(-\Delta+m^2)^{-1}$ exists.
By symmetry of the matrix $\Delta$, \refeq{action} gives
\begin{equation}
\label{e:SAtauDelta}
S_{(-\Delta+m^2)}
=
\sum_{x\in\Lambda} \left( \tau_{\Delta,x}
+ m^2  \tau_x \right).
\end{equation}
Then \eqref{e:GG2} follows from \refeq{GNint}--\eqref{e:SAtauDelta} and \refeq{ExCF}.  Summation
over $b\in \Lambda_N$ gives the formula $\chi_N(\beta,\gamma,\nu) = (1+z_0)\sum_{x\in \Lambda} \Ex_C
(Z_0\phib_0\phi_x)$.  Then \refeq{chichibar}, with \refeq{chibarm}, follows
by an elementary computation as in \cite[Section~\ref{log-sec:ga}]{BBS-saw4-log}.
\end{proof}

\subsection{Progressive integration}
\label{sec:prog}

The identity \eqref{e:GG2} splits the two-point function into a Gaussian part
and a perturbation $Z_0$. The Gaussian part is parametrised by $(m^2, z_0)$,
although the dependence on $z_0$ has been shifted out
of the integral.
We analyse the integral \refeq{GG2} using the
renormalisation group method developed in
\cite{BS-rg-norm,BS-rg-loc,BBS-rg-pt,BS-rg-IE,BS-rg-step}, which is itself inspired by \cite{WK74}.
This method is based on a decomposition
\begin{equation}
C = C_1 + \cdots + C_{N-1} + C_{N,N},
\end{equation}
of the covariance $C$ used to define $Z_N$ in \eqref{e:ZNdef},
where $C_1, \ldots, C_{N-1}, C_{N,N}$ are covariances.
For simplicity, we write $C_N = C_{N,N}$.
A \emph{finite-range} decomposition of this sort was constructed in \cite{Baue13a,BGM04}.
Specifically, we use the decomposition of \cite{Baue13a}.

The covariance decomposition allows us to evaluate $Z_N$ progressively
by defining inductively
\begin{equation}
\label{e:Zjdef}
Z_{j+1} = \Ex_{C_{j+1}}\theta Z_j \quad (j < N).
\end{equation}
It is a basic fact that a sum of two independent
Gaussian random variables with covariances $C'$ and $C''$
is itself Gaussian with covariance $C' + C''$. By
\cite[Proposition~\ref{norm-prop:conv}]{BS-rg-norm}, this property
extends to the Gaussian super-expectation in the sense that
\begin{equation}
\Ex_C\theta = \Ex_{C_N}\theta \circ \ldots \circ \Ex_{C_1}\theta.
\end{equation}
Thus, the definition of $Z_{j+1}$ in \eqref{e:Zjdef} agrees
with \eqref{e:ZNdef} when ${j+1} = N$.

From the perspective of the renormalisation group,
we view the map $Z_j \mapsto Z_{j+1}$ as defining a dynamical system.
The evaluation of $Z_N$ can be accomplished by studying this system's
dependence on its initial condition, as we discuss in the next section.


\section{Initial coordinates for the renormalisation group}
\label{sec:K0bd}

Following the approach of \cite{BBS-saw4-log},
we represent $Z_j$
by a pair of coordinates $I_j$ and $K_j$ that capture the \emph{relevant} (expanding),
\emph{marginal},
and \emph{irrelevant} (contracting) parts of $Z_j$.
We begin in Section~\ref{sec:IK} by defining coordinates $(I_0, K_0)$ for $Z_0$.
Norms used to control the evolution of these coordinates are introduced
in Section~\ref{sec:norms}, and it is shown in
Sections~\ref{sec:K0bds}--\ref{sec:KWcal} that $K_0$ satisfies norm estimates
that permit the results of \cite{BS-rg-step,BBS-rg-flow} to be applied.
The initial coordinate $K_0$
depends on the coupling constants $(g_0, \gamma_0, \nu_0, z_0)$
of \eqref{e:gg0}, and regularity of $K_0$ as a function of these variables
is established in Section~\ref{sec:Ksmooth}.


\subsection{Initial coordinates for the renormalisation group}
\label{sec:IK}

We now divide $Z_0$ into coordinates $I_0$ and $K_0$.
The division depends on the sign of $\gamma$.

\subsubsection{Coordinates for positive \texorpdfstring{$\gamma$}{gamma}}
\label{sec:IKplus}

Assume that $\gamma \ge 0$.
For $X \subset \Lambda$, we define
\begin{equation}
  \label{e:IK0def}
    I_0^+(X) = \prod_{x\in X} e^{-V^+_{0,x}},
    \quad\quad
    K_0^+(X) = \prod_{x \in X} I_{0,x}^+ (e^{-\gamma_0 U^{+}_{x}} - 1).
\end{equation}
Here, $I^+_{0,x} = I^+_0(\{x\})$, and we usually denote evaluation at a singleton
by a subscript.
By definition and binomial expansion,
\begin{equation}
\label{e:Z00}
  Z_0
  =
  \prod_{x\in \Lambda} \left( I^+_{0,x} + K^+_{0,x} \right)
  =
  \sum_{X\subset \Lambda} I_0^+(\Lambda \setminus X) K_0^+(X)
  .
\end{equation}
This \emph{polymer gas representation} of $Z_0$ extends a much simpler representation
used to study the weakly self-avoiding walk previously, e.g., in \cite{BBS-saw4,BBS-saw4-log}.
In particular, when $\gamma_0 = 0$,
\begin{equation}
\lbeq{K0trivial}
K_0^+(X) = \1_\varnothing(X) =
\begin{cases}
1 & X = \varnothing \\
0 & \text{otherwise,}
\end{cases}
\end{equation}
and \eqref{e:Z00} agrees with
\cite[\eqref{log-e:Zinit}]{BBS-saw4-log}.
Thus the effect of nonzero $\gamma_0$ is incorporated entirely into the non-trivial
$K_0^+$ of \refeq{IK0def}, rather than \refeq{K0trivial}.

Then
$(V^+_0, K_0^+)$ can be viewed as the initial condition of the
dynamical system \refeq{Zjdef}.
This initial condition is \emph{not} uniquely defined as a function of
$(\beta, \gamma, \nu)$. Rather, the constraints \eqref{e:gg0} leave us with the freedom to choose
$\nu_0$ and $z_0$ as we please. The key to the success of the renormalisation group
method is the identification of \emph{critical} values $\nu_0^c, z_0^c$ that lie on
a stable manifold for the \emph{Gaussian fixed point} $(V_0, K_0) = 0$.
The existence of the stable manifold,
which is a highly non-trivial fact, is obtained using the main result of \cite{BBS-rg-flow}.
This result allows for the possibility that $K_0^+$ is non-zero as long as $\|K^+_0\| = O(g_0^3)$
in an appropriate norm.
We take advantage of this additional generality
in order to prove Theorem~\ref{thm:suscept}.


\subsubsection{Coordinates for negative \texorpdfstring{$\gamma$}{gamma}}

Assume that $\gamma <0$.
Define
\begin{equation}
V^-_{0,x}
    =
V^+_{0,x} + 4 d \gamma_0 \tau_x^2,
\quad
U^-_x = 2 \sum_{e\in\Ucal} \tau_x \tau_{x+e}.
\end{equation}
By the identity
\begin{equation}
\sum_{x\in\Lambda}
\Big(
  g_0 \tau_x^2 + \gamma_0 \sum_{e\in\Ucal} (\nabla^e \tau_x)^2
\Big)
  =
\sum_{x\in\Lambda}
\Big(
  (g_0 + 4d \gamma_0) \tau_x^2 - 2 \gamma_0 \sum_{e\in\Ucal} \tau_x \tau_{x+e}
\Big),
\end{equation}
we can write
\begin{equation}
Z_0 = \prod_{x\in\Lambda} (I^-_{0,x} +  K^-_{0,x}) = \sum_{X\subset\Lambda} I^-_0(\Lambda \setminus X) K^-_0(X),
\end{equation}
with
\begin{equation}
\label{e:IK0min}
I^-_0(X) = \prod_{x\in X} e^{-V^-_{0,x}},
\quad\quad
K^-_0(X) = \prod_{x\in X} I^-_{0,x} (e^{\gamma_0 U^-_x} - 1).
\end{equation}
Thus, we can parametrise $Z_0$ via either pair $(I^\pm_0, K^\pm_0)$.
We use $(I^+_0, K^+_0)$ when $\gamma_0 \geq 0$
and $(I^-_0, K^-_0)$ when $\gamma_0 < 0$.
With this convention,
\begin{equation}
\label{e:Kpm}
K^\pm_0(X) = \prod_{x\in X} I^\pm_{0,x} (e^{-|\gamma_0| U^\pm_x} - 1)
\quad \text{(use $+$ for $\gamma_0 \ge 0$, use $-$ for $\gamma_0<0$)}.
\end{equation}


\subsection{Norms}
\label{sec:norms}

In this section, we recall some definitions and basic facts concerning norms,
from \cite{BS-rg-norm}.
For now, we only consider the case of scale $j = 0$.

Recall the notation introduced in Section~\ref{sec:forms}.
A \emph{test function} $g$ is defined to be a function $(\vec x, \vec y) \mapsto g_{\vec x,\vec y}$,
where $\vec x$ and $\vec y$ are finite sequences of elements in $\Lambda \sqcup \bar\Lambda$.
When $\vec x$ or $\vec y$ is the empty sequence $\varnothing$,
we drop it from the notation as long as this causes no confusion;
e.g., we may write $g_{\vec x} = g_{\vec x,\varnothing}$.
The length of a sequence $\vec x$ is denoted $|\vec x|$.
Gradients of test functions are defined component-wise.
Thus, if $\vec x = (x_1, \ldots, x_m)$
and $\alpha = (\alpha_1, \ldots, \alpha_m)$
with each $\alpha_i \in \N_0^\Ucal$, and similarly for $\vec y=(y_1,\ldots,y_n)$ and
$\beta=(\beta_1,\ldots,\beta_n)$,
then
\begin{equation}
\nabla^{\alpha,\beta}_{\vec x,\vec y} g_{\vec x,\vec y}
  =
\nabla^{\alpha_1}_{x_1} \ldots \nabla^{\alpha_m}_{x_m}
\nabla^{\beta_1}_{y_1} \ldots \nabla^{\beta_n}_{y_n}  g_{x_1,\ldots,x_m,y_1,\ldots,y_n}.
\end{equation}

Let $\h_0 > 0$
be a parameter, which we set below.
We fix positive constants $p_\Phi\ge 4$ and
$p_\Ncal$
and assume that all test functions
vanish when $|\vec x|  +|\vec y| > p_\Ncal$.
For Theorem~\ref{thm:suscept}(i-ii), any choice of $p_\Ncal \ge 10$ is sufficient,
whereas for Theorem~\ref{thm:suscept}(iii) it is necessary to choose $p_\Ncal$ large
depending on $p$ \cite{BSTW-clp}.
The $\Phi = \Phi(\h_0)$ norm on such test functions is defined by
\begin{equation}
\|g\|_\Phi
  =
    \sup_{\vec x, \vec y} \h_0^{-(|\vec x| +|\vec y|)}
    \shift\shift \sup_{\alpha,\beta: |\alpha|_1+|\beta|_1 \le p_\Phi}
    |\nabla^{\alpha,\beta} g_{\vec x, \vec y}|,
\end{equation}
where $|\alpha|_1$ denotes the total order of the differential operator $\nabla^\alpha$.
Thus, for any test function $g$ and for sequences
$\vec x, \vec y$ with $|\vec x| +|\vec y| \leq p_\Ncal$ and
corresponding $\alpha, \beta$ with $|\alpha|_1 + |\beta|_1 \leq p_\Phi$,
\begin{equation}
\label{e:testfcnbd}
|\nabla^{\alpha,\beta} g_{\vec x,\vec y}| \leq \h_0^{|\vec x| + |\vec y|} \|g\|_\Phi.
\end{equation}

For any $F \in \Ncal$,
there exist \emph{unique} functions $F_{\vec y}$ of $(\phi, \bar\phi)$
that are anti-symmetric under permutations of $\vec y$, such that
\begin{equation}
F = \sum_{\vec y} \frac{1}{|\vec y|!} F_{\vec y}(\phi, \bar\phi) \psi^{\vec y}.
\end{equation}
Given a sequence $\vec{x}$ with $|\vec{x}| = m$, we define
\begin{equation}
F_{\vec x, \vec y} = \ddp{^m F_{\vec y}}{\phi_{x_1} \ldots \partial\phi_{x_m}}.
\end{equation}
We define a $\phi$-dependent pairing of elements of $\Ncal$ with test functions, by
\begin{equation}
\langle F, g \rangle_\phi
  =
\sum_{\vec x, \vec y} \frac{1}{|\vec x|! |\vec y|!} F_{\vec x,\vec y}(\phi, \bar\phi) g_{\vec x,\vec y}.
\end{equation}
Let $B(\Phi)$ denote the unit $\Phi$-ball
in the space of test functions. Then the
$T_\phi = T_\phi(\h_0)$ semi-norm on $\Ncal$
is defined by
\begin{equation}
\|F\|_{T_\phi} = \sup_{g\in B(\Phi)} |\langle F, g \rangle_\phi|.
\end{equation}

We need several properties of the $T_\phi$ semi-norm,
whose proofs can be found in \cite{BS-rg-norm}.
First, there is the important \emph{product property}
\cite[Proposition~\ref{norm-prop:prod}]{BS-rg-norm}
\begin{equation}
\label{e:prod}
\|F G\|_{T_\phi} \leq \|F\|_{T_\phi} \|G\|_{T_\phi}.
\end{equation}
An immediate consequence is that $\|e^{-F}\|_{T_\phi} \leq e^{\|F\|_{T_\phi}}$.
This is improved in \cite[Proposition~\ref{norm-prop:eK}]{BS-rg-norm},
which states that (recall that $F_\varnothing$ denotes the $0$-degree part of $F$)
\begin{equation}
\label{e:eK}
\|e^{-F}\|_{T_\phi} \leq e^{-2 {\rm Re} F_\varnothing(\phi) + \|F\|_{T_\phi}}.
\end{equation}

Each of the two choices $\varphi = \phi, \bar\phi$
can be viewed as a test function supported on sequences with
$|\vec x| = 1$ and $|\vec y| = 0$
and satisfying $\varphi_{\bar x} = \bar\varphi_x$.
In particular, $\|\phi\|_\Phi$ is defined as the norm of a test function.
We use \cite[Proposition~\ref{norm-prop:T0K}]{BS-rg-norm},
which states that if $F \in \Ncal$ is a polynomial in $\phi,\phib,\psi,\psib$ of
total degree $A \leq p_\Ncal$, then
\begin{equation}
\label{e:T0K}
\|F\|_{T_\phi} \leq \|F\|_{T_0} (1 + \|\phi\|_\Phi)^A.
\end{equation}

We write $x^\Box = \{y: |y-x|_\infty \le 2^d-1\}$,
where $|x|_\infty = \max\{|x_i| : 1 \le i \le d\}$
(this is the scale-0 version
of \cite[\eqref{IE-e:ssn}]{BS-rg-IE} for a single point).
The $\Phi_x \equiv \Phi(x^\square)$ norm of $\phi \in \C^\Lambda$ is defined by
\begin{equation}
\|\phi\|_{\Phi_x}
  =
\inf
\left\{
  \|\phi - f\|_\Phi : f \in \C^\Lambda \text{ such that } f_y = 0 \;\forall y \in x^\square
\right\}
.
\end{equation}
By taking the infimum in \eqref{e:T0K} over all possible
re-definitions of $\phi_y$ for $y \notin x^\square$, we get
\begin{equation}
\label{e:T0Kx}
\|F\|_{T_\phi}
  \leq
\|F\|_{T_0} (1 + \|\phi\|_{\Phi_x})^A
\end{equation}
when $F \in \Ncal(x^\square)$.

We need two choices of the parameter $\h_0$ (for both choices, $\h_0 \ge 1$):
either $\h_0 = \ell_0$, an $L$-dependent constant;
or $\h_0 = h_0 = k_0 \ggen_0^{-1/4}$, where $k_0$ is a small constant and
$\ggen_0$ is a constant which must be chosen small depending on $L$.
Some discussion of these constants occurs in the
proof of Proposition~\ref{prop:K0bd}.
In \cite{BS-rg-IE}, two \emph{regulators} are defined.
At scale $0$, these are given by
\begin{equation}
\lbeq{regdef}
G_0(x, \phi)
  = e^{\|\phi\|^2_{\Phi_x(\ell_0)}},
  \qquad
\tilde G_0(x, \phi)
  =
e^{\frac{1}{2} \|\phi\|^2_{\tilde\Phi_x(\ell_0)}}.
\end{equation}
The $\tilde \Phi_x$ norm in the definition of $\tilde G_0$,
is defined in \cite[\eqref{IE-e:Phitilnorm}]{BS-rg-IE};
it is a modification of the $\Phi_x$ norm that is invariant under shifts by
linear test functions.  Its specific properties do not play a direct role  in this paper.
Two regulator norms are defined for $F \in \Ncal(x^\square)$ by
\begin{equation}
\lbeq{reg0}
    \|F\|_{G_0} = \sup_{\phi\in\C^\Lambda} \frac{\|F\|_{T_\phi(\ell_0)}}{G_0(x,\phi)}
    , \quad
    \|F\|_{\tilde{G}^{\sf t}_0} = \sup_{\phi\in\C^\Lambda} \frac{\|F\|_{T_\phi(h_0)}}{\tilde{G}^{\sf t}_0(x,\phi)}
    ,
\end{equation}
where ${\sf t} \in (0, 1]$ is a constant power.


\subsection{Bounds on \texorpdfstring{$K_0$}{K0}}
\label{sec:K0bds}

The main estimate on $K^\pm_{0,x}$ is given by the following proposition.
Consistent with \cite[\eqref{IE-e:DV1-bis}]{BS-rg-IE}, we
fix a large constant $C_\DV$ and define
\begin{equation}
\label{e:DV0}
    \DV_0 = \DV_0(\ggen_0) = \{(g,\nu,z) \in \R^3 : C_{\DV}^{-1}\ggen_0 < g < C_{\DV}\ggen_0,
    \; |\nu|,|z| < C_{\DV}\ggen_0\}.
\end{equation}

\begin{prop}
\label{prop:K0bd}
Suppose that $V^\pm_0 \in \DV_0$, with $\ggen_0$ sufficiently small.
If $|\gamma_0| \leq  \ggen_0$, then
(with constants that may depend on $L$)
\begin{equation}
\lbeq{K0bds}
\|K^\pm_{0,x}\|_{G_0} = O(|\gamma_0|),
\quad
\|K^\pm_{0,x}\|_{\tilde G_0} = O(|\gamma_0|/g_0),
\end{equation}
where the bounds on $K^+$ and $K^-$ hold for $\gamma_0 \geq 0$
and $\gamma_0 < 0$, respectively.
\end{prop}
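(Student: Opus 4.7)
The plan is to factorise $K^\pm_{0,x} = I^\pm_{0,x}\bigl(e^{-|\gamma_0|U^\pm_x}-1\bigr)$, apply the product property \eqref{e:prod} of the $T_\phi$ semi-norm to split these two factors, and extract a factor of $|\gamma_0|$ from the difference-of-exponentials factor while absorbing the exponential of $V^\pm_{0,x}$ into the regulator.

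For the difference of exponentials, I would use the integral representation $e^{-F}-1 = -\int_0^1 F\,e^{-sF}\,ds$ with $F = |\gamma_0|U^\pm_x$ together with \eqref{e:prod} and \eqref{e:eK} to obtain
\[
\|e^{-|\gamma_0|U^\pm_x}-1\|_{T_\phi} \le |\gamma_0|\,\|U^\pm_x\|_{T_\phi}\sup_{s\in[0,1]} e^{-2s|\gamma_0|{\rm Re}\,U^\pm_{x,\varnothing}(\phi) + s|\gamma_0|\|U^\pm_x\|_{T_\phi}}.
\]
The crucial observation is that the bosonic part $U^\pm_{x,\varnothing}(\phi)$ is pointwise non-negative: for $U^+_x = |\nabla\tau_x|^2$ it equals $\sum_{e\in\Ucal}(|\phi_{x+e}|^2-|\phi_x|^2)^2$, and for $U^-_x = 2\sum_{e\in\Ucal}\tau_x\tau_{x+e}$ it equals $2\sum_{e\in\Ucal}|\phi_x|^2|\phi_{x+e}|^2$, so the negative-real-part term in the exponent may be dropped. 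Since $U^\pm_x$ is a polynomial of degree $4$ supported on $x^\square$, \eqref{e:T0Kx} gives $\|U^\pm_x\|_{T_\phi} \le c\,\h_0^4(1+\|\phi\|_{\Phi_x})^4$. The two choices $\h_0 = \ell_0$ (an $L$-dependent constant) for $G_0$ and $\h_0 = h_0 \asymp \ggen_0^{-1/4}$ for $\tilde G_0$ are precisely what produce the factors $O(|\gamma_0|)$ and $O(|\gamma_0|/g_0)$ respectively, because at $\phi = 0$ one has $|\gamma_0|\|U^\pm_x\|_{T_0(\ell_0)} = O(|\gamma_0|)$ whereas $|\gamma_0|\|U^\pm_x\|_{T_0(h_0)} \asymp |\gamma_0|/\ggen_0$.

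For the first factor I would use $\|I^\pm_{0,x}\|_{T_\phi} \le e^{-2{\rm Re}\,V^\pm_{0,x,\varnothing}(\phi) + \|V^\pm_{0,x}\|_{T_\phi}}$ from \eqref{e:eK}. The hypothesis $V^\pm_0 \in \DV_0$ bounds the couplings by $\ggen_0$ and gives $g_0 \asymp \ggen_0$, while the quartic coefficient $g_0 + 4d\gamma_0$ of $V^-_0$ remains comparable to $g_0$ when $|\gamma_0| \le \ggen_0$. Combining the two factor estimates, dividing by the appropriate regulator, and taking the supremum over $\phi$ reduces the proposition to stability estimates of the form
\[
\sup_\phi (1+\|\phi\|_{\Phi_x})^4 \exp\!\Bigl\{-2{\rm Re}\,V^\pm_{0,x,\varnothing}(\phi) + c|\gamma_0|\h_0^4(1+\|\phi\|_{\Phi_x})^4 - c'\|\phi\|^2_{\Phi_x(\ell_0)}\Bigr\} < \infty,
\]
which hold because ${\rm Re}\,V^\pm_{0,x,\varnothing}$ contains a positive $g_0|\phi_x|^4$ term that dominates both the new quartic perturbation and the quadratic regulator contribution.

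The main obstacle is this final domination step. For $G_0$ the quartic perturbation has small prefactor $c|\gamma_0|\ell_0^4 = O(\ggen_0)$ (with an $L$-dependent constant) and is trivially absorbed. For $\tilde G_0$, however, the prefactor $c|\gamma_0|h_0^4 \asymp |\gamma_0|/\ggen_0$ is of order one, so one must verify that the coefficient $g_0 \asymp \ggen_0$ of $|\phi_x|^4$ in ${\rm Re}\,V^\pm_{0,x,\varnothing}$ strictly exceeds $c|\gamma_0|h_0^4$ in this exponent; this is a quantitative quartic-stability estimate of the same form as those already developed for $\gamma = 0$ in \cite{BS-rg-IE}, and the extension here requires only keeping track of the extra term $4d\gamma_0\tau_x^2$ appearing in $V^-_0$ and choosing the constant $C_\DV$ in \eqref{e:DV0} appropriately.
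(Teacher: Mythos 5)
Your factorisation, the use of the integral representation for $e^{-|\gamma_0|U^\pm_x}-1$, and the observation that $U^\pm_{x,\varnothing}\ge 0$ are all correct, and the overall skeleton matches the paper's. But there is a genuine gap in the treatment of $\|U^\pm_x\|_{T_\phi}$, and it is precisely the point your last paragraph tries to wave past.

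You bound $\|U^\pm_x\|_{T_\phi(\h_0)}\le c\,\h_0^4(1+\|\phi\|_{\Phi_x})^4$ via \eqref{e:T0Kx} and then \emph{discard} the helpful term $-2s|\gamma_0|U^\pm_{x,\varnothing}$ because it is negative. The result is a quartic perturbation $+c|\gamma_0|\h_0^4(1+\|\phi\|_{\Phi_x})^4$ in the exponent, quartic in the \emph{local norm} $\|\phi\|_{\Phi_x}$. This cannot be dominated by $g_0|\phi_x|^4$ from $\mathrm{Re}\,V^\pm_{0,x,\varnothing}$, because the two quantities involve different field values: take $\phi_x=0$ and $\phi_{x+e}$ large, so $\mathrm{Re}\,V^\pm_{0,x,\varnothing}=0$ while $\|\phi\|_{\Phi_x}$ is large. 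Nor can it be dominated by the regulator, which gives only \emph{quadratic} decay $e^{-c'\|\phi\|^2_{\Phi_x}}$ (this is equally true of the sharper stability estimate \cite[\eqref{IE-e:Iupper-a}--\eqref{IE-e:Iupper-b}]{BS-rg-IE}, which also gives only quadratic decay in $\|\phi\|_{\Phi_x(h_0)}$). So the supremum over $\phi$ in your final display is $+\infty$ for $\h_0=h_0$, and your reduction does not close.

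The missing idea is that the term $-2s|\gamma_0|U^\pm_{x,\varnothing}$ must be \emph{kept}, together with a matching refined upper bound on the semi-norm. The paper's Lemma~\ref{lem:FFnull-loc} shows
\[
\|U^\pm_x\|_{T_\phi(\h_0)}\le 2U^\pm_{\varnothing,x}+\tilde C\,\h_0^4\bigl(1+\|\phi\|^2_{\Phi_x(\h_0)}\bigr),
\]
so that in \eqref{e:eK} the $-2sU^\pm_{\varnothing,x}$ cancels the $+2sU^\pm_{\varnothing,x}$ exactly and the remainder grows only quadratically in $\|\phi\|_{\Phi_x}$, matching the regulator decay. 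The naive polynomial estimate $(1+\|\phi\|_{\Phi_x})^4$ throws this structure away. The non-negativity of $U^\pm_{x,\varnothing}$ that you noted is correct, but dropping the negative term is wasteful in exactly the regime that matters.
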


The form of the estimates \refeq{K0bds} can be anticipated from the definition of
$K_0^\pm$ in \refeq{Kpm}.  The upper bound arises from the small size of
$e^{-|\gamma_0|U_x^\pm}-1$.  For small fields, hence small $U_x^\pm$, this is of order $|\gamma_0|$,
as reflected by the $G_0$ norm estimate of \eqref{e:K0bds}.
For large fields, namely fields of size $|\phi| \approx \ggen_{0}^{-1/4}$, the difference
$e^{-|\gamma_0|U_x^\pm}-1$ is roughly of size $|\gamma_0|\,|\phi|^4 \approx |\gamma_0|/\ggen_0$.
This effect is measured by the $\tilde G_0$ norm.

Before proving the proposition, we
write \refeq{Kpm} for a singleton as
\begin{equation}
K^\pm_{0,x} = I^\pm_{0,x} J^\pm_x
  \label{e:KIJ},
\end{equation}
where, by the fundamental theorem of calculus,
\begin{align}
    I^\pm_{0,x} &= e^{-V^\pm_{0,x}} \\
    J^\pm_x
    &= e^{-|\gamma_0|U^\pm_x} - 1
    = - \int_0^{1} |\gamma_0| U^\pm_x e^{-t |\gamma_0| U^\pm_x} \; dt.
\label{e:J}
\end{align}
As in \eqref{e:Kpm}, the $+$ versions of \eqref{e:KIJ}--\eqref{e:J} hold
only for $\gamma_0 \geq 0$ and the $-$ versions only for $\gamma_0 < 0$.

Let $F \in \Ncal(x^\square)$ be a polynomial of degree at most $p_\Ncal$.
Then the stability estimates \cite[\eqref{IE-e:Iupper-a}--\eqref{IE-e:Iupper-b}]{BS-rg-IE}
imply that there exists $c_3 > 0$ and, for any $c_1 \geq 0$,
there exist positive constants $C, c_2$ such that
if $V_0^\pm \in \DV_0$ then
\begin{equation}
\label{e:Iupper}
\|I^\pm_{0,x} F\|_{T_\phi(\h_0)}
  \leq
C \|F\|_{T_0(\h_0)}
\begin{cases}
  e^{c_3 g_0 \left(1 + \|\phi\|^2_{\Phi_x(\ell_0)}\right)}
    & \h_0 = \ell_0 \\
  e^{-c_1 k_0^4 \|\phi\|^2_{\Phi_x(h_0)}} e^{c_2 k_0^4 \|\phi\|^2_{\tilde\Phi_x(\ell_0)}}
    & \h_0 = h_0.
\end{cases}
\end{equation}
This essentially reduces our task to estimating $J^\pm_x$.
The next lemma is an ingredient for this.

\begin{lemma}
\label{lem:FFnull-loc}
There is a universal constant $\tilde C$ such that
\begin{equation}
\label{e:FFnull}
\|U^\pm_x\|_{T_\phi(\h_0)}
  \leq
2 U^\pm_{\varnothing,x} + \tilde C \h_0^4 (1 + \|\phi\|^2_{\Phi_x(\h_0)}),
\end{equation}
where $U^\pm_\varnothing$ is the 0-degree part of $U^\pm$.
\end{lemma}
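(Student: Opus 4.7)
The plan is to decompose each $\tau_y$ as $b_y + f_y$ with $b_y = \phi_y\bar\phi_y$ bosonic and $f_y = \psi_y\bar\psi_y$ fermionic, expand $U^\pm_x$, and estimate term-by-term via the product property \eqref{e:prod}. After expansion, the zero-degree piece isolates exactly as $U^\pm_{\varnothing,x}$, while every other term carries at least one factor of $f$ and hence at least two explicit powers of $\h_0$.

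For the fermionic remainder I would use the elementary $T_\phi$-norm identities $\|\phi_y\|_{T_\phi(\h_0)} = |\phi_y|+\h_0$ and $\|\psi_y\|_{T_\phi(\h_0)} = \|\bar\psi_y\|_{T_\phi(\h_0)} = \h_0$, which follow directly from the definition (cf.\ \cite{BS-rg-norm}). Since $x^\square$ contains the $2d$ nearest neighbours of $x$, the definition of $\Phi_x$ gives $|\phi_y|, |\nabla^e\phi_y| \leq \h_0 \|\phi\|_{\Phi_x(\h_0)}$ for every relevant $y$. Writing $M = \|\phi\|_{\Phi_x(\h_0)}$, the product property yields $\|b_y\|_{T_\phi} \leq \h_0^2(1+M)^2$ and $\|f_y\|_{T_\phi} \leq \h_0^2$, so each term containing a fermion factor is bounded by $C\h_0^4(1+M)^2 \leq C\h_0^4(1+M^2)$, already of the desired form.

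The nontrivial point is the bosonic contribution. For $U^-$, the product property gives
\begin{equation}
\|b_x b_{x+e}\|_{T_\phi} \leq (|\phi_x|+\h_0)^2(|\phi_{x+e}|+\h_0)^2 = |\phi_x|^2|\phi_{x+e}|^2 + R_e,
\end{equation}
where $R_e$ collects the cross terms and hence carries at least one factor of $\h_0$. Substituting $|\phi_y| \leq \h_0 M$ bounds $R_e$ by a sum of terms of the form $\h_0^4 M^k$ with $k\in\{0,1,2,3\}$. The even powers $M^0$ and $M^2$ are already of the target form $\h_0^4(1+M^2)$; the odd powers are controlled by the elementary Young inequalities $2M \leq 1+M^2$ and $2M^3 \leq M^4+M^2$. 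The excess $\h_0^4 M^4$ produced by the latter is then absorbed into a second copy of $|\phi_x|^2|\phi_{x+e}|^2 \leq \h_0^4 M^4$, which is precisely the reason the coefficient of $U^\pm_{\varnothing,x}$ in the statement is $2$ rather than $1$. The case of $U^+_x$ proceeds identically after the further expansion $\nabla^e\tau_x = \nabla^e b_x + \nabla^e f_x$, using $\nabla^e b_x = (\nabla^e\phi_x)\bar\phi_{x+e} + \phi_x(\nabla^e\bar\phi_x)$ together with $|\nabla^e\phi_y|\leq \h_0 M$ and $\|\nabla^e\psi_y\|_{T_\phi}\leq 2\h_0$.

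The main obstacle is purely bookkeeping: tracking the cross terms carefully and choosing the Young splitting so that exactly one extra copy of $U^\pm_{\varnothing,x}$ suffices to absorb the $M^3$ contributions, with what remains being controlled by the $\tilde C\h_0^4(1+M^2)$ term on the right-hand side. Nothing in the argument uses the finite-volume structure, the sign of $\gamma_0$, or the specific values of the coupling constants, so the constant $\tilde C$ is genuinely universal, as claimed.
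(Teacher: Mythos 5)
Your strategy is essentially the one the paper uses (split $\tau$ into bosonic and fermionic parts, handle the fermionic remainder by the product property, and control the bosonic cross terms by a Young inequality that donates a second copy of $U^\pm_\varnothing$), but the final absorption step has a gap that invalidates the argument as written.

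You substitute $|\phi_x|,|\phi_{x+e}|\le \h_0 M$ \emph{before} applying Young's inequality, which turns the cubic cross terms into $\h_0^4 M^3$, and then $2M^3\le M^4+M^2$ produces a residual $\h_0^4 M^4$ that you claim is ``absorbed into a second copy of $|\phi_x|^2|\phi_{x+e}|^2\le\h_0^4 M^4$.'' That inequality runs the wrong way: it says $|\phi_x|^2|\phi_{x+e}|^2$ is bounded \emph{above} by $\h_0^4 M^4$, whereas to absorb you would need $\h_0^4 M^4\le C\,|\phi_x|^2|\phi_{x+e}|^2+\cdots$, and this is false. Take $\phi_x=\phi_{x+e}=0$ but $\phi$ large at some other site of $x^\square$: then $M=\|\phi\|_{\Phi_x}$ is large and $\h_0^4 M^4$ is large, while $|\phi_x|^2|\phi_{x+e}|^2=0$, and $\h_0^4 M^4$ does not fit into $\tilde C\h_0^4(1+M^2)$ either. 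In short, once you have discarded the exact dependence on $|\phi_x|,|\phi_{x+e}|$, the quartic piece can no longer be recognized as $U^\pm_\varnothing$.

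The fix is to apply Young's inequality at the level of the field values themselves, exactly as the paper does, and only then pass to $\|\phi\|_{\Phi_x}$. Writing $a=|\phi_x|$, $b=|\phi_{x+e}|$, the dangerous cross terms are $a^2b\h_0$ and $ab^2\h_0$; the correct splitting is $2a^2b\h_0=2(ab)(a\h_0)\le a^2b^2+a^2\h_0^2$ and similarly $2ab^2\h_0\le a^2b^2+b^2\h_0^2$, so only the \emph{genuine} quartic $a^2b^2=|\phi_x|^2|\phi_{x+e}|^2$ (i.e., a multiple of $U^\pm_{\varnothing,x}$) appears, while the leftovers $a^2\h_0^2,\,b^2\h_0^2$ are then safely bounded by $\h_0^4 M^2$ after substituting $a,b\le\h_0 M$. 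This is exactly the form of the paper's step (applying $2|uv|\le u^2+v^2$ to the products $|\phi_x||\phi_{x+e}|$ and $\h_0(|\phi_x|+|\phi_{x+e}|)$), and it delivers the stated bound with the factor $2$ on $U^\pm_{\varnothing,x}$. The same correction is needed in your $U^+$ case after expanding $\nabla^e b_x$.

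Everything else in your argument — the bosonic/fermionic split, the elementary $T_\phi$ identities for $\phi$ and $\psi$, the $O(\h_0^4(1+M)^2)$ control of the fermionic part, the universality of $\tilde C$ — is correct and matches the paper's route.
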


\begin{proof}
Let
\begin{equation}
M^+ = M^+_e = (\nabla^e \tau_x)^2,
\quad
M^- = M^-_e = 2 \tau_x \tau_{x+e},
\end{equation}
so that $U^\pm_x = \sum_{e\in\Ucal} M^\pm_e$.
It suffices to prove \eqref{e:FFnull} with $U^\pm_x$ replaced by $M^\pm$
(on both sides of the equation).
In addition, we can replace the $\Phi_x$ norm by the $\Phi$ norm;
the bound with the $\Phi_x$ norm then follows in the same way that \eqref{e:T0Kx} is a consequence of \eqref{e:T0K},
since $M^\pm \in \Ncal(x^\Box)$.

By definition of $\tau_x$,
\begin{equation}
M^\pm = M^\pm_{\varnothing} + R^\pm,
\end{equation}
where
\begin{alignat}{2}
&M^+_{\varnothing} = (\nabla^e |\phi_x|^2)^2,
  \quad
&&R^+ = 2 (\nabla^e |\phi_x|^2) \nabla^e (\psi_x\psib_x),
  \\
&M^-_\varnothing = 2 |\phi_x|^2 |\phi_{x+e}|^2,
  \quad
&&R^- = 2 (|\phi_x|^2 \psi_{x+e}\bar\psi_{x+e}
+ \psi_x\bar\psi_x |\phi_{x+e}|^2 + \psi_x\bar\psi_x\psi_{x+e}\bar\psi_{x+e}).
\end{alignat}
Thus, $\|M^\pm\|_{T_\phi} \leq \|M^\pm_{\varnothing}\|_{T_\phi} + \|R^\pm\|_{T_\phi}$.
A straightforward computation shows that
\begin{equation}
\label{e:Rpm-bound}
\|R^\pm\|_{T_\phi} = O(\h_0^4 (1 + \|\phi\|_\Phi)^2).
\end{equation}

By definition of the $T_\phi$ semi-norm,
\begin{equation}
\label{e:nabla-phi-sq-bd}
\|\nabla^e |\phi_x|^2\|_{T_\phi}
  \le
\nabla^e |\phi_x|^2 + 2 \h_0 (|\phi_x| + |\phi_{x+e}|) + 2 \h_0^2.
\end{equation}
Together with \eqref{e:Rpm-bound}, the product property,
and \eqref{e:testfcnbd}, this implies that
\begin{equation}
\|M^+\|_{T_\phi}
  \le
M^+_\varnothing
  + 2 |\nabla^e |\phi_x|^2| (2 \h_0 (|\phi_x| + |\phi_{x+e}|))
  + O(\h_0^4) (1 + \|\phi\|^2_\Phi).
\end{equation}
By the inequality
\begin{equation}
\label{e:young-ineq}
2|ab| \le |a|^2 + |b|^2
\end{equation}
and another application of \eqref{e:testfcnbd},
\begin{equation}
2 |\nabla^e |\phi_x|^2| (2 \h_0 (|\phi_x| + |\phi_{x+e}|))
  \le
M^+_\varnothing + O(\h_0^2 \|\phi\|^2_\Phi),
\end{equation}
and the bound on $M^+$ follows.

For the bound on $M^-$, we use the identity
\begin{equation}
\label{e:taunorm}
\|\tau_x\|_{T_\phi}
  =
(|\phi_x| + \h_0)^2 + \h_0^2
\end{equation}
from \cite[\eqref{norm-e:taunorm}]{BS-rg-norm}.
By the product property and \eqref{e:testfcnbd}, this implies that
\begin{equation}
\|M^-\|_{T_\phi}
  \le
2 |\phi_x|^2 |\phi_{x+e}|^2
  +
2 (|\phi_x| |\phi_{x+e}|) (2 \h_0 (|\phi_{x+e}| + |\phi_x|))
  +
O(\h_0^4) (1 + \|\phi\|^2_\Phi).
\end{equation}
Another application of \eqref{e:young-ineq} and \eqref{e:testfcnbd} gives
\begin{equation}
2 (|\phi_x| |\phi_{x+e}|) (2 \h_0 (|\phi_{x+e}| + |\phi_x|))
  \le
|\phi_x|^2 |\phi_{x+e}|^2 + O(\h_0^2 \|\phi\|^2_\Phi),
\end{equation}
and the proof is complete.
\end{proof}

An immediate consequence of Lemma~\ref{lem:FFnull-loc}, using \eqref{e:eK},
is that for any $s \ge 0$,
\begin{equation}
\label{e:Itilbd}
\|e^{-s U^\pm_x}\|_{T_\phi(\h_0)} \leq e^{\tilde C s \h_0^4 (1 + \|\phi\|^2_{\Phi_x(\h_0)})}.
\end{equation}

\begin{proof}[Proof of Proposition~\ref{prop:K0bd}]
According to the definition of the
regulator norms in \refeq{regdef}--\refeq{reg0},
it suffices to prove that, under the hypothesis on $\gamma_0$,
\begin{equation}
\label{e:K0bd}
  \|K^\pm_{0,x}\|_{T_\phi(\h_0)} = O(|\gamma_0| \h_0^4)
  \begin{cases}
  e^{\|\phi\|_{\Phi_x}^2} & (\h_0=\ell_0)
  \\
  e^{\frac{{\sf t}}{2} \|\phi\|_{\tilde\Phi}} & (\h_0=h_0).
  \end{cases}
\end{equation}
For $t \in [0,1]$, let $\tilde I^\pm_x(t) = e^{-t |\gamma_0| U^\pm_x}$.
By \eqref{e:KIJ}, \eqref{e:J}, and the product property,
\begin{align}
\label{e:K0x-est}
    \|K^\pm_{0,x}\|_{T_\phi(\h_0)}
    & \le |\gamma_0| \|I^\pm_{0,x} U^\pm_x\|_{T_\phi(\h_0)}
    \sup_{t\in [0, 1]} \|\tilde I^\pm_{x}(t)\|_{T_\phi(\h_0)}.
\end{align}
By \refeq{Iupper} and Lemma~\ref{lem:FFnull-loc},
there exists $c_3 > 0$, and, for any $c_1 \geq 0$ there exists $c_2 > 0$, such that
\begin{equation}
\label{e:Iupper-bis}
\|I^\pm_{0,x} U^\pm_x\|_{T_\phi(\h_0)}
  \leq
O(\h_0^4)
\begin{cases}
  e^{c_3 g_0  \|\phi\|^2_{\Phi_x(\ell_0)}}
    & \h_0 = \ell_0 \\
  e^{-c_1 k_0^4 \|\phi\|^2_{\Phi_x(h_0)}} e^{c_2 k_0^4 \|\phi\|^2_{\tilde\Phi_x(\ell_0)}}
    & \h_0 = h_0.
\end{cases}
\end{equation}
The constant in $O(|\gamma_0| \h_0^4)$ may depend on $c_1$,
but this is unimportant.
Also, by \eqref{e:Itilbd},
\begin{equation}
\sup_{t\in[0,1]} \|\tilde I_{x}^\pm(t) \|_{T_\phi(\h_0)}
  \le
e^{\tilde C |\gamma_0| \h_0^4 (1+\|\phi\|^2_{\Phi_x(\h_0)})}.
\end{equation}

Thus, for $\h_0=\ell_0$,
the total exponent in our estimate for the right-hand side of \refeq{K0x-est}
is
\begin{equation}
    \tilde C |\gamma_0| \ell_0^4
       +(c_3 g_0 + \tilde C |\gamma_0| \ell_0^4) \|\phi\|^2_{\Phi_x(\ell_0)}
     .
\end{equation}
This gives the $\h_0=\ell_0$ version of \refeq{K0bd} provided that
$g_0$ is small and $|\gamma_0|$ is small depending on $L$.

For $\h_0=h_0$, the total exponent in our estimate for the right-hand side of \refeq{K0x-est}
is
\begin{equation}
    \tilde C |\gamma_0| k_0^4 \ggen_0^{-1}
        + (\tilde C |\gamma_0| k_0^4 \ggen_0^{-1} - c_1 k_0^4) \|\phi\|^2_{\Phi_x(h_0)}
        + c_2 k_0^4 \|\phi\|^2_{\tilde\Phi_x(\ell_0)}.
\end{equation}
This gives the $\h_0=h_0$ version of \refeq{K0bd} provided that
$|\gamma_0| \le \ggen_0$, $c_1\ge \tilde C$, and $c_2 k_0^4 \le {\sf t}/2$.

All the provisos are satisfied
if we choose
$c_1 \ge \tilde C$,
$k_0$ small depending on $c_1$
and $\ggen_0$ small.
\end{proof}

\begin{rk}
By a small modification to the proof of Proposition~\ref{prop:K0bd},
it can be shown that if $M_x \in \Ncal(x^\square)$ is a monomial of
degree $r \le p_\Ncal -4$ (so that $M_xU_x^\pm$ has degree at most $p_\Ncal$), then
\begin{equation}
\label{e:K0bd-gen}
\|M_x K^\pm_{0,x}\|_{\Gcal_0} = O(|\gamma_0| \h_0^{4+r}).
\end{equation}
\end{rk}

\subsection{Unified bound on \texorpdfstring{$K_0$}{K0}}
\label{sec:KWcal}

The results of \cite{BS-rg-step,BBS-rg-flow} are formulated in a sequence of spaces $\Wcal_j$ that
enable the combination of small-field and large-field estimates into a single norm estimate.
In this section, we recast the result of Proposition~\ref{prop:K0bd} to see that $K_0^\pm$
fits into this formulation.

We restrict attention in this section to the $\Wcal_0$ norm,
whose definition is recalled below.
This requires several preliminaries.
Let $\Pcal_0 = \Pcal_0(\Lambda)$ denote the collection of subsets of vertices in $\Lambda$.
We refer to the elements of $\Pcal_0$ as \emph{polymers}.
We call a nonempty polymer $X\in \Pcal_0$ \emph{connected}
if for any $x, x' \in X$, there is a sequence
$x = x_0, \ldots, x_n = x' \in X$ such that
$|x_{i+1} - x_i|_\infty = 1$ for $i = 0, \ldots, n - 1$.
Let $\Ccal_0$ denote the set of connected polymers.
The \emph{small set neighbourhood} $X^\Box$ of $X\in\Pcal_0$ is defined by
\begin{equation}
    X^\Box =
    \{y \in \Lambda : \exists x \in \Lambda \; \text{such that}\; |y-x|_\infty \le 2^d\}.
\end{equation}
We extend the definitions of the regulators $\Gcal_0 = G_0, \tilde G_0^{\sf t}$,
defined in \refeq{regdef}, by setting
\begin{equation} \label{e:Gcalprod}
\Gcal_0(X, \phi) = \prod_{x\in X} \Gcal_0(x, \phi),
\end{equation}
and extend the definitions \refeq{reg0} to define norms, for $F \in \Ncal(X^\Box)$, by
\begin{equation}
\lbeq{reg0X}
    \|F\|_{G_0} = \sup_{\phi\in\C^\Lambda} \frac{\|F\|_{T_\phi(\ell_0)}}{G_0(X,\phi)}
    , \quad
    \|F\|_{\tilde{G}^{\sf t}_0} = \sup_{\phi\in\C^\Lambda} \frac{\|F\|_{T_\phi(h_0)}}{\tilde{G}^{\sf t}_0(X,\phi)}
    .
\end{equation}
It follows from the product property of the $T_\phi$ norm that these norms obey the product property
\begin{equation}
    \|F_1F_2\|_{\Gcal_0} \le   \|F_1\|_{\Gcal_0} \|F_2\|_{\Gcal_0}
    \quad \text{for $F_i\in \Ncal(X_i^\Box)$ with $X_1 \cap X_2=\varnothing$.}
\end{equation}

Given a map $K: \Pcal_0 \to \Ncal$ with the property that $K(X) \in \Ncal(X^\Box)$
for all $X \in \Pcal_0$,
we define the $\Fcal_0(\Gcal)$ norms (for $\Gcal = G, \tilde G$) by
\begin{align}
\|K\|_{\Fcal_0(G)}        &= \sup_{X\in\Ccal_0} \ggen_0^{-f_0(a, X)} \|K(X)\|_{G_0} \\
\|K\|_{\Fcal_0(\tilde G)} &= \sup_{X\in\Ccal_0}
\ggen_0^{-f_0(a, X)} \|K(X)\|_{\tilde G_0^{\sf t}},
\end{align}
with
\begin{equation}
    \label{e:f0def}
    f_0 (\amain, X)
    =
    \amain (|X|-2^d)_+
    =
    \begin{cases}
    a (|X| - 2^d)
    & \text{if } |X| > 2^d   \\
    0
    & \text{otherwise}.
    \end{cases}
\end{equation}
Here $a$ is a small constant;  its value is discussed below \cite[\eqref{step-e:T0dom}]{BS-rg-step}.
The $\Wcal_0$ norm is then defined by
\begin{align}
\label{e:9Kcalnorm}
\|K\|_{\Wcal_0}
  &=
  \max
  \Big\{
  \|K \|_{\Fcal_0(G)},\,
  \ggen_0^{9/4}
  \|K \|_{\Fcal_0(\tilde{G})}
  \Big\}.
\end{align}
Since this definition depends on $\ggen_0$ and the
volume $\Lambda$, we sometimes write $\Wcal_0 = \Wcal_0(\ggen_0, \Lambda)$.
The following proposition uses Proposition~\ref{prop:K0bd} to obtain a bound on the $\Wcal_0$ norm
of the map $K_0^\pm : \Pcal_0 \to \Ncal$ defined by
\begin{equation}
    K_0^\pm(X) = \prod_{x \in X} K_{0,x}^\pm \qquad (X \in \Pcal_0)
    .
\end{equation}

\begin{prop}
\label{prop:KWcal}
If $V_0^\pm \in \DV_0$ with $\ggen_0$ sufficiently small
(depending on $L$), and if $|\gamma_0| \le O(\ggen_0^{1+a'})$
for some $a' >a$,
then $\|K_0^\pm\|_{\Wcal_0} \le O(|\gamma_0|)$,
where all constants may depend on $L$.
\end{prop}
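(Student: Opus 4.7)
The plan is to upgrade the pointwise bounds of Proposition~\ref{prop:K0bd} to bounds on arbitrary connected polymers $X\in\Ccal_0$ by exploiting the product structure $K_0^\pm(X)=\prod_{x\in X}K_{0,x}^\pm$. First, iterating the product property \eqref{e:prod} of the $T_\phi$ semi-norm, together with the product form \eqref{e:Gcalprod} of the regulators $\Gcal_0(X,\phi)=\prod_{x\in X}\Gcal_0(x,\phi)$, yields
\begin{equation}
\|K_0^\pm(X)\|_{\Gcal_0}\le\prod_{x\in X}\|K_{0,x}^\pm\|_{\Gcal_0}
\end{equation}
for both $\Gcal_0=G_0$ and $\Gcal_0=\tilde G_0^{\sf t}$. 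Inserting the singleton bounds of Proposition~\ref{prop:K0bd}, together with $g_0\ge C_{\DV}^{-1}\ggen_0$ from $V_0^\pm\in\DV_0$, one obtains
\begin{equation}
\|K_0^\pm(X)\|_{G_0}\le (C|\gamma_0|)^{|X|},\qquad \|K_0^\pm(X)\|_{\tilde G_0^{\sf t}}\le (C|\gamma_0|/\ggen_0)^{|X|},
\end{equation}
with $C$ depending on $L$ but not on $X$.

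The remainder of the proof will verify that the polymer penalty $\ggen_0^{-f_0(a,X)}$ in the $\Fcal_0$ norms, together with (for the $\tilde G$ component) the weight $\ggen_0^{9/4}$ built into the definition \eqref{e:9Kcalnorm} of the $\Wcal_0$ norm, still leave a quantity of order $|\gamma_0|$ uniformly in $X\in\Ccal_0$. Since $f_0(a,X)=0$ when $|X|\le 2^d$, the small-polymer case is elementary: after pulling out one factor of $|\gamma_0|$, each additional vertex contributes either $|\gamma_0|$ (for the $G$ norm) or $|\gamma_0|/\ggen_0\le C\ggen_0^{a'}$ (for the $\tilde G$ norm), both of which are at most $1$ for $\ggen_0$ small. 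For $|X|>2^d$, substituting the hypothesis $|\gamma_0|\le C'\ggen_0^{1+a'}$ into the $G_0$ bound rearranges to
\begin{equation}
\ggen_0^{-a(|X|-2^d)}(C|\gamma_0|)^{|X|}\le C\,|\gamma_0|\,(C\ggen_0^{1+a'-a})^{|X|-1}\,\ggen_0^{a(2^d-1)},
\end{equation}
and the analogous rearrangement for $\tilde G_0^{\sf t}$, after absorbing the $\ggen_0^{9/4}$ prefactor, produces $(C\ggen_0^{a'-a})^{|X|-1}$ in place of the middle factor.

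The main point to monitor will be precisely this balance between the polymer penalty $\ggen_0^{-f_0(a,X)}$, which inflates large polymers, and the decay per vertex supplied by the products $|\gamma_0|^{|X|}$ and $(|\gamma_0|/\ggen_0)^{|X|}$. The strict inequality $a'>a$ is exactly what is needed: it leaves a net exponent $1+a'-a>1$ for the $G$ bound and a net exponent $a'-a>0$ for the $\tilde G$ bound per additional vertex, so that for $\ggen_0$ sufficiently small both the combinatorial constants $C^{|X|}$ and the penalty $\ggen_0^{-a(|X|-2^d)}$ are dominated, yielding the $X$-uniform bound $O(|\gamma_0|)$. Correspondingly, the $\ggen_0^{9/4}$ weight on $\Fcal_0(\tilde G)$ in $\Wcal_0$ is calibrated to absorb the single-vertex $\ggen_0^{-1}$ deficit coming from the $\tilde G_0^{\sf t}$ part of Proposition~\ref{prop:K0bd}. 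Beyond this bookkeeping, no new ingredients are required.
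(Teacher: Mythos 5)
Your proposal is correct and follows essentially the same route as the paper's proof: iterate the product property to reduce to the per-vertex bounds of Proposition~\ref{prop:K0bd}, then show that the penalty $\ggen_0^{-f_0(a,X)}$ (and, for the $\tilde G$ component, the $\ggen_0^{9/4}$ weight) is dominated by the per-vertex decay via the hypothesis $|\gamma_0| = O(\ggen_0^{1+a'})$ with $a' > a$. The only difference is cosmetic bookkeeping: the paper splits the exponent $|X|$ as $(|X|\wedge 2^d) + (|X|-2^d)_+$ and matches the second factor directly to $\ggen_0^{f_0(a,X)}$, whereas you split as $1 + (|X|-1)$ and track the offset $\ggen_0^{a(2^d-1)}$; both lead to the same conclusion.
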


\begin{proof}
Let $X$ be a connected polymer in $\Pcal_0$.
By the product property and Proposition~\ref{prop:K0bd},
\begin{align}
\lbeq{K0prod}
    \|K_0^\pm(X)\|_{\Gcal_0} \le (c|\gamma_0|\h_0^4)^{|X|}
    &=
    (c|\gamma_0|\h_0^4)^{|X|\wedge 2^d} (c|\gamma_0|\h_0^4)^{(|X|-2^d)_+}.
\end{align}
For $\Gcal_0=G_0$, we use $\h_0=\ell_0$,
$(c|\gamma_0|\h_0^4)^{|X|\wedge 2^d}\le O(|\gamma_0|)$, and
\begin{equation}
    (c|\gamma_0|\h_0^4)^{(|X|-2^d)_+} \le (c' \ggen_0)^{(1+a')(|X|-2^d)_+} \le \ggen_0^{f_0(a,X)}.
\end{equation}
For $\Gcal_0=\tilde G_0$, we use $\h_0=h_0 = O(\ggen_0^{-1/4})$ and, since $a'>a$,
\begin{equation}
    (c|\gamma_0|\h_0^4)^{(|X|-2^d)_+} \le (c' \ggen_0)^{a'(|X|-2^d)_+} \le \ggen_0^{f_0(a,X)}.
\end{equation}
Since $|\gamma_0| \le \ggen_0$, it follows from \refeq{K0prod} that
\begin{equation}
    \ggen_0^{9/4}   \|K_0^\pm \|_{\Fcal_0(\tilde{G})}
    \le
    \ggen_0^{9/4}O(|\gamma_0| \ggen_0^{-1})
    \le |\gamma_0|,
\end{equation}
and the proof is complete.
\end{proof}

The above discussion is based on norms in the setting of the torus $\Lambda$.
As in \cite{BS-rg-step}, a version on the infinite lattice $\Zd$ is also required.
This can be done in exactly the same manner,
by defining
the polymers $\Pcal_0 = \Pcal_0(\Zd)$
to be the collection
of subsets of $\Zd$, with $K_0^\pm(X)$ defined for subsets of $\Zd$ by
$\prod_{x \in X} K_{0,x}^\pm$.
The $\Wcal_0 = \Wcal_0(\ggen_0, \Zd)$ norm (in infinite volume)
can be defined analogously to \eqref{e:9Kcalnorm}.
The hypotheses and conclusion of Proposition~\ref{prop:KWcal} remain the same
in the setting of $\Zd$.

\subsection{Smoothness of \texorpdfstring{$K_0$}{K0}}
\label{sec:Ksmooth}

Let $\Ccal_0(\Z^d) \subset \Pcal_0(\Z^d)$ be the set of connected polymers.
By definition, a connected polymer is nonempty.
Given $\ggen_0>0$, let
$\Wcal^*_0(\ggen_0, \Zd)$ denote the space of maps
$F :\Ccal_0(\Zd) \to \Ncal$,
with $F(X) \in \Ncal(X^\Box)$ and $\|F\|_{\Wcal_0(\ggen_0, \Zd)} < \infty$.
Addition in this space is defined by $(F_1+F_2)(X)=F_1(X)+F_2(X)$.
We extend any $F :\Ccal_0(\Zd) \to \Ncal$ to $F :\Pcal_0(\Zd) \to \Ncal$
by taking $F(X) = \prod_{Y} F(Y)$ where the product is over the connected components $Y$ of $X$.

Given any map $F : D \to \Wcal^*_0(\ggen_0, \Zd)$ for $D \subset \R$ an open interval,
write $F_X, F^\phi_X : D \to \Ncal(X^\square)$ for the
maps defined by partial evaluation of $F$ at $X$ and at
$(X, \phi)$, respectively. We say $F^\phi_X$ is $C^k$
if all of its coefficients in the decomposition \eqref{e:FinNcal}
are $C^k$ as functions $D \to \R$.

\begin{lemma}
\label{lem:smoothness}
Let $D \subset \R$ be open and $F : D \to \Wcal^*_0(\ggen_0, \Zd)$ be a map.
Suppose that $F^\phi_X$ is $C^2$ for all $X \in \Ccal_0$
and $\phi \in \C^\Lambda$, and define 
$F^{(i)} : D \to \Wcal^*_0(\ggen_0, \Zd)$ by $(F^{(i)}(t))^\phi_X = (F^\phi_X)^{(i)}(t)$ for $i = 1, 2$,
where the right-hand side denotes the (component-wise) $i^{\rm th}$
derivative of $F^\phi_X$.
If $\|F^{(i)}(t)\|_{\Wcal_0} < \infty$ for $i = 1, 2$ and $t \in D$, then 
$F^{(1)}$ is the derivative of $F$.
\end{lemma}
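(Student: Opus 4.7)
My plan is to deduce $\Wcal_0$-differentiability of $F$ at $t$ from a scalar Taylor expansion of the coefficients of $F(t)(X)$, by pushing the resulting integral-remainder identity through the chain of norms $T_\phi \to \Gcal_0 \to \Wcal_0$. The key point is that each of these norms is a supremum of bounded linear functionals, so the triangle inequality for vector-valued integrals applies and suprema commute with integration in the favourable direction.

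Fix $t\in D$ and $h$ with $[t,t+h]\subset D$. For any $X\in\Ccal_0$, $\phi\in\C^\Lambda$, and $g\in B(\Phi)$, the pairing
\begin{equation}
\Psi(s):=\langle F(s)(X),g\rangle_\phi
= \sum_{\vec x,\vec y}\frac{1}{|\vec x|!\,|\vec y|!}\,(F(s)(X))_{\vec x,\vec y}(\phi,\bar\phi)\,g_{\vec x,\vec y}
\end{equation}
is a \emph{finite} linear combination of scalar coefficients of $F(s)(X)$, since test functions vanish for $|\vec x|+|\vec y|>p_\Ncal$ and, because $F(s)(X)\in\Ncal(X^\Box)$, each entry of $\vec x,\vec y$ lies in the finite set $X^\Box\sqcup\overline{X^\Box}$. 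Combining the $C^2$-in-$s$ hypothesis with the $C^{p_\Phi}$ smoothness in $\phi$ built into the definition of $\Ncal$, one has $\Psi\in C^2$ with $\Psi'(s)=\langle F^{(1)}(s)(X),g\rangle_\phi$ and $\Psi''(s)=\langle F^{(2)}(s)(X),g\rangle_\phi$ by the very definition of $F^{(i)}$. Taylor's theorem with integral remainder then gives
\begin{equation}
\bigl\langle F(t+h)(X)-F(t)(X)-hF^{(1)}(t)(X),\,g\bigr\rangle_\phi
= \int_0^h (h-s)\,\langle F^{(2)}(t+s)(X),g\rangle_\phi\,ds.
\end{equation}
Taking absolute values and then supremum over $g\in B(\Phi)$ (pushing the sup inside the integral), next dividing by $\Gcal_0(X,\phi)$ and taking supremum over $\phi$ (again inside), and finally applying the maximum over $\Gcal_0\in\{G_0,\tilde G_0^{\sf t}\}$ and the weights $\ggen_0^{-f_0(a,X)}$ from \eqref{e:9Kcalnorm}, we obtain
\begin{equation}
\|F(t+h)-F(t)-hF^{(1)}(t)\|_{\Wcal_0}
\le \int_0^h (h-s)\,\|F^{(2)}(t+s)\|_{\Wcal_0}\,ds.
\end{equation}

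Dividing by $|h|$ bounds the $\Wcal_0$-error in the linear approximation of $F$ by $\int_0^h \|F^{(2)}(t+s)\|_{\Wcal_0}\,ds$, which tends to $0$ as $h\to 0$ provided $s\mapsto\|F^{(2)}(t+s)\|_{\Wcal_0}$ is locally integrable at $s=0$. This local integrability is the main obstacle, since the stated hypothesis provides only pointwise finiteness of $\|F^{(2)}(t)\|_{\Wcal_0}$. In the intended applications $F$ depends polynomially on its parameter with coefficients in $\Wcal^*_0$, so $t\mapsto\|F^{(2)}(t)\|_{\Wcal_0}$ is in fact locally bounded and the conclusion is immediate; more generally, running the same Taylor scheme on $F^{(1)}$ (with $F^{(2)}$ in place of $F^{(1)}$) shows that $F^{(1)}:D\to\Wcal^*_0$ is continuous, and any further continuity of $F^{(2)}$ then closes the argument.
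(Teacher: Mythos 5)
Your argument is essentially the same as the paper's: Taylor's theorem with integral remainder applied component-wise (equivalently, through the pairings $\langle\cdot,g\rangle_\phi$), followed by pushing the estimate through the sup-type chain $T_\phi\to\Gcal_0\to\Wcal_0$ to obtain a quadratic bound on the error in the linear approximation, and then a first-order argument for continuity of $F^{(1)}$. The paper writes the remainder as $R^\phi_X(t,s)=s^2\int_0^1(F^\phi_X)''(t+us)(1-u)\,du$ and bounds $\|R(t,s)\|_{\Wcal_0}\le|s|^2\sup_{u\in[0,1]}\|F''(t+us)\|_{\Wcal_0}$, which is your estimate after the change of variable. The subtlety you flag — that the stated hypothesis gives only pointwise finiteness of $\|F^{(2)}(t)\|_{\Wcal_0}$ whereas the bound needs local boundedness (or integrability) — is genuine and applies equally to the paper's $\sup_{u\in[0,1]}\|F''(t+us)\|_{\Wcal_0}\le O(1)$ step, which implicitly assumes local boundedness; in the only application of the lemma (Proposition~\ref{prop:Ksmooth}), the explicit estimates on $\|K''_X\|_{\Gcal_0}$ are uniform in $t$ on compact parameter sets, so this is harmless there, but you are right that the hypothesis as written is slightly weaker than what the proof uses.
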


\begin{proof}
For $t, t + s \in D$, define $R(t, s) \in \Wcal_0$ by
\begin{equation}
R^\phi_X(t, s) = F^\phi_X(t + s) - F^\phi_X(t) - s (F^\phi_X)'(t).
\end{equation}
By Taylor's theorem, for any $\phi$ and $X$,
\begin{equation}
R^\phi_X(t, s) = s^2 \int_0^1 (F^\phi_X)''(t + u s) (1 - u) \; du,
\end{equation}
where the integral is taken component-wise.
It follows
that
\begin{equation}
\|R(t, s)\|_{\Wcal_0}
  \le |s|^2 \sup_{u\in[0,1]} \|F''(t+us)\|_{\Wcal_0}
  \le O(|s|^2),
\end{equation}
so $F$ is differentiable 
and its derivative satisfies $(F')^\phi_X = (F^\phi_X)'$.
Continuity of $F'$ follows similarly, since, by the
fundamental theorem of calculus,
\begin{equation}
\|F'(t+s) - F'(t)\|_{\Wcal_0}
  \le
|s| \sup_{u\in[t,t+s]} \|F''(u)\|_{\Wcal_0}
  \le
O(|s|),
\end{equation}
which suffices.
\end{proof}

Consider the map
\begin{equation}
(g_0, \gamma_0, \nu_0, z_0) \mapsto K_0 \in \Wcal^*_0(\ggen_0, \Zd)
\end{equation}
defined by
\begin{equation}
\label{e:K0def}
K_0(g_0, \gamma_0, \nu_0, z_0) =
\begin{cases}
K^+_0(g_0, \gamma_0, \nu_0, z_0)
  & (\gamma_0 \geq 0) \\
K^-_0(g_0, \gamma_0, \nu_0, z_0)
  & (\gamma_0 < 0),
\end{cases}
\end{equation}
for $(g_0, \gamma_0, \nu_0, z_0)$ satisfying the hypotheses
of Proposition~\ref{prop:KWcal}.
The map $K_0$ is in fact analytic away from $\gamma_0 = 0$.
However, we only prove the following, which is what we need later.

\begin{prop}
\label{prop:Ksmooth}
Suppose that $V_0^\pm \in \DV_0$, with $\ggen_0$ sufficiently small
(depending on $L$) and $|\gamma_0| \le O(\ggen_0^{1+a'})$
for some $a' >a$.
The map $K_0(g_0, \gamma_0, \nu_0, z_0)$ is jointly continuous
in its four variables, is
$C^1$ in $(g_0, \nu_0, z_0)$,
and (when $\gamma_0 \ne 0$) is $C^1$ in $(g_0, \gamma_0, \nu_0, z_0)$,
with partial derivatives with respect to $t = g_0$, $\nu_0$, and $z_0$ satisfying
\begin{equation}
\label{e:ddpK}
\|\partial K_0 / \partial t\|_{\Wcal_0} = O(|\gamma_0| \h_0^8).
\end{equation}
Moreover, $K_0$
is left- and right-differentiable in $\gamma_0$ at $\gamma_0 = 0$.
\end{prop}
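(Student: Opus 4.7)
The plan is to combine Lemma~\ref{lem:smoothness} with explicit product-rule computations for the partial derivatives of $K_0^\pm$, reducing the proposition to $\Wcal_0$ bounds on first and second derivatives that can be obtained in essentially the same way as Proposition~\ref{prop:KWcal}. For each direction $t \in \{g_0, \nu_0, z_0\}$, all of the $t$-dependence sits inside $V_{0,y}^\pm$ (and hence in $I_{0,y}^\pm = e^{-V_{0,y}^\pm}$), so differentiation pulls a monomial factor $M_y^t$ out of a single site-factor of the product $K_0^\pm(X) = \prod_{y\in X} I_{0,y}^\pm (e^{-|\gamma_0|U_y^\pm}-1)$. Explicitly, taking $M_y^{g_0} = \tau_y^2$, $M_y^{\nu_0} = \tau_y$, $M_y^{z_0} = \tau_{\Delta,y}$ (and $M_y^{g_0} = \tau_y^2$ with an additional $4d\tau_y^2$ correction in the $\gamma_0 <0$ case from $V_0^-$), the product rule yields
\begin{equation}
\frac{\partial}{\partial t} K_0^\pm(X) = -\sum_{y\in X} M_y^t \prod_{x\in X} I_{0,x}^\pm (e^{-|\gamma_0|U_x^\pm}-1).
\end{equation}

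To bound this in $\Wcal_0$, I would use the generalisation \refeq{K0bd-gen} of Proposition~\ref{prop:K0bd} to estimate the distinguished factor $M_y^t K_{0,y}^\pm$, together with the unmodified estimate of Proposition~\ref{prop:K0bd} on the remaining singleton factors. Since $\tau_y^2$ has degree $4$ (the worst case), this gives $\|M_y^t K_{0,y}^\pm\|_{\Gcal_0} = O(|\gamma_0|\h_0^8)$, and the combinatorial argument of Proposition~\ref{prop:KWcal} (with $|X|$ absorbed into the large-set decay factor $\ggen_0^{f_0(a,X)}$) then delivers the claimed $\Wcal_0$ bound $O(|\gamma_0|\h_0^8)$. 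Analogous second-derivative formulas, obtained by applying the product rule twice, produce at most two distinguished monomial factors on a single site (or on two sites), and the same scheme yields $\|\partial^2 K_0/\partial t \partial t'\|_{\Wcal_0} = O(|\gamma_0|\h_0^{12})$, which is finite on the parameter domain. This is enough to invoke Lemma~\ref{lem:smoothness} and conclude that $K_0$ is $C^1$ in $(g_0,\nu_0,z_0)$ with the stated bound.

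For smoothness in $\gamma_0$ when $\gamma_0 \neq 0$, the sign of $\gamma_0$ is locally constant, so $|\gamma_0| = \pm \gamma_0$ is smooth and the argument above goes through with $M_y^{\gamma_0} = \mp U_y^\pm e^{-|\gamma_0|U_y^\pm}$ (together with contributions from $V_{0,y}^-$ when $\gamma_0<0$). For the one-sided derivatives at $\gamma_0 = 0$, note that every factor $e^{-|\gamma_0|U_x^\pm}-1$ in $K_0^\pm(X)$ vanishes at $\gamma_0=0$, so the product-rule expansion for $\partial K_0^\pm/\partial\gamma_0$ retains only the term where \emph{all} $|X|$ factors are differentiated, which is zero unless $|X|=1$; thus only singletons contribute a nonzero right derivative (computed with $K_0^+$) and left derivative (computed with $K_0^-$), and these can be read off explicitly as $-I_{0,x}^+U_x^+\big|_{\gamma_0=0}$ and $+I_{0,x}^-U_x^-\big|_{\gamma_0=0}$ respectively. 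These two values are in general unequal, which is why the result only asserts one-sided differentiability at the seam.

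Joint continuity in all four variables follows from the uniform $\Wcal_0$ bound on the partial derivatives on each half-space $\{\pm\gamma_0 \ge 0\}$, combined with $\|K_0^\pm\|_{\Wcal_0} = O(|\gamma_0|)$ from Proposition~\ref{prop:KWcal} to handle the crossing at $\gamma_0 = 0$. The main obstacle I expect is bookkeeping: verifying that the combinatorial factor $|X|$ arising from the product rule, together with the extra monomial factor $M_y^t$ that shifts the exponent of $\h_0$ but not of $|\gamma_0|$, is absorbed cleanly into the $\ggen_0^{f_0(a,X)}$ decay as in Proposition~\ref{prop:KWcal}, so that the bound remains of the form $O(|\gamma_0|\h_0^8)$ uniformly in $X$ and $\phi$. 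A secondary nuisance is tracking the $4d\gamma_0\tau_x^2$ correction in $V_0^-$ when differentiating in $g_0$ or $\gamma_0$ in the regime $\gamma_0<0$, but this just contributes further monomials of bounded degree and is handled by the same mechanism.
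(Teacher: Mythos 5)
Your proposal follows essentially the same route as the paper's proof: differentiate $K_0^\pm(X)=\prod_{x\in X}K_{0,x}^\pm$ by the product rule, bound the distinguished site via the generalisation \eqref{e:K0bd-gen} and the remaining sites via Proposition~\ref{prop:K0bd}, invoke Lemma~\ref{lem:smoothness}, and identify the one-sided $\gamma_0$-derivatives at $\gamma_0=0$ from the singleton contributions (since every undifferentiated factor vanishes there). One small slip: since $V_0^- = V_0^+ + 4d\gamma_0\tau_x^2$, the $4d\tau_x^2$ correction enters the $\gamma_0$-derivative when $\gamma_0<0$, not the $g_0$-derivative as you wrote, though this is harmless because it contributes the same kind of bounded-degree monomial absorbed by \eqref{e:K0bd-gen}.
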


\begin{proof}
Let $t$ denote any one of the coupling constants $g_0, \gamma_0, \nu_0$ or $z_0$.
We drop the subscript $0$, and let $K(t)$ denote $K_0$ viewed as a function of $t$,
with the remaining coupling constants fixed. Then $K^\phi_X$ is smooth for any $\phi, X$.
If $t$ is $g_0, \nu_0$ or $z_0$, then
\begin{align}
(K^\phi_x)'  &= -M_x(\phi) K^\phi_x, \quad
(K^\phi_x)'' = M_x^2(\phi) K^\phi_x,
\end{align}
where $M_x$ is $\tau_x^2, \tau_x$ or $\tau_{\Delta,x}$, respectively.
The maximal degree of $M_x$ is $4$, so
\eqref{e:K0bd-gen} implies that
\begin{equation}
\label{e:Kprime-bd1}
\|K'_x\|_{\Gcal_0} \le O(|\gamma_0| \h_0^{8}),
  \quad
\|K''_x\|_{\Gcal_0} \le O(|\gamma_0| \h_0^{12}).
\end{equation}

For $t$ denoting $\gamma_0$,
we restrict attention to $\gamma_0 > 0$, and write $U = U^+$
and $V_0 = V^+_0$ (the case $\gamma_0 < 0$ is similar). Then
\begin{equation}
\label{e:dKdgamma0}
(K^\phi_x)'  = -U_x(\phi) e^{-V_x(\phi) - \gamma_0 U_x(\phi)}, \quad
(K^\phi_x)'' = U_x^2(\phi) e^{-V_x(\phi) - \gamma_0 U_x(\phi)},
\end{equation}
and \eqref{e:Iupper} and \eqref{e:Itilbd} imply that
\begin{equation}
\label{e:Kprime-bd2}
\|K'_x\|_{\Gcal_0} \le O(\h_0^4),
  \quad
\|K''_x\|_{\Gcal_0} \le O(\h_0^8).
\end{equation}

By definition, $K_X = \prod_{x \in X} K_x$, so, for derivatives with respect to any one
of the four variables (with $\gamma_0 \neq 0$ when differentiating with respect to $\gamma_0$),
\begin{equation}
\label{e:KXprime}
(K^\phi_X)'  = \sum_{x \in X} (K^\phi_x)' K^\phi_{X \setminus x}, \quad
(K^\phi_X)'' = \sum_{x \in X} ((K^\phi_x)'' K^\phi_{X \setminus x} + (K^\phi_x)' (K^\phi_{X \setminus x})').
\end{equation}
Thus, by the product property, \eqref{e:Kprime-bd1}, and Proposition~\ref{prop:K0bd},
\begin{equation}
\|K'_X\|_{\Gcal_0}
  \le
O(|X|) |\gamma_0| \h_0^8 (|\gamma_0| \h_0^4)^{|X|-1}.
\end{equation}
when differentiating with respect to $g_0$, $\nu_0$, or $z_0$.
The bound \eqref{e:ddpK} then follows from the hypothesis on $\gamma_0$.
Similarly, using \eqref{e:Kprime-bd2},
\begin{equation}
\|K'_X\|_{\Gcal_0}
  \le
O(|X|) \h_0^4 (|\gamma_0| \h_0^4)^{|X|-1}
\end{equation}
when differentiating with respect to $\gamma_0$ away from $\gamma_0 = 0$.
In both cases, we have
\begin{equation}
\|K''_X\|_{\Gcal_0}
  \le
O(|X|^2) \h_0^8 (|\gamma_0| \h_0^4)^{(|X|-2) \wedge 0}.
\end{equation}
Thus, by Lemma~\ref{lem:smoothness}, $K$ is $C^1$ in any of its variables.
Therefore, $K$ is $C^1$ in $(g_0, \nu_0, z_0)$ on the whole domain and in all the variables when $\gamma_0 \ne 0$.

To show right-continuity in $\gamma_0$ at $\gamma_0 = 0$,
fix $(g_0, \nu_0, z_0)$ and define $F \in \Wcal^*_0$ by
\begin{equation}
F(X) =
\begin{cases}
  -U_x e^{-V_{0,x}}
    & X = \{ x \} \\
  0 & |X| > 1,
\end{cases}
\end{equation}
where $U_x, V_{0,x}$ are defined above.
Let $K'(\gamma_0)$ denote the $\gamma_0$ derivative of $K$ evaluated at $\gamma_0 > 0$.  Then
\eqref{e:dKdgamma0} and \eqref{e:KXprime} imply that
\begin{equation}
F(X) - K'_X(\gamma_0)
  =
\begin{cases}
  U_x K_x(\gamma_0)
    & X = \{ x \} \\
  \sum_{x \in X} K'_x(\gamma_0) K_{X \setminus x}(\gamma_0)
    & |X| > 1.
\end{cases}
\end{equation}
Thus, by \eqref{e:K0bd-gen}, \eqref{e:Kprime-bd2}, and Proposition~\ref{prop:K0bd},
\begin{equation}
\|F(X) - K'_X(\gamma_0)\|_{\Gcal_0}
  \le
\begin{cases}
  O(\gamma_0 \h_0^8)
    & X = \{ x \} \\
  O(|X|) \h_0^4 (\gamma_0 \h_0^4)^{|X|-1}
    & |X| > 1.
\end{cases}
\end{equation}
It follows that
\begin{equation}
\lim_{\gamma_0\downarrow 0} \|F - K'(\gamma_0)\|_{\Wcal_0} = 0,
\end{equation}
i.e., $F$ is the right-derivative of $K$ in $\gamma_0$ at $\gamma_0 = 0$.
Left-continuity is handled similarly.
\end{proof}


\section{Existence of critical parameters}
\label{sec:nucident}

In Sections~\ref{sec:rg-coords}--\ref{sec:rg-map}, we recall some facts about the renormalisation group map
defined in \cite{BS-rg-step}. In Section~\ref{sec:flow}, we discuss the existence and properties of the
finite-volume renormalisation group flow (a consequence of the main result of \cite{BBS-rg-flow}),
which is crucial to proving Theorem~\ref{thm:suscept}.
Using the results of Section~\ref{sec:flow}, we identify critical initial conditions for
iteration of the
renormalisation group in Section~\ref{sec:nu0z0c}.
In Section~\ref{sec:nuztilde}, we identify the critical point and
discuss an important change of parameters.
Then in Section~\ref{sec:conclusion} we
obtain the asymptotic behaviour of the two-point function, susceptibility, and correlation
length of order $p$, and thereby prove  Theorem~\ref{thm:suscept}.  Finally,
Section~\ref{sec:IFT} contains a version of the implicit function theorem that we apply
in Sections~\ref{sec:nu0z0c}--\ref{sec:nuztilde}.

\subsection{Renormalisation group coordinates}
\label{sec:rg-coords}

As discussed in Section~\ref{sec:prog},
the evolution of $Z_j$ defined in \eqref{e:Zjdef}
is tracked via coordinates $(I_j, K_j)$.
In order to discuss these, we make the following definitions.
We partition $\Lambda$ into $L^{N-j}$ disjoint \emph{scale-$j$ blocks}
of side $L^j$.
We let $\Pcal_j$ denote the set of \emph{scale-$j$ polymers},
which are unions of elements of $\Bcal_j$.  Given $X \in \Pcal_j$,
we denote the collection of scale-$j$ blocks in  $X$
by $\Bcal_j(X)$.
Scale-$0$ blocks
are simply elements of $\Lambda$, and scale-$0$ polymers are subsets of $\Lambda$,
as in Section~\ref{sec:KWcal}.
Also, as in the scale-$0$ case,
there is a version of blocks and polymers also on $\Zd$ rather than $\Lambda$.

Given  a polynomial $V_j$ of the form
\begin{equation}
V_{j;x} = g_j \tau_x^2 + \nu_j \tau_x + z_j \tau_{\Delta,x},
\end{equation}
the \emph{interaction} $I_j(X)$ is defined for $X \in \Pcal_j(\Lambda)$ by
\begin{equation}
I_j(X) = e^{-\sum_{x\in X} V_{j;x}} \prod_{B \in \Bcal_j(X)} (1 + W_j(B)),
\end{equation}
where $W_j(B)$ is an explicit polynomial that is quadratic in $V_j$ and is defined in
\cite[\eqref{pt-e:WLTF}]{BBS-rg-pt}.
In \cite[Definition~\ref{step-def:Kspace}]{BS-rg-step},
a space $\Kcal_j = \Kcal_j(\Lambda)$ of maps $\Pcal_j \to \Ncal$ required to satisfy
several properties is defined.
The coordinate $K_j$ is constructed in \cite{BS-rg-step} as an element of $\Kcal_j$.
The renormalisation group is used to construct a sequence $(V_j, K_j)$
from which $Z_j$ can be recovered via the \emph{circle product}
\begin{equation}
\label{e:ZjIjKj}
Z_j = (I_j \circ K_j)(\Lambda) = \sum_{X\in\Pcal_j(\Lambda)} I_j(\Lambda\setminus X) K_j(X).
\end{equation}

\subsection{Renormalisation group map}
\label{sec:rg-map}

We restrict the discussion in this section to a finite volume $\Lambda = \Lambda_N$
with $N > 1$.

For fixed $(\mgen^2, \ggen_0) \in [0, \delta) \times (0, \delta)$,
we define a sequence $\ggen_j = \ggen_j(\mgen^2, \ggen_0)$ as in
\cite[\eqref{log-e:ggendef}]{BBS-saw4-log};
in particular, $\ggen_0(\mgen^2, \ggen_0) = \ggen_0$.
In \cite[Section~\ref{step-sec:Knorms}]{BS-rg-step},
a sequence of norms $\|\cdot\|_{\Wcal_j} = \|\cdot\|_{\Wcal_j(\mgen^2, \ggen_j, \Lambda)}$
parametrised by $(\mgen^2, \ggen_j)$ is defined on maps $\Pcal_j \to \Ncal$.
We let $\Wcal_j$ denote the subspace of $\Kcal_j$
consisting of all elements having finite $\Wcal_j$ norm.
Note that $\Wcal_0 = \Kcal_0 \cap \Wcal_0^*$,
where $\Wcal_0^*$ is defined in Section~\ref{sec:Ksmooth}.

In \cite[\eqref{log-e:mass-scale}--\eqref{log-e:chidef}]{BBS-saw4-log},
a function $\chicCov_j = \chicCov_j(m^2)$ (denoted $\chi_j$ in \cite{BBS-saw4-log})
is defined in such a way that $\chicCov_j$ decays exponentially
when $j$ is sufficiently large depending on $m$.
We write $\chicCovgen_j = \chicCov_j(\mgen^2)$.
Given a constant $\alpha > 0$,
we define the (finite-volume)
renormalisation group domains $\domRG_j \subset \R^3 \oplus \Wcal_j$ by
\begin{equation}
\label{e:domRG}
\domRG_j(\mgen^2, \ggen_j, \Lambda)
  =
\DV_j
  \times
B_{\Wcal_j(\mgen^2, \ggen_j, \Lambda)}(\alpha \chicCovgen_j \ggen_j^3),
\end{equation}
\begin{equation}
\DV_j = \DV_j(\ggen_j)
=
\{ (g, \nu, z) : C^{-1}_\DV \ggen_j < g < C_\DV \ggen_j, |z|, L^{2j} |\nu| < C_\DV \ggen_j \}.
\end{equation}
This definition of $\DV_j$ is consistent with \eqref{e:DV0} when $j = 0$.
We let $\Igen_j(\mgen^2)$ be the neighbourhood of $\mgen^2$ defined by
\begin{equation}
\lbeq{Itilint}
    \Igen_j = \Igen_j(\mgen^2) =
    \begin{cases}
    [\frac 12 \mgen^2, 2 \mgen^2] \cap \Iint_j & (\mgen^2 \neq 0)
    \\
    [0,L^{-2(j-1)}] \cap \Iint_j & (\mgen^2 =0)
    \end{cases},
\end{equation}
where $\Iint_j = [0, \delta]$ if $j < N$ and $\Iint_N = [\delta L^{-2 (N - 1)}, \delta]$.
The main result of \cite{BS-rg-step} is the construction of the
renormalisation group map on the domains $\domRG_j$.
Although \cite{BS-rg-step} constructs finite- and infinite-volume
versions of this map, we only discuss the finite-volume map here.

For $m^2 \in \Igen_j(\mgen^2)$, the finite-volume renormalisation group
map at scale $j = 1, \ldots, N - 1$ is a map
$\domRG_j(\mgen^2, \ggen_j, \Lambda) \to \R^3 \oplus \Wcal_{j+1}(\mgen^2, \ggen_{j+1}, \Lambda)$,
which we denote
\begin{equation}
\label{e:RGmap}
(V_j, K_j) \mapsto (V_{j+1}, K_{j+1}).
\end{equation}
The first component of this map takes the form
\begin{equation}
\label{e:Vflow}
V_{j+1} = V_{\pt,j+1}(V_j) + R_{j+1}(V_j, K_j),
\end{equation}
where the map $V_{\pt,j+1}$ defined in \cite{BBS-rg-pt}
captures the second-order evolution of $V_j$, and $R_{j+1}$
is a third-order contribution.
The main properties of the map \eqref{e:RGmap} are listed in
\cite[Section~\ref{log-sec:rgiv}]{BBS-saw4-log}.
Importantly, the renormalisation group map preserves the
circle product in the sense that
\begin{equation}
(I_{j+1} \circ K_{j+1})(\Lambda) = \Ex_{C_{j+1}}\theta (I_j \circ K_j)(\Lambda).
\end{equation}
Since
$\Pcal_N(\Lambda)=\{\varnothing,\Lambda_N\}$,
this means that, if $(V_0, K_0) = (V^\pm_0, K^\pm_0)$ and if the renormalisation
group map can be iterated $N$ times with this choice of initial condition, then
\begin{equation}
\label{e:ZVKN}
Z_N = I_N(\Lambda) + K_N(\Lambda) = e^{-\sum_{x\in\Lambda} V_{N;x}} (1 + W_N(\Lambda)) + K_N(\Lambda).
\end{equation}

\subsection{Renormalisation group flow}
\label{sec:flow}

The following theorem is an extension of \cite[Proposition~\ref{log-prop:flow-flow}]{BBS-saw4-log}
to non-trivial $K_0$. Such an extension is possible,
with only minor modifications to the proof of the $K_0 = \1_\varnothing$ case,
due to the generality allowed by the main result of \cite{BBS-rg-flow}.

The theorem provides, for any $N \ge 1$ and for initial error coordinate $K_0$
in a specified domain, a choice of initial condition $(\nu_0^c,z_0^c)$
for which there exists
a finite-volume renormalisation group flow $(V_j, K_j) \in \domRG_j$ for $0 \le j \le N$.
In order to ensure a degree of consistency amongst the sequences $(V_j, K_j)$, which depend on
the volume $\Lambda_N$, a notion of consistency must be imposed upon the collection of initial
error coordinates $K_{0,\Lambda} \in \Kcal_0(\Lambda)$ for varying $\Lambda$.
Specifically, the family $K_{0,\Lambda}$ is required to satisfy the property $(\Zd)$ of
\cite[Definition~\ref{step-defn:KZd}]{BS-rg-step}.
We refer to any such family as a $\Lambda$-family.
As discussed in \cite[Definition~\ref{step-defn:KZd}]{BS-rg-step},
any $\Lambda$-family
induces an infinite-volume error coordinate $K_{0,\Zd} \in \Kcal_0(\Zd)$ in a natural way.

\begin{theorem}
\label{thm:flow-flow}
Let $d = 4$.
There exists a constant $a_* > 0$ and continuous functions $\nu_0^c, z_0^c$
of $(m^2, g_0, K_0)$, defined for $(m^2, g_0) \in [0, \delta]^2$
(for some $\delta > 0$ sufficiently small) and for any $K_0 \in \Wcal_0(m^2, g_0, \Zd)$
with $\|K_0\|_{\Wcal_0(m^2, g_0, \Zd)} \leq a_* g_0^3$, such that
the following holds for $g_0 > 0$:
if $K_{0,\Lambda} \in \Kcal_0(\Lambda)$ is a $\Lambda$-family
that induces the infinite-volume coordinate $K_0$, and if
\begin{equation}
\label{e:flow-flow-ic}
V_0 = V_0^c(m^2, g_0, K_0) = (g_0, \nu_0^c(m^2,g_0,K_0), z_0^c(m^2,g_0,K_0)),
\end{equation}
then for any $N \in \N$ and $m^2 \in [\delta L^{-2 (N - 1)}, \delta]$,
there exists a sequence $(V_j, K_j) \in \domRG_j(m^2, g_0, \Lambda)$
such that
\begin{equation}
  \label{e:VjKjDj}
  (V_{j+1},K_{j+1}) = (V_{j+1}(V_j, K_j), K_{j+1}(V_j, K_j)) \text{ for all } j < N
\end{equation}
and \eqref{e:ZjIjKj} is satisfied.
Moreover, $\nu_0^c,z_0^c$ are continuously differentiable in
$g_0 \in (0, \delta)$ and $K_0 \in B_{\Wcal_0(m^2, g_0, \Lambda)}(a_* g_0^3)$, and
\begin{align}
&\nu_0^c(m^2,0,0) = z_0^c(m^2,0,0) = 0,
\quad
\ddp{\nu_0^c}{g_0} = O(1),
\quad
\label{e:z0est}
\ddp{z_0^c}{g_0} = O(1),
\end{align}
where the estimates above hold uniformly in $m^2 \in [0, \delta]$.
\end{theorem}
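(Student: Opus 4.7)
The plan is to reduce Theorem~\ref{thm:flow-flow} to the case $K_0 = \1_\varnothing$ treated in Proposition~\ref{log-prop:flow-flow} of \cite{BBS-saw4-log}, by invoking the stable manifold construction of \cite{BBS-rg-flow} in the form that allows a nonzero but small initial $K_0$. The abstract dynamical-systems theorem in \cite{BBS-rg-flow} is tailored to construct a stable manifold for the Gaussian fixed point $(V, K) = 0$ of the renormalisation group map $(V_j, K_j) \mapsto (V_{j+1}, K_{j+1})$, under the sole hypothesis that $\|K_0\|_{\Wcal_0} \le a_* g_0^3$ for a sufficiently small constant $a_*$. Since the proof of \cite[Proposition~\ref{log-prop:flow-flow}]{BBS-saw4-log} uses this theorem only with $K_0 = \1_\varnothing$, my task reduces to observing that its argument runs verbatim when $K_0$ is an arbitrary element of $B_{\Wcal_0}(a_* g_0^3)$, and then tracking the $K_0$-dependence of $(\nu_0^c, z_0^c)$.

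First I would verify that the hypotheses of the main theorem of \cite{BBS-rg-flow} are satisfied for our RG map, in particular that the second-order flow $V_{\pt,j+1}(V_j)$ captures the marginal/relevant evolution to the required accuracy and that the remainder $R_{j+1}(V_j, K_j)$ together with the $K_j$-flow obey the contraction estimates on $\domRG_j(m^2, g_0, \Lambda)$. These facts were established in \cite{BS-rg-step, BBS-saw4-log} without any structural assumption on $K_0$ beyond membership in $\Wcal_0$, so they transfer directly. Applied to the $\Lambda$-family $K_{0,\Lambda}$ associated with a given infinite-volume $K_0$, this yields the desired critical parameters $(\nu_0^c(m^2, g_0, K_0), z_0^c(m^2, g_0, K_0))$ and a sequence $(V_j, K_j) \in \domRG_j$ satisfying \eqref{e:VjKjDj}. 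Equation~\eqref{e:ZjIjKj} then holds at every scale because the RG map is constructed in \cite{BS-rg-step} so as to preserve the circle product representation of $Z_j$.

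Next I would extract the regularity. Joint continuity of $(\nu_0^c, z_0^c)$ in $(m^2, g_0, K_0)$ is inherited from the continuity of the RG map in its inputs, combined with the persistence of the stable manifold under perturbation, exactly as in \cite{BBS-rg-flow}. To upgrade to $C^1$ dependence on $(g_0, K_0)$ (and on $(g_0, \nu_0, z_0, K_0)$ away from $\gamma_0 = 0$, which however is not needed here), I would apply the implicit function theorem of Section~\ref{sec:IFT} to the fixed-point equation that characterises $(\nu_0^c, z_0^c)$ as the unique initial condition keeping the flow in the stable domain, using that the RG map is $C^1$ in its coordinates on the domain where the $\Wcal_j$ norms are finite. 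The identities $\nu_0^c(m^2, 0, 0) = z_0^c(m^2, 0, 0) = 0$ are immediate because $(V_0, K_0) = 0$ is a fixed point of the RG map and therefore lies on its own stable manifold, and the bounds $\partial \nu_0^c / \partial g_0, \partial z_0^c / \partial g_0 = O(1)$ are obtained by differentiating the fixed-point equation at $(g_0, K_0) = 0$ and reading off the derivative from the explicit $g_0$-dependence of the second-order flow $V_{\pt,j+1}$, as in \cite[Proposition~\ref{log-prop:flow-flow}]{BBS-saw4-log}.

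The main obstacle will be ensuring that the additional freedom in $K_0$ is compatible with the delicate bookkeeping of the scale-dependent norms $\Wcal_j(m^2, \ggen_j, \Lambda)$ and with the consistency requirement between the finite-volume $K_{0,\Lambda}$ and the infinite-volume $K_0$ that is used to define $(\nu_0^c, z_0^c)$. In particular, one must confirm that a single critical initial condition simultaneously works for all finite volumes indexed by a $\Lambda$-family inducing the given infinite-volume coordinate, which is precisely what the $(\Zd)$ property of \cite[Definition~\ref{step-defn:KZd}]{BS-rg-step} is engineered to guarantee; verifying this compatibility, and propagating the $K_0$-dependence through the norm estimates of \cite{BBS-rg-flow} to the derivative bounds \eqref{e:z0est}, constitutes the technical heart of the argument.
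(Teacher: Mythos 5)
Your overall plan matches the paper's: the proof is carried out by relaxing the hypothesis $K_0 = \1_\varnothing$ in the proofs of \cite[Propositions~8.1, 6.1]{BBS-saw4-log}, invoking the structural stability theorem \cite[Theorem~1.4]{BBS-rg-flow} which already allows $\|K_0\|_{\Wcal_0} \le a_* g_0^3$ rather than $K_0=\1_\varnothing$, and observing (via \cite[Remark~1.5]{BBS-rg-flow}) that no essential changes to the argument result. The identity $\nu_0^c(m^2,0,0)=z_0^c(m^2,0,0)=0$ and the $O(1)$ derivative bounds are read off exactly as you describe.

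The one place where you diverge from the paper is the $C^1$ regularity of $(\nu_0^c, z_0^c)$ in $(g_0, K_0)$. You propose applying the implicit function theorem of Section~\ref{sec:IFT} to a ``fixed-point equation characterising $(\nu_0^c, z_0^c)$ as the unique initial condition keeping the flow in the stable domain.'' The paper instead inherits this smoothness directly from \cite{BBS-rg-flow}: the stable manifold there is already shown to depend smoothly on the perturbation and on the initial error coordinate, and the paper transfers this by passing to the transformed variables $\check{V}_j$ of \cite[Section~8.3]{BBS-saw4-log}, defining the global solution $\check{x}^c_j(m^2,g_0,K_0)$, and reading off $(\nu_0^c,z_0^c)$ from $\check{x}^c_0$. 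Your route is not wrong in spirit, but it is circular in the sense that the ``fixed-point equation'' (that an infinite forward orbit stay in $\domRG_j$ for all $j$) is itself the output of the stable manifold theorem, not an independently given $C^1$ finite-dimensional equation to which Proposition~\ref{prop:IFT} could be applied. Note also that Proposition~\ref{prop:IFT} is specifically engineered to handle the one-sided differentiability in a scalar parameter $y$, which is irrelevant at this stage of the argument (there is no $\gamma_0$ in Theorem~\ref{thm:flow-flow}); it is deployed later, in Propositions~\ref{prop:nuzhat} and \ref{prop:changevariables1}, to solve the self-consistency problem where $K_0$ itself depends on $(\nu_0,z_0,\gamma_0)$. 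So you should take the smoothness straight from \cite{BBS-rg-flow} rather than re-derive it by IFT.
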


\begin{proof}
The proof results from small modifications to the proofs of
\cite[Proposition~\ref{log-prop:flow-flow}]{BBS-saw4-log} and then to
\cite[Proposition~\ref{log-prop:KjNbd}]{BBS-saw4-log},
where (in both cases) we relax the requirement that $K_0 = \1_\varnothing$,
which was chosen in \cite{BBS-saw4-log} due to the fact that
$K_0 = \1_\varnothing$ when $\gamma=0$.
The more general condition that $K_0 \in B_{\Wcal_0(m^2, g_0, \Lambda)}(a_* g_0^3)$
comes from the hypothesis of \cite[Theorem~\ref{flow-thm:flow}]{BBS-rg-flow}
when $(m^2, g_0) = (\mgen^2, \ggen_0)$.
By \cite[Remark~\ref{flow-rk:Nrad}]{BBS-rg-flow}, no major changes to the proof
result from this choice of $K_0$.
The following paragraph outlines
in more detail the modifications to the proof of
\cite[Proposition~\ref{log-prop:flow-flow}]{BBS-saw4-log}.

By \cite[Theorem~\ref{flow-thm:flow}]{BBS-rg-flow} and
\cite[Corollary~\ref{flow-cor:masscont}]{BBS-rg-flow},
for any $(\mgen^2, \ggen_0) \in (0, \delta)^2$ and
$\tilde K_0 \in B_{\Wcal_0(\mgen^2, \ggen_0, \Zd)}(a_* \ggen_0^3)$,
there is a neighbourhood
${\sf N}(\ggen_0, \tilde K_0)$ of $(\ggen_0, \tilde K_0)$
such that for all
$(m^2, g_0, K_0) \in \Igen(\mgen^2) \times {\sf N}(\ggen_0, \tilde K_0)$,
there is an infinite-volume renormalisation group flow
\begin{equation}
(\Vch_j, K_j) = \xch^d_j(\mgen^2, \ggen_0, \tilde K_0; m^2, g_0, K_0)
\end{equation}
in \emph{transformed variables} $(\Vch_j, K_j)$.
The transformed variables are defined in
\cite[Section~\ref{log-sec:trans}]{BBS-saw4-log} and a flow
in the original variables can be recovered from the transformed flow.
The global solution is defined by
$\xch^c_j(m^2, g_0, K_0) = \xch^d_j(m^2, g_0, K_0; m^2, g_0, K_0)$
(or $\xch^c \equiv 0$ if $g_0 = 0$).
By \cite[Remark~\ref{flow-rk:Nrad}]{BBS-rg-flow},
the proof of regularity of $\xch^c$ can proceed as in \cite{BBS-saw4-log}.
The functions $(\nu_0^c, z_0^c)$ are given by the $(\nu_0, z_0)$ components
of $\xch^c_0 = (\Vch_0, K_0) = (V_0, K_0)$.
\end{proof}

\begin{rk}
The proof of \cite[Proposition~\ref{log-prop:flow-flow}]{BBS-saw4-log},
hence of Theorem~\ref{thm:flow-flow},
makes important use of the parameter $\ggen_0$ in order to prove regularity
of the renormalisation group flow in $g_0$. However, once the flow has been
constructed, we can and do set $\ggen_0 = g_0$.
\end{rk}

Suppose now that we are given some sufficiently small $\hat g_0 > 0$ and
a $\Lambda$-family $K_{0,\Lambda} \in \Wcal_0(m^2, \hat g_0, \Lambda)$
that satisfies
the bounds $\|K_{0,\Lambda}\|_{\Wcal_0(m^2, \hat g_0, \Lambda)} \le a_* \hat g_0^3$.
Then in any fixed volume $\Lambda = \Lambda_N$, we can generalise \eqref{e:Z0def} by defining
$Z_0 = (I_0 \circ K_0)(\Lambda)$ (\eqref{e:Z0def} is recovered when we set $K_0 = K^\pm_0$).
We also generalise \refeq{ZNdef} as $Z_N = \Ex_C\theta Z_0$, and 
let $\hat\chi_N(m^2, \hat g_0, K_0, \nu_0, z_0)$ be defined as in \eqref{e:chibarm}
from this $Z_N$ (generalising \eqref{e:chibarm}).
Then an analogue of \cite[Theorem~\ref{log-thm:suscept-diff}]{BBS-saw4-log}
(which corresponds to the case $K_0 = \1_\varnothing$)
follows from Theorem~\ref{thm:flow-flow}.
That is, if $(\nu_0^c, z_0^c) = (\nu_0^c(m^2, \hat g_0, K_0), z_0^c(m^2, \hat g_0, K_0))$, then
the limit $\hat\chi = \lim_{N\to\infty} \hat\chi_N$ exists anticipated
\begin{align}
\label{e:chi-m}
\hat\chi \left( m^2,\hat g_0,K_0, \nu_0^c,  z_0^c \right)
  &=
\frac{1}{m^2}, \\
\label{e:chiprime-m}
\ddp{\hat\chi}{\nu_0} \left(m^2,\hat g_0, K_0,\nu_0^c,  z_0^c \right)
  &\sim
- \frac{1}{m^4} \frac{c(\hat g_0^*, K_0)}{(\hat g_0^*\bubble_{m^2})^{1/4}}
  \quad \text{as $(m^2,\hat g_0) \to (0,\hat g_0^*)$},
\end{align}
where $c$ is a continuous function
and the \emph{bubble diagram} $\bubble_{m^2}$ is
is asymptotic to $(2\pi^2)^{-1} \log m^{-2}$, as $m^2 \downarrow 0$, when $d = 4$.
For instance, \eqref{e:chi-m} follows from \eqref{e:chibarm},
\eqref{e:ZVKN}, the bound on $K_N$ in
Theorem~\ref{thm:flow-flow},
and the bound on $W_N$ in \cite[\eqref{IE-e:W-logwish}]{BS-rg-IE}.
See \cite[Section~\ref{log-sec:suscept-diff-pf}]{BBS-saw4-log}
for details and for the proof of \eqref{e:chiprime-m}.

We wish to obtain a version of \eqref{e:chi-m}--\eqref{e:chiprime-m}
with the initial conditions of Section~\ref{sec:IK}, i.e.\ with
$(\hat g_0, K_0) = (g_0, K^+_0)$ (if $\gamma_0 > 0$) or $(\hat g_0, K_0) = (g_0 + 4d\gamma_0, K^-_0)$
(if $\gamma_0 < 0$).
It is straightforward to verify that $K^\pm_0 \in \Kcal_0$.
For instance, the fact that $K^\pm_0$ is supersymmetric
(which is required of all elements of $\Kcal_0$) follows
from the fact that $K^\pm_{0,x}$ is a function of $\tau_x$
(see \cite[Section~\ref{pt-sec:bulksym}]{BBS-rg-pt} for more on this topic).
It also follows from the definition that
the finite-volume coordinates $K^\pm_{0,\Lambda}$ form a $\Lambda$-family.

Moreover,
by Proposition~\ref{prop:KWcal}, if
$|\gamma_0|$ is sufficiently small (depending on $g_0$; we now take $\ggen_0=g_0$)
then $K_0 = K^\pm_0$ satisfies the bound required by Theorem~\ref{thm:flow-flow}.
However, we cannot apply the theorem immediately with this choice
of $K_0$,
due to the fact that $K^\pm_0$
depends on $(g_0, \nu_0, z_0)$.
We resolve this issue in the next section.

\subsection{Critical parameters}
\label{sec:nu0z0c}

For convenience, let
\begin{equation}
\lbeq{g0hatdef}
\hat g_0 = \hat g_0(g_0, \gamma_0) = g_0 + 4 d \gamma_0 \1_{\gamma_0 < 0}.
\end{equation}
Thus, $\hat g_0$ is the coefficient of $\tau_x^2$ in
$V^+_{0,x}$
when $\gamma_0 \ge 0$,
and in $V^-_{0,x}$
when $\gamma_0 < 0$.
Recall the function $K_0(g_0, \gamma_0, \nu_0, z_0)$
defined in \eqref{e:K0def}.
We wish to initialise the renormalisation group with $(\nu_0, z_0)$ a solution
to the system of equations
\begin{align}
&\nu_0 = \nu_0^c(m^2, \hat g_0(g_0, \gamma_0), K_0(g_0, \gamma_0, \nu_0, z_0)), \label{e:mu0c}
\\
&z_0 = z_0^c(m^2, \hat g_0(g_0, \gamma_0), K_0(g_0, \gamma_0, \nu_0, z_0)) \label{e:z0c}
.
\end{align}
Such a choice of $(\nu_0, z_0)$ will be critical for $K_0$,
where $K_0$ is itself evaluated at this same choice of $(\nu_0, z_0)$.

When $\gamma_0 = 0$, we get $K_0 = \1_\varnothing$, so $K_0$ no longer depends on $(\nu_0, z_0)$
and this system is solved by $(\nu_0^c(m^2, g_0, 0), z_0^c(m^2, g_0, 0))$
for any (small) $m^2, g_0 \geq 0$.
Local solutions for $\gamma_0 \neq 0$ can then be
constructed using a version of the implicit function theorem from \cite{LS14}
that allows for the continuous but non-smooth behaviour of $K_0$ in $\gamma_0$.
In order to obtain global solutions with certain desired regularity properties
(needed in the next section), we make use of Proposition~\ref{prop:IFT},
which is based on a version of the implicit function theorem from \cite{LS14}.

Suppose $\delta > 0$ and suppose $r : [0, \delta] \to [0, \infty)$
is a continuous positive-definite function; the latter
means that $r(x) > 0$ if $x > 0$ and $r(0) = 0$.
We define
\begin{equation}
\lbeq{Ddef}
D(\delta, r)
    =
\{ (w, x, y) \in [0, \delta]^2 \times (-\delta, \delta) : |y| \leq r(x) \}
\end{equation}
and we let $C^{0,1,\pm}(D(\delta, r))$ denote the space of continuous functions on $D(\delta, r)$
that are $C^1$ in $(x, y)$ away from $y = 0$, $C^1$ in $x$ everywhere,
and have left- and right-derivatives in $y$ at $y = 0$.
In our applications, we take $w = m^2$, $x = g_0$ or $\beta$,
and $y = \gamma_0$ or $\gamma$.

\begin{prop}
\label{prop:nuzhat}
There exists a continuous positive-definite function $\hat r : [0, \delta] \to [0, \infty)$
and continuous functions
$\hat\nu_0^c, \hat z_0^c \in C^{0,1,\pm}(D(\delta, \hat r))$ such that
the system \eqref{e:mu0c}--\eqref{e:z0c} is solved by $(\nu_0, z_0) = (\hat\nu_0^c, \hat z_0^c)$
whenever $(m^2, g_0, \gamma_0) \in D(\delta, \hat r)$.
Moreover, these functions satisfy the bounds
\begin{equation}
\label{e:hat-est}
\hat\nu_0^c = O(g_0),
\quad
\hat z_0^c = O(g_0)
\end{equation}
uniformly in $(m^2, \gamma_0)$.
\end{prop}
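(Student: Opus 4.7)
The plan is to apply Proposition~\ref{prop:IFT} (the implicit function theorem in the $C^{0,1,\pm}$ category) to the map
\[
F(m^2, g_0, \gamma_0, \nu_0, z_0) = \bigl(\nu_0 - \nu_0^c(m^2, \hat g_0, K_0),\; z_0 - z_0^c(m^2, \hat g_0, K_0)\bigr),
\]
where $\hat g_0 = \hat g_0(g_0, \gamma_0)$ and $K_0 = K_0(g_0, \gamma_0, \nu_0, z_0)$ are defined by \eqref{e:g0hatdef} and \eqref{e:K0def}. The base solution sits on the slice $\gamma_0 = 0$: there $K_0 = \1_\varnothing$ independently of $(\nu_0, z_0)$, so $F = 0$ is solved by $(\nu_0, z_0) = (\nu_0^c(m^2, g_0, \1_\varnothing), z_0^c(m^2, g_0, \1_\varnothing))$, which by Theorem~\ref{thm:flow-flow} is continuous in $(m^2, g_0)$, is $C^1$ in $g_0$, and satisfies the $O(g_0)$ bound via \eqref{e:z0est} and the vanishing at $g_0 = 0$.

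The next step is the choice of domain. Fix any $a' > a$ (say $a' = 2$) and set $\hat r(g_0) = c\,g_0^3$ with $c > 0$ small. Then $|\gamma_0| \le \hat r(g_0)$ together with Proposition~\ref{prop:KWcal} yields $\|K_0\|_{\Wcal_0} = O(|\gamma_0|) \le a_* g_0^3$, placing $K_0$ inside the domain of Theorem~\ref{thm:flow-flow}. To verify the invertibility hypothesis I would compute the partial Jacobian by the chain rule:
\[
\frac{\partial F_i}{\partial t} = \delta_{i t} - D_{K_0}(\nu_0^c, z_0^c)_i\!\left[\frac{\partial K_0}{\partial t}\right],\qquad t \in \{\nu_0, z_0\}.
\]
The Fréchet derivatives $D_{K_0}\nu_0^c$, $D_{K_0} z_0^c$ are uniformly bounded on the restricted domain by the $C^1$ part of Theorem~\ref{thm:flow-flow}, while Proposition~\ref{prop:Ksmooth} provides the estimate \eqref{e:ddpK} which yields $\|\partial K_0/\partial \nu_0\|_{\Wcal_0}$, $\|\partial K_0/\partial z_0\|_{\Wcal_0} = O(|\gamma_0|)$ once the $\h_0^8$ factor is absorbed by the two-scale definition of the $\Wcal_0$ norm. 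Thus $\partial F/\partial(\nu_0, z_0) = I + O(|\gamma_0|)$ is uniformly invertible after shrinking $c$.

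Joint continuity of $F$ on $D(\delta, \hat r)$, together with $C^1$ regularity away from $\gamma_0 = 0$ and one-sided $\gamma_0$-derivatives at the origin, follow by composing the analogous properties of $\nu_0^c, z_0^c$ (Theorem~\ref{thm:flow-flow}) with those of $K_0$ (Proposition~\ref{prop:Ksmooth}). Proposition~\ref{prop:IFT} therefore produces the desired solution $(\hat\nu_0^c, \hat z_0^c) \in C^{0,1,\pm}(D(\delta, \hat r))$. The bound \eqref{e:hat-est} follows by a triangle inequality: the base value at $\gamma_0 = 0$ is $O(g_0)$, the correction along $\gamma_0$ is controlled by the uniformly bounded derivative of the local inverse times $\|K_0\|_{\Wcal_0} = O(|\gamma_0|) = O(g_0^3)$, and the shift from $g_0$ to $\hat g_0$ is $O(|\gamma_0|) = O(g_0^3)$.

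The main obstacle is the non-smoothness of $K_0$ in $\gamma_0$ at $\gamma_0 = 0$: the map $\gamma_0 \mapsto K_0$ is merely left- and right-differentiable there (Proposition~\ref{prop:Ksmooth}), so a classical $C^1$ implicit function theorem does not apply. This is precisely the reason Proposition~\ref{prop:IFT} is formulated in the $C^{0,1,\pm}$ category, and it also explains why the conclusion can only guarantee one-sided $\gamma_0$-differentiability of $\hat\nu_0^c$ and $\hat z_0^c$ at the origin rather than full $C^1$ regularity there.
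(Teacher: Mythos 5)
Your proposal follows essentially the same route as the paper: set up the map $F(m^2,g_0,\gamma_0,\nu_0,z_0)=(\nu_0,z_0)-(\nu_0^c,z_0^c)(m^2,\hat g_0,K_0)$, place $(m^2,\hat g_0, K_0)$ in the domain of Theorem~\ref{thm:flow-flow} via Proposition~\ref{prop:KWcal} with $\hat r(g_0)=cg_0^3$, verify the Jacobian in $(\nu_0,z_0)$ is invertible along $\gamma_0=0$ using \eqref{e:ddpK}, apply Proposition~\ref{prop:IFT}, and extract \eqref{e:hat-est} from the $O(g_0)$ base value plus a fundamental-theorem-of-calculus/triangle-inequality correction in $\gamma_0$. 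The only cosmetic difference is that you express invertibility as $I+O(|\gamma_0|)$ uniformly, whereas the paper notes that \eqref{e:ddpK} forces $\partial K_0/\partial\nu_0=\partial K_0/\partial z_0=0$ exactly at $\gamma_0=0$, so $D_{\nu_0,z_0}F$ is exactly the identity at the base point — cleaner, and it sidesteps your handwaved absorption of the $\h_0^8$ factor, which is unneeded since Proposition~\ref{prop:IFT} only requires invertibility on the $\gamma_0=0$ slice.
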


\begin{proof}
Recall the definition of $\ghat_0$ in \refeq{g0hatdef}, and
let
\begin{equation}
F(m^2, g_0, \gamma_0, \nu_0, z_0)
= (\nu_0, z_0)
  -
  (\nu_0^c(m^2, \hat g_0, K_0),
  z_0^c(m^2, \hat g_0, K_0)
),
\end{equation}
where $K_0 = K_0(g_0, \gamma_0, \nu_0, z_0)$.
Then for $\delta > 0$ small and an appropriate constant $c > 0$ (depending on $a_*$),
$F$ is well-defined on
\begin{equation}
\{ (m^2, g_0, \gamma_0, \nu_0, z_0) : (m^2, \hat g_0, \gamma_0) \in D(\delta, c g_0^3),
|\nu_0|, |z_0| \leq C_\DV g_0 \}.
\end{equation}
Indeed, for $(m^2, g_0, \gamma_0, \nu_0, z_0)$ in this domain,
Proposition~\ref{prop:KWcal} (with $\ggen_0 = g_0$) implies that $(m^2, \hat g_0, K_0)$ is in the domain of
$(\nu_0^c, z_0^c)$.
By Theorem~\ref{thm:flow-flow} and Proposition~\ref{prop:Ksmooth},
$F$ is $C^1$ in $(g_0, \nu_0, z_0)$
and also in $\gamma_0$ away from $\gamma_0 = 0$,
continuous in $m^2$, and has one-sided derivatives in $\gamma_0$ at $\gamma_0 = 0$.

For fixed $(\bar m^2, {\bar g_0}) \in [0, \delta]^2$,
set $({\bar\nu_0}, \bar z_0) = (\nu_0^c(\bar m^2, \bar g_0, 0), z_0^c(\bar m^2, \bar g_0, 0))$
so that
\begin{equation}
F(\bar m^2, \bar g_0, 0, \bar\nu_0, \bar z_0) = (0, 0).
\end{equation}
By \eqref{e:ddpK}, at $(\bar g_0, 0, \bar\nu_0, \bar z_0)$,
\begin{equation}
\frac{\partial K_{0,x}}{\partial\nu_0}
= \frac{\partial K_{0,x}}{\partial z_0} = 0.
\end{equation}
It follows that $D_{\nu_0,z_0} F(\bar m^2, \bar g_0, 0, \bar\nu_0, \bar z_0)$
is the identity map on $\R^2$.
The existence of $\delta, \hat r$ and $\hat\nu_0^c, \hat z_0^c$
follows from Proposition~\ref{prop:IFT} with
$w = m^2, x = g_0, y = \gamma_0, z = (\nu_0, z_0)$,
and with $r_1(g_0) = c g_0^3$, $r_2(g_0) = C_\DV g_0$.

By the fundamental theorem of calculus, for any $0 < a < \gamma_0$,
\begin{equation}
\hat\nu_0^c(m^2, g_0, \gamma_0)
  =
\hat\nu_0^c(m^2, g_0, a)
  +
\int_a^{\gamma_0} \ddp{\hat\nu_0^c}{\gamma_0} (m^2, g_0, t) \; dt.
\end{equation}
Taking the limit $a\downarrow 0$ and using \eqref{e:z0est}, we obtain
\begin{equation}
|\hat\nu_0^c(m^2, g_0, \gamma_0)|
  \leq
O(g_0)
  +
\gamma_0
\sup_{t \in (0, \gamma_0]}
\left|\ddp{\hat\nu_0^c}{\gamma_0}(m^2, g_0, t)\right|.
\end{equation}
The supremum above is bounded by a constant and so
the first estimate of \eqref{e:hat-est} for $\gamma_0 \geq 0$
follows from the fact that $|\gamma_0| \leq \hat r(g_0)$
(since $\hat r(g_0)$ can be taken as small as desired).
The case $\gamma_0 < 0$ and the second estimate follow similarly.
\end{proof}

\begin{cor}
\label{cor:rhatflow}
Fix $(m^2, g_0, \gamma_0) \in D(\delta, \hat r)$ with $g_0 > 0$ and
$m^2 \in [\delta L^{-2 (N - 1)}, \delta)$ and
set $(V_0, K_0) = (V^\pm_0, K^\pm_0)$
with $(\nu_0, z_0) = (\hat\nu_0^c, \hat z_0^c)$.
Then for any $N \in \N$,
there exists a sequence $(V_j, K_j) \in \domRG_j(m^2, g_0, \Lambda)$
such that
\begin{equation}
  \label{e:VjKjDj-hat}
  (V_{j+1},K_{j+1}) = (V_{j+1}(V_j, K_j), K_{j+1}(V_j, K_j)) \text{ for all } j < N
\end{equation}
and \eqref{e:ZjIjKj} is satisfied.
Moreover, the second-order evolution equation for $V_j$ is independent of $\gamma_0$.
\end{cor}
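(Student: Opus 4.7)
The plan is to recognise this corollary as essentially the composition of Proposition~\ref{prop:nuzhat} with Theorem~\ref{thm:flow-flow}, specialised to $K_0 = K_0^\pm$. Concretely, I would set $K_0 = K_0^\pm(g_0, \gamma_0, \hat\nu_0^c, \hat z_0^c)$ and show that all hypotheses of Theorem~\ref{thm:flow-flow} are met with this choice, then read off the desired flow. The separate moreover-statement about the $\gamma_0$-independence of the second-order evolution will be a consequence of the explicit structure of $V_{\pt,j+1}$ from \cite{BBS-rg-pt}.

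First I would verify the structural requirements on $K_0^\pm$. Membership $K_0^\pm \in \Kcal_0$ and the $\Lambda$-family property are noted just before Section~\ref{sec:nu0z0c}: supersymmetry holds because $K_{0,x}^\pm$ is a function of $\tau_x$, and the product form $K_0^\pm(X) = \prod_{x \in X} K_{0,x}^\pm$ trivially yields consistency across volumes and induces an infinite-volume coordinate on $\Zd$. Next, I would check the norm bound $\|K_0^\pm\|_{\Wcal_0(m^2, g_0, \Zd)} \le a_* g_0^3$. By Proposition~\ref{prop:KWcal} with $\ggen_0 = g_0$, one has $\|K_0^\pm\|_{\Wcal_0} \le O(|\gamma_0|)$ provided $|\gamma_0| \le O(g_0^{1+a'})$. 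Since Proposition~\ref{prop:nuzhat} leaves us free to shrink $\hat r$, I would replace $\hat r$ by $\min\{\hat r,\, c g_0^3\}$ for a suitably small constant $c$, which forces $|\gamma_0| \le c g_0^3$ throughout $D(\delta, \hat r)$ and yields the required bound.

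With these hypotheses in hand, I would invoke Theorem~\ref{thm:flow-flow} applied to $(m^2, g_0, K_0) = (m^2, \hat g_0(g_0, \gamma_0), K_0^\pm)$. The self-consistency defining $(\hat\nu_0^c, \hat z_0^c)$ in Proposition~\ref{prop:nuzhat} is precisely the identity \eqref{e:flow-flow-ic} that Theorem~\ref{thm:flow-flow} requires, so the theorem directly produces the finite-volume sequence $(V_j, K_j) \in \domRG_j(m^2, g_0, \Lambda)$ satisfying \eqref{e:VjKjDj-hat} and \eqref{e:ZjIjKj}. A minor bookkeeping point I would pay attention to here is that $\hat g_0 \ne g_0$ when $\gamma_0 < 0$; however, the translation $\hat g_0 \mapsto g_0$ only adjusts constants in \refeq{DV0} and is absorbed into the choice of $C_\DV$, so the domains $\domRG_j(m^2, g_0, \Lambda)$ and $\domRG_j(m^2, \hat g_0, \Lambda)$ coincide up to inconsequential constants for $|\gamma_0|$ in our range.

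For the second-order claim, I would appeal directly to \eqref{e:Vflow}: $V_{j+1} = V_{\pt,j+1}(V_j) + R_{j+1}(V_j, K_j)$. The map $V_{\pt,j+1}$ is defined in \cite{BBS-rg-pt} as an explicit polynomial in the coupling constants $(g_j, \nu_j, z_j)$ whose coefficients depend only on the covariance $C_{j+1}$, and neither of these encodes $\gamma_0$ in any way. Consequently $V_{\pt,j+1}(V_j)$ is identically the same function of $V_j$ as in the $\gamma_0 = 0$ flow of \cite{BBS-saw4-log}; the entire dependence on $\gamma_0$ is funneled through the third-order remainder $R_{j+1}$, which sees $\gamma_0$ only through $K_j$. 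The main obstacle in the proof is really just the careful bookkeeping of paragraph two — ensuring that the generality of Theorem~\ref{thm:flow-flow} (allowing $K_0 \ne \1_\varnothing$) is compatible with shrinking $\hat r$ without disturbing the construction of $(\hat\nu_0^c, \hat z_0^c)$ — since the second-order statement itself is transparent once the structural role of $V_{\pt,j+1}$ is invoked.
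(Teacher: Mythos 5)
Your proposal is correct and follows essentially the same route as the paper's proof: apply Proposition~\ref{prop:KWcal} (shrinking $\hat r$ as needed) to verify the $\Wcal_0$ bound required by Theorem~\ref{thm:flow-flow}, then invoke that theorem together with Proposition~\ref{prop:nuzhat} to get the flow, and finally note that the second-order evolution is governed by $V_{\pt}$, which is defined independently of $\gamma_0$. Your added attention to the $\hat g_0$ versus $g_0$ domain bookkeeping is a genuine point that the paper handles implicitly, but it does not change the structure of the argument.
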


\begin{proof}
By Proposition~\ref{prop:KWcal},
and by taking $\hat r$ smaller if necessary,
$K_0 = K^\pm_0$ satisfies the estimate required by Theorem~\ref{thm:flow-flow}
whenever $(m^2, g_0, \gamma_0) \in D(\delta, \hat r)$. The
existence of the sequence \eqref{e:VjKjDj-hat} then follows from
Theorem~\ref{thm:flow-flow} and Proposition~\ref{prop:nuzhat}.
Although the presence of $\gamma_0$ causes a shift in initial
conditions, the second-order evolution of $V_j$ is still given by the map
$V_\pt$ (see \eqref{e:Vflow}),
which is independent of $\gamma_0$.
\end{proof}

By \eqref{e:chibarm}, $\hat\chi(m^2, g_0, \gamma_0, \nu_0, z_0) = \hat\chi(m^2, g_0, K_0, \nu_0, z_0)$,
where $K_0 = K_0(g_0, \gamma_0, \nu_0, z_0)$ is
defined in \eqref{e:K0def}. Then by \eqref{e:chi-m}--\eqref{e:chiprime-m},
Corollary~\ref{cor:rhatflow}, and \eqref{e:ddpK}, with $\hat g_0 = \hat g_0(g_0, \gamma_0)$, we have
\begin{align}
\label{e:chi-m-hat}
\hat\chi \left( m^2,\hat g_0,\gamma_0, \hat\nu_0^c, \hat z_0^c \right)
  &=
\frac{1}{m^2}, \\
\label{e:chiprime-m-hat}
\ddp{\hat\chi}{\nu_0} \left(m^2,\hat g_0, \gamma_0,\hat\nu_0^c, \hat z_0^c \right)
  &\sim
- \frac{1}{m^4} \frac{c(\hat g_0^*, \gamma_0)}{(\hat g_0^*\bubble_{m^2})^{1/4}}
  \quad \text{as $(m^2,g_0,\gamma_0) \to (0,g_0^*,\gamma_0^*)$},
\end{align}
where $\hat g_0^* = \hat g_0(g_0^*, \gamma_0^*)$ and we write $c(g_0, \gamma_0) = c(g_0, K_0)$.
Although \eqref{e:chiprime-m-hat} depends on $\gamma_0$, this dependence ultimately
only affects the computation of the critical point $\nu_c(\beta, \gamma)$ and the
constants $A_{\beta,\gamma}, B_{\beta,\gamma}$ in the proof of Theorem~\ref{thm:suscept}.
The asymptotic behaviour of the susceptibility in \eqref{e:chieps-asympt} results from the
logarithmic divergence of the bubble diagram $\bubble_{m^2}$ and the exponent $\frac{1}{4}$
that appears in the denominator in \eqref{e:chiprime-m-hat}.

\begin{rk}
We have invoked \eqref{e:ddpK} above in order to satisfy the condition
\begin{equation}
\|\partial K_0/\partial\nu_0\|_{\Wcal_0} \le O(g_0^3)
\end{equation}
required in the proof of \cite[Lemma~\ref{log-lem:gzmuprime}]{BBS-saw4-log}
(see \cite[\eqref{log-e:induct1}]{BBS-saw4-log}). This condition holds trivially
when $K_0$ does not depend on $\nu_0$, as in \eqref{e:chi-m}--\eqref{e:chiprime-m}.
\end{rk}


\subsection{Change of parameters}
\label{sec:nuztilde}

Recall from \eqref{e:chichibar} that
\begin{equation}
\label{e:chichihat}
\chi_N(\beta, \gamma, \nu)
  =
(1 + z_0) \hat\chi_N(m^2, g_0, \gamma_0, \nu_0, z_0),
\end{equation}
whenever the variables on the left- and right-hand sides satisfy
\begin{equation}
\label{e:gg0-re}
g_0 = (\beta - \gamma) (1 + z_0)^2,
\quad
\nu_0 = \nu (1 + z_0) - m^2,
\quad
\gamma_0 = \frac{1}{4d} \gamma (1 + z_0)^2.
\end{equation}
Given $\beta,\gamma,\nu$,
these relations leave free two of the variables
$(m^2, g_0, \gamma_0, \nu_0, z_0)$.
More generally, if any three of the variables
$(\beta, \gamma, \nu, m^2, g_0, \gamma_0, \nu_0, z_0)$
are fixed, then two of the remaining variables are free.
In the following two propositions,
which together form an extension of \cite[Proposition~\ref{log-prop:changevariables}]{BBS-saw4-log},
we fix three variables and show that the addition of the constraints
\begin{equation}
\label{e:crit-constraint}
\nu_0 = \hat \nu_0^c(m^2, g_0, \gamma_0),
  \quad
z_0   = \hat z_0^c(m^2, g_0, \gamma_0)
\end{equation}
allows us to uniquely specify the two remaining variables.
First, in Proposition~\ref{prop:changevariables1},
the three fixed variables are $(m^2, \beta, \gamma)$.

\begin{prop}
\label{prop:changevariables1}
There exist $\delta_* > 0$,
a continuous positive-definite function $r_* : [0, \delta_*] \to [0, \infty)$,
and continuous functions $(\nu^*, g_0^*, \gamma_0^*, \nu_0^*, z_0^*)$
defined for $(m^2, \beta, \gamma) \in D(\delta_*, r_*)$, such that
\eqref{e:gg0-re} and \eqref{e:crit-constraint} hold with $\nu = \nu^*$ and
$(g_0, \gamma_0, \nu_0, z_0) = (g^*_0, \gamma^*_0, \nu^*_0, z^*_0)$.
Moreover,
\begin{gather}
\label{e:gznustarbd}
g_0^* = \beta + O(\beta^2),
\quad
\nu_0^* = O(\beta),
\quad
z_0^* = O(\beta).
\end{gather}
\end{prop}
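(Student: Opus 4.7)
The plan is to eliminate the auxiliary variables one by one, reducing the system \eqref{e:gg0-re}+\eqref{e:crit-constraint} to a single scalar implicit equation for $z_0$ alone, and then invoke Proposition~\ref{prop:IFT}. The first and third identities in \eqref{e:gg0-re} already express $g_0$ and $\gamma_0$ as explicit functions of $(\beta,\gamma,z_0)$, namely $g_0=(\beta-\gamma)(1+z_0)^2$ and $\gamma_0=\tfrac{\gamma}{4d}(1+z_0)^2$. Substituting these into the $z_0$-line of \eqref{e:crit-constraint} yields the scalar equation
\begin{equation}
\Phi(m^2,\beta,\gamma,z_0) := z_0 - \hat z_0^c\bigl(m^2,\,(\beta-\gamma)(1+z_0)^2,\,\tfrac{\gamma}{4d}(1+z_0)^2\bigr) = 0.
\end{equation}
Once this is solved for $z_0 = z_0^*(m^2,\beta,\gamma)$, the remaining quantities are forced: $g_0^*$ and $\gamma_0^*$ come from the two explicit formulas above with $z_0$ replaced by $z_0^*$, then $\nu_0^* = \hat\nu_0^c(m^2,g_0^*,\gamma_0^*)$, and finally $\nu^* = (\nu_0^* + m^2)/(1+z_0^*)$ from the second relation in \eqref{e:gg0-re}.

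To solve $\Phi = 0$ I will apply Proposition~\ref{prop:IFT} with $w = m^2$, $x = \beta$, $y = \gamma$ and dependent variable $z_0$. The estimate $\hat z_0^c = O(g_0)$ from \eqref{e:hat-est} gives $\Phi(m^2,0,0,0) = 0$, and at the base point $(\beta,\gamma,z_0) = (0,0,0)$ both partial derivatives of $\hat z_0^c$ in its $g_0$- and $\gamma_0$-slots are multiplied by factors that vanish, so $\partial\Phi/\partial z_0 = 1$ there regardless of any one-sided issue in $\gamma_0$. The joint regularity of $\Phi$ required by Proposition~\ref{prop:IFT} is inherited from $\hat z_0^c \in C^{0,1,\pm}(D(\delta,\hat r))$ provided by Proposition~\ref{prop:nuzhat}, because the substitution $\gamma_0 = \tfrac{\gamma}{4d}(1+z_0)^2$ preserves the sign of $\gamma_0$ and vanishes precisely when $\gamma = 0$. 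Proposition~\ref{prop:IFT} then produces a continuous $z_0^*$ on some $D(\delta_*,r_*)$ with $r_*$ positive-definite; by shrinking $\delta_*$ and $r_*$ if necessary, $(m^2,g_0^*,\gamma_0^*)$ is kept inside $D(\delta,\hat r)$, so the downstream definitions of $\nu_0^*$ and $\nu^*$ are well posed and continuous.

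The bounds \eqref{e:gznustarbd} then follow by bootstrapping \eqref{e:hat-est}. From $|z_0^*| = |\hat z_0^c(m^2,g_0^*,\gamma_0^*)| = O(g_0^*)$ and the explicit formula for $g_0^*$, choosing $r_*(\beta)$ of order $\beta^2$ forces $|\gamma| = O(\beta^2)$, so $g_0^* = (\beta-\gamma)(1+O(g_0^*))^2 = \beta + O(\beta^2)$, which then gives $z_0^* = O(\beta)$ and $\nu_0^* = \hat\nu_0^c(m^2,g_0^*,\gamma_0^*) = O(\beta)$. The main obstacle is the non-classical differentiability of $\hat z_0^c$ (hence of $\Phi$) at $\gamma_0 = 0$, inherited from the one-sided $\gamma_0$-derivatives of $K_0$ at $\gamma_0 = 0$ established in Proposition~\ref{prop:Ksmooth}; this is exactly what the $C^{0,1,\pm}$ variant of the implicit function theorem in Proposition~\ref{prop:IFT} is designed to accommodate, and once that tool is in hand the rest of the argument is a routine chain of substitutions and estimates.
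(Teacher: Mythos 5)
Your approach — substituting the explicit formulas $g_0=(\beta-\gamma)(1+z_0)^2$ and $\gamma_0=\tfrac{\gamma}{4d}(1+z_0)^2$ into the $z_0$-line of \eqref{e:crit-constraint} to get a single scalar equation for $z_0$, and then applying Proposition~\ref{prop:IFT} once — is a genuinely different route from the paper, which instead eliminates variables sequentially: it first solves $f_1(m^2,g_0,\gamma,\gamma_0)=0$ for $\gamma_0^{(1)}(m^2,g_0,\gamma)$, then solves $f_2(m^2,\beta,\gamma,g_0)=0$ for $g_0^*(m^2,\beta,\gamma)$, each step being a separate application of Proposition~\ref{prop:IFT}. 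Your one-step elimination is cleaner and avoids the computation of $\partial\gamma_0^{(1)}/\partial g_0$ that the paper goes through, and your observation that the substitution $\gamma\mapsto\gamma_0$ preserves the sign of $\gamma_0$ (so the $C^{0,1,\pm}$ regularity transports correctly) is the right idea. The recovery of $(g_0^*,\gamma_0^*,\nu_0^*,\nu^*)$ once $z_0^*$ is found, and the bootstrap for \eqref{e:gznustarbd}, are also in order.

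However, there is a gap in the verification of the hypotheses of Proposition~\ref{prop:IFT}. That proposition requires, for \emph{every} $(\bar w,\bar x)=(\bar m^2,\bar\beta)\in[0,\delta]^2$, the existence of $\bar z_0$ with $\Phi(\bar m^2,\bar\beta,0,\bar z_0)=0$ and $D_{z_0}\Phi$ invertible there. You only verify this at the single point $(\beta,\gamma,z_0)=(0,0,0)$. That point is in fact the one place it is \emph{not} needed (the proof of Proposition~\ref{prop:IFT} sets $R(\bar w,0)=0$ and handles $\bar x=0$ by continuity rather than by a local IFT), and it is also the one place where the derivative computation is not licensed: Theorem~\ref{thm:flow-flow} only asserts that $z_0^c$ is $C^1$ in $g_0\in(0,\delta)$, so $\partial\hat z_0^c/\partial g_0$ is not a priori defined at $g_0=0$. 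What is actually needed is the base solution and invertibility for $\bar\beta>0$ with $\gamma=0$. Existence of $\bar z_0$ there is the $\gamma=0$ change of variables from \cite{BBS-saw4-log} (the paper cites $g_0^*(m^2,\beta,0)$ from that reference precisely for this purpose), and at such a point
\begin{equation}
\frac{\partial\Phi}{\partial z_0}\Big|_{\gamma=0}
= 1-2\bar\beta\,(1+\bar z_0)\,\frac{\partial\hat z_0^c}{\partial g_0}\big(\bar m^2,\bar\beta(1+\bar z_0)^2,0\big)
= 1-O(\bar\beta),
\end{equation}
which is invertible for $\delta_*$ small by \eqref{e:z0est}; the $\gamma_0$-derivative term vanishes identically since $\gamma=0$. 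Your proof should be amended to cite the $\gamma=0$ base solution and to perform the derivative check at $\bar\beta>0$ rather than at the origin.
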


\begin{proof}
Suppose we have found the desired continuous functions $(g_0^*, \gamma_0^*)$
and that $g_0^*$ satisfies the first bound in \eqref{e:gznustarbd}.
Then the functions defined by
\begin{align}
&\nu_0^* = \hat\nu_0^c(m^2, g_0^*, \gamma_0^*), \quad
z_0^* = \hat z_0^c(m^2, g_0^*, \gamma_0^*), \quad
\nu^* = \frac{\nu_0^* + m^2}{1 + z_0^*}
\end{align}
are continuous, \eqref{e:gg0-re} is satisfied, and
the remaining bounds in
\eqref{e:gznustarbd}
follow using \eqref{e:hat-est}.

We first solve the third equation of \eqref{e:gg0-re}, and then solve the first equation
of \eqref{e:gg0-re}.
To this end, we begin by defining
\begin{equation}
f_1(m^2, g_0, \gamma, \gamma_0)
  =
\gamma_0 - (4d)^{-1} \gamma (1 + \hat z_0^c(m^2, g_0, \gamma_0))^2
\end{equation}
for $(m^2, g_0, \gamma_0) \in D(\delta, \hat r)$
and $|\gamma| \le \hat r(g_0)$
(recall that $\hat r$ is defined in Proposition~\ref{prop:nuzhat});
although $f_1$ is well-defined
for any $\gamma \in \R$, we restrict the domain in preparation
for our application of Proposition~\ref{prop:IFT}.
Note that $f_1$ is $C^1$ in $\gamma$ and
$f_1(\cdot, \cdot, \gamma, \cdot) \in C^{0,1,\pm}(D(\delta, \hat r))$ for any $\gamma$.
The equation $f_1(m^2, g_0, \gamma, \gamma_0) = 0$
has the solution $\gamma_0 = 0$ when $\gamma = 0$
and, for any $\gamma_0 \neq 0$,
\begin{equation}
\ddp{f_1}{\gamma_0}
  =
1 - (2d)^{-1} \gamma (1 + \hat z_0^c(m^2, g_0, \gamma_0)) \ddp{\hat z_0^c}{\gamma_0}.
\end{equation}
Since the one-sided $\gamma_0$ derivatives of $\hat z_0^c$ exist at $\gamma_0 = 0$,
we can see
that the $\gamma_0$ derivative of $f_1$ is well-defined
and equal to $1$ when $\gamma = 0$ for any small $\gamma_0$ (including $\gamma_0 = 0$).
Thus, by Proposition~\ref{prop:IFT}
(with $w = m^2$, $x = g_0$, $y = \gamma$, $z = \gamma_0$
and $r_1 = r_2 = \hat r$),
there exists a continuous function $\gamma^{(1)}_0(m^2, g_0, \gamma)$
on $D(\delta, r^{(1)})$ (for some continuous positive-definite function $r^{(1)}$ on $[0, \delta]$)
such that $f_1(m^2, g_0, \gamma, \gamma^{(1)}_0) = 0$.
Moreover, $\gamma^{(1)}_0$ is $C^1$ in $(g_0, \gamma)$.

Next, we define
\begin{equation}
f_2(m^2, \beta, \gamma, g_0)
  =
g_0 - (\beta - \gamma) (1 + \hat z_0^c(m^2, g_0, \gamma^{(1)}_0(m^2, g_0, \gamma)))^2
\end{equation}
for $(m^2, g_0, \gamma) \in D(\delta, r^{(1)})$ and $\beta \in [0, \delta_*]$,
where $\delta_* > 0$ will be made sufficiently small below.
Then $f_2(m^2, \beta, \gamma, g_0) = 0$ is solved by
$(\gamma, g_0) = (0, g_0^*(m^2, \beta, 0))$,
where $g_0^*(m^2, \beta, 0)$ was constructed in \cite[\eqref{log-e:ccstar2}]{BBS-saw4-log}.
By \cite[\eqref{log-e:gznustarbd}]{BBS-saw4-log}, $g_0^* = \beta + O(\beta^2)$,
so we may restrict the domain of $f_2$ so that $|g_0| \le 2 \beta$.
Moreover,
\begin{equation}
\ddp{f_2}{g_0}
  =
1 - 2 (\beta - \gamma) (1 + \hat z_0^c(m^2, g_0, \gamma^{(1)}_0))
\left( \ddp{\hat z_0^c}{g_0} + \ddp{\hat z_0^c}{\gamma_0} \ddp{\gamma^{(1)}_0}{g_0} \right).
\end{equation}
Differentiating both sides of
\begin{equation}
\gamma^{(1)}_0
  =
\frac{1}{4d} \gamma (1 + \hat z_0^c(m^2, g_0, \gamma^{(1)}_0))^2,
\end{equation}
and solving for $\ddp{\gamma^{(1)}_0}{g_0}$, gives
\begin{equation}
\ddp{\gamma^{(1)}_0}{g_0}
  =
\frac{\gamma (1 + \hat z_0^c)
  \ddp{\hat z_0^c}{g_0}}{2 d - \gamma (1 + \hat z_0^c) \ddp{\hat z_0^c}{\gamma_0}},
\end{equation}
where $\hat z_0^c$ and its derivatives are evaluated at $(m^2, g_0, \gamma^{(1)}_0)$.
Thus, $\ddp{\gamma^{(1)}_0}{g_0} = 0$ when $\gamma = 0$.
It follows that $\partial f_2/\partial g_0$
is well-defined when $(\gamma, g_0) = (0, g_0^*(m^2, \beta, 0))$ and equals
\begin{equation}
1 - 2 \beta (1 + \hat z_0^c(m^2, g_0^*, 0)) \ddp{\hat z_0^c}{g_0}(m^2, \beta, 0, g_0^*),
\end{equation}
which is positive when $\delta_*$ is small, by \eqref{e:hat-est}.
Thus, by Proposition~\ref{prop:IFT}
(with $w = m^2$, $x = \beta$, $y = \gamma$, $z = g_0$ and $r_1 = r^{(1)}$, $r_2(\beta) = 2\beta$),
there exists a function $g_0^*(m^2, \beta, \gamma) \in C^{0,1,\pm}(D(\delta_*, r^{(2)}))$
(for some continuous positive-definite function $r^{(2)}$ on $[0, \delta_*]$)
such that $f_2(m^2, \beta, \gamma, g_0^*) = 0$.

By the fact that $g_0^*$ solves $f_2 = 0$,
\begin{equation}
g_0^* = (\beta - \gamma) + O((\beta - \gamma)^2).
\end{equation}
Since $|\gamma| \le r^{(2)}(g_0)$ and $r^{(2)}(g_0)$ can be taken
as small as desired, this implies the first estimate in \eqref{e:gznustarbd}.
Thus, by taking $r_*$ sufficiently small, if $|\gamma| \le r_*(\beta)$, then
$|\gamma| \le r^{(2)}(g_0^*(m^2, \beta, \gamma))$.
Thus, for $\beta < \delta_*$ and $|\gamma| \leq r_*(\beta)$,
we can define
\begin{equation}
\gamma_0^*(m^2, \beta, \gamma) = \gamma^{(1)}_0(m^2, g_0^*(m^2, \beta, \gamma), \gamma),
\end{equation}
which completes the proof.
\end{proof}

Using Proposition~\ref{prop:changevariables1}, it is possible to
identify the critical point $\nu_c$, as follows.
By \eqref{e:chi-m-hat}, \eqref{e:chichihat}, Proposition~\ref{prop:finvol}, and Proposition~\ref{prop:changevariables1},
\begin{equation}
\label{e:chistar}
\chi(\beta, \gamma, \nu^*) = \frac{1 + z_0^*}{m^2} = \frac{1 + O(\beta)}{m^2}.
\end{equation}
Thus, with $\nu = \nu^*$, we see that $\chi < \infty$ when $m^2 > 0$, and
$\chi = \infty$ when $m^2 = 0$.
By \eqref{e:nuc-def}, this implies that
\begin{equation}
\label{e:nustarbd}
\nu_c(\beta, \gamma) = \nu^*(0, \beta, \gamma) = O(\beta),
  \quad
\nu_c(\beta, \gamma) < \nu^*(m^2, \beta, \gamma)
  \quad
(m^2 > 0).
\end{equation}
It follows that
\begin{equation}
\chi(\beta, \gamma, \nu_c) = \infty,
\end{equation}
which is a fact that cannot be concluded immediately from the definition \refeq{nuc-def}.

In \eqref{e:chistar}, $\chi$ is evaluated at $\nu^* = \nu^*(m^2, \beta, \gamma)$.
However, in the setting of Theorem~\ref{thm:suscept},
we need to evaluate $\chi$ at a \emph{given} value of $\nu$
and then take $\nu \downarrow \nu_c$.
To do so, we must determine a choice of $m^2$ in terms of $\nu$
such that \eqref{e:gg0-re} is satisfied and this choice
must approach $0$ (as it should by \eqref{e:nustarbd})
right-continuously as $\nu\downarrow\nu_c$.
The following proposition carries out this construction.
In the following, the functions $\tilde m^2, \tilde g_0$ should not be
confused with the parameter $\mgen^2, \ggen_0$ that appeared previously
in the $\Wcal_j$ norms.

\begin{prop}
\label{prop:changevariables2}
Write $\nu = \nu_c + \varepsilon$.
There exist functions $\tilde m^2, \tilde g_0, \tilde\gamma_0, \tilde\nu_0, \tilde z_0$
of $(\varepsilon, \beta, \gamma) \in D(\delta_*, r_*)$
(all right-continuous as $\varepsilon\downarrow 0$)
such that \eqref{e:gg0-re} and \eqref{e:crit-constraint} hold with
\begin{equation}
(m^2, g_0, \gamma_0, \nu_0, z_0) = (\tilde m^2, \tilde g_0, \tilde\gamma_0, \tilde\nu_0, \tilde z_0).
\end{equation}
Moreover,
\begin{gather}
\label{e:mtildebd}
\tilde m^2(0, \beta, \gamma) = 0,
    \qquad
\tilde m^2(\varepsilon, \beta, \gamma) > 0
    \quad
(\varepsilon > 0). \\
\label{e:gznutildebd}
\tilde g_0 = \beta + O(\beta^2),
    \quad
\tilde \nu_0 = O(\beta),
    \quad
\tilde z_0 = O(\beta).
\end{gather}
\end{prop}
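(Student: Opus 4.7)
The plan is to invert the parametrisation of Proposition~\ref{prop:changevariables1}, which expresses $(\nu, g_0, \gamma_0, \nu_0, z_0)$ as continuous functions of $(m^2, \beta, \gamma)$ so that \eqref{e:gg0-re} and \eqref{e:crit-constraint} are automatically enforced. Given $(\varepsilon, \beta, \gamma) \in D(\delta_*, r_*)$ with $\varepsilon > 0$, I would define $\tilde m^2(\varepsilon, \beta, \gamma)$ to be the unique $m^2 \in (0, \delta_*]$ satisfying $\nu^*(m^2, \beta, \gamma) = \nu_c(\beta, \gamma) + \varepsilon$, extend by $\tilde m^2(0, \beta, \gamma) = 0$, and then set
\[
\tilde g_0 = g_0^*(\tilde m^2, \beta, \gamma), \quad \tilde\gamma_0 = \gamma_0^*(\tilde m^2, \beta, \gamma), \quad \tilde\nu_0 = \nu_0^*(\tilde m^2, \beta, \gamma), \quad \tilde z_0 = z_0^*(\tilde m^2, \beta, \gamma).
\]
With these definitions, the relations \eqref{e:gg0-re} and \eqref{e:crit-constraint} follow at once from Proposition~\ref{prop:changevariables1} applied at $\nu = \nu_c + \varepsilon = \nu^*(\tilde m^2, \beta, \gamma)$, and the bounds \eqref{e:gznutildebd} are inherited by composition from \eqref{e:gznustarbd}, since the latter are uniform in $m^2$.

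The heart of the proof is the existence, uniqueness, and right-continuity of $\tilde m^2$. For these I would rely on identity \eqref{e:chistar}, which for $m^2 > 0$ asserts
\[
\chi(\beta, \gamma, \nu^*(m^2, \beta, \gamma)) = \frac{1 + z_0^*(m^2, \beta, \gamma)}{m^2}.
\]
The map $\nu \mapsto \chi(\beta, \gamma, \nu)$ is continuous and strictly decreasing on $(\nu_c, \infty)$ (by differentiating the integral representation \eqref{e:suscept-def}), with $\chi(\beta, \gamma, \nu_c + \varepsilon) \uparrow \infty$ as $\varepsilon \downarrow 0$. Since $z_0^* = O(\beta)$ uniformly in $m^2$ by \eqref{e:gznustarbd}, for $\beta$ small the right-hand side $(1 + z_0^*(m^2, \beta, \gamma))/m^2$ is a continuous strictly decreasing function of $m^2 \in (0, \delta_*]$, ranging from a finite positive value at $m^2 = \delta_*$ to $+\infty$ as $m^2 \downarrow 0$. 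Equating the two sides at $\nu^* = \nu_c + \varepsilon$ therefore yields, for each small $\varepsilon > 0$, a unique $\tilde m^2(\varepsilon, \beta, \gamma) > 0$ depending monotonically and continuously on $\varepsilon$, and forces $\tilde m^2 \downarrow 0$ as $\varepsilon \downarrow 0$. This establishes \eqref{e:mtildebd} together with right-continuity of $\tilde m^2$ at $\varepsilon = 0$, from which right-continuity of $\tilde g_0, \tilde\gamma_0, \tilde\nu_0, \tilde z_0$ follows by composition with the continuous starred functions.

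The main technical obstacle is verifying that $m^2 \mapsto (1 + z_0^*(m^2, \beta, \gamma))/m^2$ is strictly decreasing: the uniform bound $z_0^* = O(\beta)$ alone does not imply monotonicity, and one needs mild regularity of $z_0^*$ in $m^2$ (e.g.\ H\"older continuity with a sufficiently small constant) so that the $1/m^2$ contribution dominates any variation of $z_0^*$. This regularity is already present in the $\gamma = 0$ case treated in \cite[Proposition~\ref{log-prop:changevariables}]{BBS-saw4-log}, and since $z_0^*$ here is obtained from the same renormalisation group construction via Theorem~\ref{thm:flow-flow} and Proposition~\ref{prop:changevariables1}, the argument transfers essentially unchanged; the dependence on $\gamma$ enters only through the initial coordinate $K_0^\pm$ and is controlled by Proposition~\ref{prop:nuzhat}, so it does not disrupt the monotonicity in $m^2$.
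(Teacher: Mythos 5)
Your proposal takes a genuinely different and more demanding route than the paper, and it contains a gap that you yourself flag but do not close.

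The paper's proof sidesteps the entire uniqueness question by defining $\tilde m^2(\varepsilon,\beta,\gamma)$ as the \emph{infimum} of $\{m^2>0 : \nu^*(m^2,\beta,\gamma)=\nu_c(\beta,\gamma)+\varepsilon\}$. Continuity of $\nu^*$ and the properties \eqref{e:nustarbd} (namely $\nu^*(0,\beta,\gamma)=\nu_c$ and $\nu^*(m^2,\beta,\gamma)>\nu_c$ for $m^2>0$) immediately give that the infimum is attained, that $\tilde m^2(\varepsilon)>0$ for $\varepsilon>0$, and that $\tilde m^2(\varepsilon)\downarrow 0$ as $\varepsilon\downarrow 0$. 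No monotonicity or injectivity in $m^2$ is ever needed, and the statement only asks for right-continuity as $\varepsilon\downarrow 0$, not full continuity in $\varepsilon$.

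Your approach instead attempts to prove that $\tilde m^2$ is the \emph{unique} solution, by combining strict monotonicity of $\nu\mapsto\chi(\beta,\gamma,\nu)$ with strict monotonicity of $m^2\mapsto(1+z_0^*(m^2,\beta,\gamma))/m^2$. The first monotonicity is fine (differentiate \eqref{e:suscept-def}), but the second is the genuine gap: the uniform bound $z_0^*=O(\beta)$ from \eqref{e:gznustarbd} does not give any information about how $z_0^*$ varies with $m^2$, and the paper establishes only \emph{continuity} of $z_0^*$ in $m^2$, not a H\"older or Lipschitz modulus. The regularity you invoke (``mild regularity\ldots already present in the $\gamma=0$ case'') is not something the cited results provide; \cite[Proposition~\ref{log-prop:changevariables}]{BBS-saw4-log} is proved precisely by the infimum device, not by a monotonicity-in-$m^2$ argument. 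So your claim that the argument ``transfers essentially unchanged'' is not supported by what the reference actually does. To make your route rigorous you would need to go back into the renormalisation group construction and extract a quantitative $m^2$-modulus for $z_0^*$, which is substantially more work than the proposition requires and more than the paper supplies. Everything else in your proposal (defining the other tilde-functions by composition, inheriting \eqref{e:gznutildebd} from \eqref{e:gznustarbd}, right-continuity by composition) matches the paper and is correct once $\tilde m^2$ is in hand; the fix is simply to replace your uniqueness/monotonicity step with the infimum definition.
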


\begin{proof}
The proof is a minor modification of the proof in \cite{BBS-saw4-log},
using Proposition~\ref{prop:changevariables1}.
Define
\begin{equation}
\label{e:mtildef}
\tilde m^2
    =
\tilde m^2 (\varepsilon,\beta,\gamma)
    =
\inf \{m^2 > 0 : \nu^*(m^2, \beta, \gamma) = \nu_c(\beta, \gamma) + \varepsilon \},
\end{equation}
on $D(\delta_*, r_*)$.
By continuity of $\nu^*$,
the infimum is attained and
\begin{equation}
\nu_c(\beta, \gamma) + \varepsilon = \nu^*(\tilde m^2(\varepsilon, \beta, \gamma), \beta, \gamma).
\end{equation}
From the above expression, continuity of $\nu^*$, and \eqref{e:nustarbd},
it follows that $\tilde m^2$ is right-continuous as $\varepsilon\downarrow 0$.
It is immediate that \eqref{e:mtildebd} holds.
Also, the functions of $(\varepsilon,\beta,\gamma)$ defined by
\begin{align}
&\tilde\nu_0 = \nu_0^*(\tilde m^2, \beta, \gamma), \quad
\tilde z_0 = z_0^*(\tilde m^2, \beta, \gamma), \\
&\tilde g_0 = (\beta - \gamma) (1 + \tilde z_0)^2 ,\quad
\tilde\gamma_0 = \frac{1}{4d} \gamma (1 + \tilde z_0)^2
\end{align}
are right-continuous as $\varepsilon \downarrow 0$ and satisfy \eqref{e:gg0-re}.
The bounds \eqref{e:gznutildebd} follow from the definitions
and \eqref{e:gznustarbd}, and the proof is complete.
\end{proof}

\subsection{Conclusion of the argument}
\label{sec:conclusion}

By \eqref{e:chi-m-hat}, \eqref{e:chichihat}, Proposition~\ref{prop:finvol},
and Proposition~\ref{prop:changevariables2}
\begin{equation}
\chi(\beta, \gamma, \nu)
  =
\frac{1 + \tilde z_0}{\tilde m^2}.
\end{equation}
Using this, \eqref{e:chi-m-hat}, and \eqref{e:chiprime-m-hat},
by exactly the same argument as in \cite[Section~\ref{log-sec:pfsuscept}]{BBS-saw4-log},
there is
a differential relation between $\ddp{\chi}{\nu}$ and $\chi$,
whose solution yields Theorem~\ref{thm:suscept}(ii).

The reason the susceptibility is handled first is that its leading-order
critical behaviour can be computed from the second-order flow of the \emph{bulk} coupling
constants $(g_j, \nu_j, z_j)$. In contrast, in order to study the two-point
function, we begin by writing
\begin{equation}
\label{e:phi-sigma}
\bar\phi_a \phi_b
    =
\ddp{^2}{\sigma_a\partial\sigma_b} e^{\sigma_a\bar\phi_a + \sigma_b\phi_b}
\Big|_{\sigma_a=\sigma_b=0}
\end{equation}
in \eqref{e:GG2}. The incorporation of the exponential function
$e^{\sigma_a\bar\phi_a + \sigma_b\phi_b}$
into $Z_0$
is equivalent to subtracting
\begin{equation}
\sigma_a \bar\phi_a \1_{x=a} + \sigma_b \bar\phi_b \1_{x=b}
\end{equation}
from $V^\pm_0$. The renormalisation group map now acts on a polynomial of the form
\begin{equation}
g_j \tau^2 + \nu_j \tau + z_j \tau_\Delta
    - \lambda_{a,j} \sigma_a \bar\phi_a \1_{x=a}
    - \lambda_{b,j} \sigma_b \phi_b \1_{x=b}
    - \frac{1}{2} \sigma_a \sigma_b (q_{a,j} \1_{x=a} + q_{b,j} \1_{x=b}).
\end{equation}
We have only included terms up to second order in $(\sigma_a, \sigma_b)$ because,
by \eqref{e:phi-sigma}, only these are needed to study the two-point function.
The coefficients $(\lambda_{a,j}, \lambda_{b,j}, q_{a,j}, q_{b,j})$
are referred to as \emph{observable} coupling constants and the behaviour of these
coupling constants under the action of the renormalisation group is studied in detail
in \cite{BBS-saw4,ST-phi4}.

It was shown in \cite{BBS-saw4} that the observable flow does not affect the bulk flow.
Moreover, the second-order evolution of the observable
flow remains identical to that of the case $\gamma_0 = 0$.
This occurs for the same reason that the bulk flow is unaffected to second
order by $\gamma_0$ (as in the statement of Corollary~\ref{cor:rhatflow}):
namely, the second-order contributions to the observable flow are produced
by an extension of the map $\Vpt$ (recall \eqref{e:Vflow}), whose definition
does not depend on $\gamma_0$.
Thus, the analysis of the observable flow when $\gamma_0$
is small can proceed in the same way as when $\gamma_0 = 0$.
That is, the same analysis that was carried out in \cite{BBS-saw4}
to study the two-point function applies directly
here to prove Theorem~\ref{thm:suscept}(i).

The analysis
of the correlation length of order $p$
in \cite{BSTW-clp} also applies directly here, and for the same reason: the second-order
flow of coupling constants is independent of $\gamma_0$.  This gives Theorem~\ref{thm:suscept}(iii).

\subsection{A version of the implicit function theorem}
\label{sec:IFT}

We make use of \cite[Chapter 4, Theorem~9.3]{LS14},
which is a version of the implicit function theorem that allows
for a continuous, rather than differentiable, parameter.
While the precise statement of \cite[Chapter 4, Theorem~9.3]{LS14}
takes this parameter from an open subset of a Banach space,
by \cite[Chapter 4, Theorem~9.2]{LS14}, the
parameter can in fact be taken from an arbitrary metric space.
With this minor change, we restate \cite[Chapter 4, Theorem~9.3]{LS14}
as the following proposition.

\begin{prop}
\label{prop:LS-IFT}
Let $A$ be a metric space, let $W,X$ be Banach spaces,
and let $B \subset W$ be an open subset.
Let $F : A \times B \to X$ be continuous,
and suppose that $F$ is $C^1$ in its second argument.
Let $(\alpha, \beta) \in A \times B$ be a point such that
$F(\alpha, \beta) = 0$ and $D_2 F(\alpha, \beta)^{-1}$ exists.
Then there are open balls $M \ni \alpha$ and $N \ni \beta$
and a unique continuous mapping $f : M \to N$ such that $F(\xi, f(\xi)) = 0$
for all $\xi \in M$.
\end{prop}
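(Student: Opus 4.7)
The plan is to follow the classical proof of the implicit function theorem via a parameterized contraction mapping argument, observing that only continuity (and not any linear or differentiable structure) of the parameter space $A$ is used. Setting $L = D_2F(\alpha,\beta) \in L(W,X)$, which is invertible by hypothesis, I would introduce the auxiliary map
\begin{equation}
T(\xi, w) = w - L^{-1} F(\xi, w),
\end{equation}
defined on $A \times B$ with values in $W$. Fixed points of $T(\xi, \cdot)$ are precisely the solutions $w \in B$ of $F(\xi, w) = 0$, so the implicit equation reduces to a fixed point problem in the $w$ variable with $\xi$ as a continuous parameter.

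The next step is to choose open balls $M \ni \alpha$ in $A$ and $N = B_r(\beta) \subset B$ so that $T(\xi, \cdot)$ contracts $\overline{N}$ into itself uniformly in $\xi \in M$. Since $F$ is $C^1$ in its second argument, $D_2 F$ is continuous on $A \times B$, and
\begin{equation}
D_2 T(\xi, w) = I - L^{-1} D_2 F(\xi, w) \longrightarrow 0
\quad \text{as } (\xi, w) \to (\alpha, \beta).
\end{equation}
Shrinking $M$ and $r$, I may arrange $\|D_2 T(\xi, w)\| \le \tfrac{1}{2}$ throughout $M \times \overline{N}$, and the mean value inequality then yields
\begin{equation}
\|T(\xi, w_1) - T(\xi, w_2)\| \le \tfrac{1}{2} \|w_1 - w_2\|
\end{equation}
uniformly in $\xi \in M$. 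To secure the self-mapping property, I would use $T(\alpha, \beta) = \beta$ together with continuity of $\xi \mapsto T(\xi, \beta)$ (inherited from continuity of $F$ in $\xi$) to further shrink $M$ so that $\|T(\xi, \beta) - \beta\| \le r/2$; combined with the contraction bound, this gives $\|T(\xi, w) - \beta\| \le r$ for all $(\xi, w) \in M \times \overline{N}$.

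Finally, I would invoke the parameterized Banach fixed point theorem to produce a unique map $f : M \to \overline{N}$ with $T(\xi, f(\xi)) = f(\xi)$, which is equivalent to $F(\xi, f(\xi)) = 0$. Continuity of $f$ follows from the standard estimate
\begin{equation}
\|f(\xi) - f(\xi')\| \le \|T(\xi, f(\xi')) - T(\xi', f(\xi'))\| + \tfrac{1}{2}\|f(\xi) - f(\xi')\|,
\end{equation}
combined with continuity of $T$ in its first argument. The only subtlety — and the reason for restating the theorem here — is the observation that the parameter-dependence step of this argument uses merely the metric structure of $A$ and continuity of $\xi \mapsto T(\xi, \cdot)$, with no appeal to openness in a Banach space; this is precisely the extension recorded in \cite[Chapter 4, Theorem~9.2]{LS14}. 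No single step presents a genuine obstacle, since the proof is essentially a verbatim transcription of the classical argument with this minor structural relaxation of the parameter space.
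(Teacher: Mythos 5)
The paper gives no proof of this proposition: the statement is quoted verbatim (with a minor relaxation) from Loomis--Sternberg, with the paper noting that \cite[Chapter 4, Theorem~9.3]{LS14} combined with \cite[Chapter 4, Theorem~9.2]{LS14} yields the version with the parameter taken from an arbitrary metric space. Your contraction-mapping argument supplies the missing proof, and it is essentially the argument in the cited reference. You correctly isolate the structural point that justifies the generalization: the parameter $\xi$ enters only through continuity of $\xi \mapsto T(\xi,\cdot)$, so no linear or Banach-space structure on $A$ is used anywhere, and the standard Newton-type iteration map $T(\xi,w) = w - L^{-1}F(\xi,w)$ together with the parameterized Banach fixed-point theorem does the work.

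Two small points deserve explicit mention. First, the step $D_2 T(\xi,w) \to 0$ as $(\xi,w)\to(\alpha,\beta)$ requires $D_2 F$ to be jointly continuous on $A\times B$, not merely continuous in $w$ for each fixed $\xi$; the hypothesis ``$F$ is $C^1$ in its second argument'' must be read in this stronger sense (which is the standard reading and is what the reference assumes), and without it the uniform bound $\|D_2T\|\le\tfrac12$ on $M\times\overline{N}$ would not follow. Second, as written your self-mapping estimate $\|T(\xi,w)-\beta\|\le r$ only places the fixed point $f(\xi)$ in the closed ball $\overline{N}$, whereas the statement asks for $f:M\to N$ with $N$ open; this is repaired by shrinking $M$ so that $\|T(\xi,\beta)-\beta\| < r/2$ strictly, whence
\begin{equation}
\|f(\xi)-\beta\| \le \tfrac12\|f(\xi)-\beta\| + \|T(\xi,\beta)-\beta\|
\quad\Longrightarrow\quad
\|f(\xi)-\beta\| \le 2\|T(\xi,\beta)-\beta\| < r.
\end{equation}
With these two clarifications the proof is complete and matches the intent of the cited result.
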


We also use the following lemma, which is a small modification of
\cite[Chapter 3, Theorem~11.1]{LS14}. In particular, it considers
functions that may only be left- or right-differentiable.

\begin{lemma}
\label{lem:IFT-C1}
Let $F$ be a mapping as in the previous proposition
with $A \subset \R^{m_1} \times \R^{m_2}$.
In addition, suppose that $F$ is left-differentiable (respectively, right-differentiable)
in $\alpha_2$ at $(\alpha, \beta)$, with $\alpha = (\alpha_1, \alpha_2)$.
If $f$ is a continuous mapping defined in a neighbourhood of
$\alpha$, such that $F(\xi, f(\xi)) = 0$,
then $f$ is left-differentiable (respectively, right-differentiable) in $\alpha_2$ at $\alpha$.
\end{lemma}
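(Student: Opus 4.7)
The plan is to mimic the standard derivation of the derivative in the implicit function theorem, but using only the one-sided hypothesis on $F$ in $\alpha_2$. I will treat the left-differentiable case; the right-differentiable case is identical with obvious sign changes.

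First I would fix $\alpha_1$ and consider small $h<0$, writing $\xi_h=(\alpha_1,\alpha_2+h)$. The identity $F(\xi_h,f(\xi_h))=0=F(\alpha,f(\alpha))$ can be split as
\begin{equation}
\bigl[F(\xi_h,f(\xi_h))-F(\xi_h,f(\alpha))\bigr]+\bigl[F(\xi_h,f(\alpha))-F(\alpha,f(\alpha))\bigr]=0.
\end{equation}
Since $F$ is $C^1$ in the second argument, the first bracket equals
\begin{equation}
\Bigl(\int_0^1 D_2F\bigl(\xi_h,f(\alpha)+t(f(\xi_h)-f(\alpha))\bigr)\,dt\Bigr)(f(\xi_h)-f(\alpha)).
\end{equation}
By continuity of $D_2F$ and of $f$ at $\alpha$, this integrated derivative converges to $D_2F(\alpha,f(\alpha))$ as $h\to 0^-$, hence is invertible for $|h|$ small, with inverse tending to $D_2F(\alpha,f(\alpha))^{-1}$.

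Next I would handle the second bracket using the assumed left-differentiability of $F$ in $\alpha_2$ at $(\alpha,f(\alpha))$. Let $L=\partial_{\alpha_2}^{-}F(\alpha,f(\alpha))$. By definition, $F(\xi_h,f(\alpha))-F(\alpha,f(\alpha))=hL+o(h)$ as $h\to 0^-$. Combining the two brackets and solving yields
\begin{equation}
f(\xi_h)-f(\alpha)=-\Bigl(\int_0^1 D_2F(\cdots)\,dt\Bigr)^{-1}\bigl(hL+o(h)\bigr).
\end{equation}
Dividing by $h$ and passing to the limit $h\to 0^-$ gives the left derivative
\begin{equation}
\partial_{\alpha_2}^{-}f(\alpha)=-D_2F(\alpha,f(\alpha))^{-1}L,
\end{equation}
which is exactly the expected one-sided version of the implicit function formula.

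The only delicate point is that the argument tacitly requires $f(\xi_h)-f(\alpha)=O(h)$ to ensure that the integrated $D_2F$ stays close to $D_2F(\alpha,f(\alpha))$ uniformly in the relevant range, and that the $o(h)$ term above remains $o(h)$ after multiplication by a bounded operator. To establish this, I would first use invertibility of $D_2F(\alpha,f(\alpha))$ and continuity of $F$ and $f$ to obtain a crude bound $\|f(\xi_h)-f(\alpha)\|\le C(|h|+\|f(\xi_h)-f(\alpha)\|\cdot\varepsilon(h))$ from the same decomposition, where $\varepsilon(h)\to 0$; absorbing the second term on the left gives $\|f(\xi_h)-f(\alpha)\|=O(|h|)$, after which the computation above goes through cleanly. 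This bootstrap step is the main (and only) technical obstacle; everything else is a direct transcription of the classical proof with the two-sided derivative of $F$ in $\alpha_2$ replaced by its one-sided analogue.
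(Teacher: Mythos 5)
Your proof is correct and follows essentially the same route the paper intends: the lemma is stated as ``a small modification of [LS14, Chapter~3, Theorem~11.1],'' i.e.\ of the standard argument showing differentiability of the implicitly defined function, and your decomposition
\begin{equation}
\bigl[F(\xi_h,f(\xi_h))-F(\xi_h,f(\alpha))\bigr]+\bigl[F(\xi_h,f(\alpha))-F(\alpha,f(\alpha))\bigr]=0
\end{equation}
is exactly that argument with the two-sided increment in $\alpha_2$ replaced by its one-sided analogue. One remark: the bootstrap you flag as ``the main technical obstacle'' is in fact unnecessary. Writing $A_h=\int_0^1 D_2F(\xi_h,f(\alpha)+t(f(\xi_h)-f(\alpha)))\,dt$, you already have $A_h\to D_2F(\alpha,f(\alpha))$ as $h\to 0^-$ purely from continuity of $D_2F$ together with continuity of $f$ at $\alpha$ (which gives $f(\xi_h)\to f(\alpha)$); no rate on $f(\xi_h)-f(\alpha)$ is needed. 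Consequently $A_h^{-1}$ exists for small $|h|$ and converges to $D_2F(\alpha,f(\alpha))^{-1}$, so from $f(\xi_h)-f(\alpha)=A_h^{-1}\bigl(-hL-o(h)\bigr)$ one may divide by $h$ and pass to the limit directly, with no a priori Lipschitz estimate on $f$. Your bootstrap is harmless and does establish $f(\xi_h)-f(\alpha)=O(|h|)$ as a byproduct, but it does not address a genuine gap.
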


The above results lead to the following proposition, which we apply
in the proofs of Propositions~\ref{prop:nuzhat} and \ref{prop:changevariables1}.
Recall that $D(\delta, r)$ is defined in \refeq{Ddef}.

\begin{prop}
\label{prop:IFT}
Let $\delta > 0$, and let $r_1, r_2$ be continuous positive-definite functions on $[0, \delta]$.
Set
\begin{equation}
    D(\delta, r_1, r_2)
    =
    \{ (w, x, y, z) \in D(\delta, r_1) \times \R^n : |z| \leq r_2(x) \},
\end{equation}
and let $F$ be a continuous function on $D(\delta, r_1, r_2)$ that is $C^1$ in $(x, z)$.
Suppose that for all $(\bar w, \bar x) \in [0, \delta]^2$ there exists $\bar z$
such that both $F(\bar w, \bar x, 0, \bar z) = 0$
and $D_Y F(\bar w, \bar x, 0, \bar z)$ is invertible.
Then there is a continuous positive-definite function $r$ on $[0, \delta]$ and
a continuous map $f : D(\delta, r) \to \R^n$
that is $C^1$ in $x$
and such that $F(w, x, y, f(w, x, y)) = 0$
for all $(w, x, y) \in D(\delta, r)$.
Moreover, if $F$ is left-differentiable
(respectively, right-differentiable) in $y$ at some point $(w, x, y, z)$,
then $f$ is left-differentiable (respectively, right-differentiable) at $(w, x, y)$.
\end{prop}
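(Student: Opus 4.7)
The plan is to apply the continuous-parameter implicit function theorem (Proposition~\ref{prop:LS-IFT}) at each point of the compact slice $[0,\delta]^2 \times \{0\}$ and then stitch the local solutions together by uniqueness and compactness. For each $p = (\bar w, \bar x) \in [0,\delta]^2$, the hypothesis supplies a base point $\bar z_p$ satisfying $F(\bar w, \bar x, 0, \bar z_p) = 0$ with $D_z F(\bar w, \bar x, 0, \bar z_p)$ invertible. Viewing $(w, x, y)$ as the parameter in the metric space $D(\delta, r_1, r_2)$ and $z$ as the unknown, Proposition~\ref{prop:LS-IFT} furnishes an open neighbourhood $M_p$ of $(\bar w, \bar x, 0)$ and a unique continuous $f_p \colon M_p \to \R^n$ with $F(w, x, y, f_p(w, x, y)) = 0$ and $f_p(\bar w, \bar x, 0) = \bar z_p$.

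The next step is to glue the $f_p$ into a single continuous $f$. Local uniqueness implies that whenever $q \in M_p \cap M_{p'}$ satisfies $f_p(q) = f_{p'}(q)$, the two functions coincide on a neighbourhood of $q$; since this set of agreement is also closed in $M_p \cap M_{p'}$ by continuity, it exhausts the connected component of $q$. By shrinking the radii supplied by Proposition~\ref{prop:LS-IFT} one may arrange that each $M_p$ is a product ball and that for nearby basepoints the values $\bar z_p, \bar z_{p'}$ lie in a common range-ball, so that on overlaps both $f_p$ and $f_{p'}$ are forced to equal the unique local solution. The $f_p$ then paste to a continuous function $f$ on the open set $\Omega = \bigcup_p M_p$, which contains the compact slice $[0,\delta]^2 \times \{0\}$.

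Compactness together with openness of $\Omega$ then yields, via a Lebesgue-number argument, a constant $\epsilon > 0$ such that every $(w, x, y) \in D(\delta, r_1)$ with $|y| < \epsilon$ lies in $\Omega$. Any continuous positive-definite $r$ on $[0,\delta]$ bounded above by $\min(r_1, \epsilon)$ then gives $D(\delta, r) \subset \Omega$, and $r(0)=0$ is automatic since $r_1(0)=0$. Restricting the pasted $f$ to $D(\delta, r)$ gives the desired continuous function with $F(w, x, y, f(w, x, y)) = 0$.

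For regularity, continuity of $D_z F$ together with invertibility at the basepoints implies, after possibly shrinking $r$, that $D_z F$ is invertible all along the graph of $f$; since $F$ is $C^1$ in $(x, z)$, the classical smooth implicit function theorem (applied at each $(w, y)$ fixed) gives $f \in C^1$ in $x$, with $\partial_x f = -(D_z F)^{-1} \partial_x F$. The one-sided differentiability in $y$ is then an immediate consequence of Lemma~\ref{lem:IFT-C1}, applied with $\alpha_1 = (w, x)$ and $\alpha_2 = y$. The only genuinely delicate point in the argument is the patching step, where one must choose the local $M_p$ small enough that overlap-agreement propagates; once a coherent global $f$ is in hand, the remaining claims follow from standard implicit function theorem reasoning and the cited lemma.
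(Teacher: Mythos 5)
Your proposal has the right broad shape (local continuous-parameter IFT at basepoints on the $y=0$ slice, then glue and extract a domain $D(\delta,r)$), but the Lebesgue-number step contains a genuine gap.

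The problem is the claim that Proposition~\ref{prop:LS-IFT} can be applied at \emph{every} $p = (\bar w, \bar x) \in [0,\delta]^2$, including $\bar x = 0$. At such a point the domain constraint $|z| \le r_2(0) = 0$ in $D(\delta, r_1, r_2)$ collapses the $z$-domain to the single point $\{0\}$, so there is no open set $B \ni \bar z$ in $\R^n$ on which $F(\cdot,\cdot,\cdot,z)$ is defined for $z \in B$ and parameters near $(\bar w, 0, 0)$; Proposition~\ref{prop:LS-IFT} requires $B$ open, and hence does not apply there. As a result, the set $\Omega = \bigcup_p M_p$ need not contain the entire compact slice $K = [0,\delta]^2 \times \{0\}$. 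But that containment is exactly what the tube/Lebesgue-number lemma requires to produce a uniform $\epsilon > 0$: one needs an open set (in the relative topology of $D(\delta,r_1)$) containing all of $K$. Without it, the uniform $\epsilon$ may not exist, and indeed the balls $M_p$ given by the IFT can shrink as $\bar x \downarrow 0$ precisely because the radius is forced below $r_1(\bar x) \to 0$, so they need not reach the axis $\bar x = 0$.

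The paper handles the degenerate boundary differently, and this is the essential content of the proof rather than a technicality. It applies Proposition~\ref{prop:LS-IFT} only for $\bar x \in (0,\delta]$, defines for each such $(\bar w,\bar x)$ the \emph{maximal} radius $R(\bar w,\bar x)$ of a ball on which the local solution exists and $D_Z F$ stays invertible, proves continuity of $R$ on $[0,\delta] \times (0,\delta]$ directly from maximality, extends $R$ to the boundary by $R(\bar w,0) = 0$ (continuity there comes from the a priori bound $R \le r_1$), and then sets $r(\bar x) = \inf_{\bar w} R(\bar w,\bar x)$. This $r$ is positive for $\bar x > 0$ by continuity and compactness in $\bar w$, vanishes at $\bar x = 0$, and is continuous, with no need for a uniform tube. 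The value $f(w,0,0) = 0$ is then defined separately, with continuity coming from the constraint $|f| \le r_2(x) \to 0$. Your connectedness-based gluing argument and your use of the classical smooth IFT to obtain $C^1$-dependence in $x$ are both fine (the latter is arguably cleaner than the paper's appeal to Lemma~\ref{lem:IFT-C1}); the missing ingredient is the boundary analysis, which cannot be replaced by a uniform-$\epsilon$ compactness argument.
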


\begin{proof}
Take any $(\bar w, \bar x) \in [0, \delta] \times (0, \delta]$
and let $R(\bar w, \bar x)$ be the maximal radius $s$ such that
for all $(w, x, y) \in B(\bar w, \bar x, 0; s)$ there exists $z$
such that both $F(w, x, y, z) = 0$ and $D_Z F(w, x, y, z)$ is
invertible. By continuity of $(D_Z F(w, x, y, z))^{-1}$ near
$(\bar w, \bar x, 0, \bar z)$, and by Proposition~\ref{prop:LS-IFT}
(applied to the restriction of $F$ to $A \times B$, for some
$A \ni (\bar w, \bar x, 0)$ and an open set $B \ni \bar z$),
we have $R(\bar w, \bar x) > 0$ and there is a continuous function
\begin{equation}
f_{\bar w,\bar x} : B(\bar w, \bar x, 0; R(\bar w, \bar x)) \to \R^n
\end{equation}
such that $F(w, x, y, f_{\bar w,\bar x}(w, x, y)) = 0$
for all $(w, x, y) \in B(\bar w, \bar x, 0; R(\bar w, \bar x))$.
Moreover, the unique solution to $F(w, x, y, z) = 0$
is given by $z = f_{\bar w,\bar x}(w, x, y)$ for all such $(w, x, y)$.
By an application of Lemma~\ref{lem:IFT-C1}
(with $\alpha_1 = (w, x), \alpha_2 = y$),
we see that $f_{\bar w, \bar x}$ is
left- or right-differentiable in $y$ wherever $F$ is.
By another application of Lemma~\ref{lem:IFT-C1} (with $\alpha_1 = (w, y), \alpha_2 = x$),
we see that $f_{\bar w, \bar x}$ is $C^1$ in $x$.

Set $R(\bar w, 0) = 0$ for all $\bar w \in [0, \delta]$, and
let
\begin{equation}
D_f = \bigcup_{(\bar w,\bar x)\in [0, \delta]^2} B(\bar w, \bar x, 0; R(\bar w, \bar x)).
\end{equation}
We define $f(w, 0, 0) = 0$ and, for $x > 0$,
\begin{equation}
f(w, x, y) = f_{\bar w,\bar x}(w, x, y)
  \quad\text{for}\quad
(w, x, y) \in B(\bar w, \bar x, 0; R(\bar w, \bar x)).
\end{equation}
By uniqueness, this function is well-defined.
Continuity of $f$ at $(w, 0, 0)$
follows from the fact that $|f(w, x, y)| \le r_2(x)$.
The remaining desired regularity properties of $f$
follow from those of the $f_{\bar w,\bar x}$.
It remains to show that $D(\delta,r) \subset D_f$
for some continuous positive-definite function $r$ on $[0, \delta]$.

First, let us show that $R$ is continuous on $[0, \delta]^2$.
Let $\bar x > 0$ and fix $0 < \epsilon < R(\bar w, \bar x)$.
Then for any $(\bar w', \bar x') \in [0,\delta] \times (0, \delta]$ such that
$|(\bar w, \bar x) - (\bar w', \bar x')| < \epsilon$,
we have $B(\bar w', \bar x', 0; R(\bar w, \bar x) - \epsilon) \subset B(\bar w, \bar x, 0; R(\bar w, \bar x))$
by maximality of $R$.
It follows that $R(\bar w', \bar x') \geq R(\bar w, \bar x) - \epsilon$.
By a similar argument, $R(\bar w', \bar x') \leq R(\bar w, \bar x) + \epsilon$,
so $|R(\bar w, \bar x) - R(\bar w', \bar x')| \leq \epsilon$.
Thus, $R$ is continuous on $[0, \delta] \times (0, \delta]$.
Continuity at $\bar x = 0$ follows from the fact that $R(\bar w, \bar x) \le r_1(\bar x)$
uniformly in $\bar w$.

For $\bar x \in [0,\delta]$, let
\begin{equation}
r(\bar x) = \inf (R(\bar w, \bar x) : \bar w \in [0, \delta]).
\end{equation}
Since $R(\cdot, \bar x)$ is continuous, $r(\bar x) > 0$
for $\bar x > 0$. Moreover, $0 \le r(0) \le r_1(0) = 0$, so $r$ is positive-definite.
Continuity of $r$ follows from joint continuity of $R$.
For any $(w, x, y) \in D(\delta, r)$ (with this choice of $r$),
\begin{equation}
|(w, x, y) - (w, x, 0)| = |y| < r(x) \leq R(w, x),
\end{equation}
so $(w, x, y) \in B(w, x, 0; R(w, x))$.
We conclude that $D(\delta, r) \subset D_f$.
\end{proof}


\section*{Acknowledgements}

The work of RB was supported in part by the Simons Foundation.
The work of GS and BCW was supported in part by NSERC of
Canada.
We thank the referees for useful suggestions.


\end{document}